\definecolor{bientotlafin}{RGB}{142, 162, 198}
\definecolor{green}{RGB}{31,160,85}
\definecolor{vert}{RGB}{0,255,127}
\theoremstyle{plain}
\newtheorem{theorem}{Theorem}
\newtheorem{lemma}[theorem]{Lemma}
\newtheorem{corollary}[theorem]{Corollary}
\newtheorem{proposition}[theorem]{Proposition}
\theoremstyle{definition}
\newtheorem{definition}[theorem]{Definition}
\newtheorem{example}[theorem]{Example}
\newtheorem{remark}[theorem]{Remark} 
\newcommand{\N}{\mathbb N}
\newcommand{\Z}{\mathbb Z}
\newcommand{\rep}{\mathrm{rep}}
\newcommand{\val}{\mathrm{val}}
\newcommand{\card}{\mathrm{Card}}
\newcommand{\A}{\mathcal{A}}
\newcommand{\T}{\mathcal{T}}
\newcommand{\DIV}{\mathrm{DIV}}
\newcommand{\MOD}{\mathrm{MOD}}
\newcommand{\andrm}{\ {\rm and}\ }
\newcommand{\llfloor}{\big\lfloor}
\newcommand{\rrfloor}{\big\rfloor}
\newcommand{\llceil}{\big\lceil}
\newcommand{\rrceil}{\big\rceil}
\title{Minimal automaton for multiplying and translating the Thue-Morse set}
\author{Émilie Charlier, Célia Cisternino, Adeline Massuir}
\begin{document}
\maketitle

\begin{abstract}
The Thue-Morse set $\mathcal{T}$ is the set of those non-negative integers whose binary expansions have an even number of $1$.  The name of this set comes from the fact that its characteristic sequence is given by the famous Thue-Morse word ${\tt abbabaabbaababba\cdots}$, which is the fixed point starting with ${\tt a}$ of the word morphism ${\tt a\mapsto ab,b\mapsto ba}$. The numbers in $\mathcal{T}$ are commonly called the {\em evil numbers}. We obtain an exact formula for the state complexity of the set $m\T+r$ (i.e.\ the number of states of its minimal automaton) with respect to any base $b$ which is a power of $2$. Our proof is constructive and we are able to explicitly provide the minimal automaton of the language of all $2^p$-expansions of the set of integers $m\T+r$ for any positive integers $p$ and $m$ and any remainder $r\in\{0,\ldots,m-1\}$. The proposed method is general for any $b$-recognizable set of integers. As an application, we obtain a decision procedure running in quadratic time for the problem of deciding whether a given $2^p$-recognizable set is equal to a set of the form $m\T+r$.
\end{abstract}

\section{Introduction}

A subset $X$ of $\N$ is said to be {\em $b$-recognizable} if the base-$b$ expansions of the elements of $X$ form a regular language. The famous theorem of Cobham tells us that any non-trivial property of numbers are dependent on the base we choose: the only sets that are $b$-recognizable for all bases $b$ are the finite unions of arithmetic progressions \cite{Cobham:1969}. Inspired by this seminal result, many descriptions of $b$-recognizable sets were given, e.g.\ morphic, algebraic and logical characterizations \cite{Boigelot&Rassart&Wolper:1998,Bruyere&Hansel&Michaux&Villemaire:1994,Cobham:1972}, extensions of these to systems based on a Pisot number \cite{Bruyere&Hansel:1997}, the normalization map \cite{Frougny:1992} or the possible growth functions \cite{Charlier&Rampersad:2011,Eilenberg:1974}. For more on $b$-recognizable sets, we refer to the surveys \cite{Allouche&Shallit:2003,Bruyere&Hansel&Michaux&Villemaire:1994,Charlier:2018,Eilenberg:1974,Frougny&Sakarovitch:2010,Rigo:2014}.

In particular, as mentioned above, these sets have been characterized in terms of logic. More precisely, a subset of $\N$ (and more generally of $\N^d$) is $b$-recognizable if and only if it is definable by a first-order formula of the structure $\langle\N,+,V_b\rangle$ where $V_b$ is the base-dependent functional predicate that associates with a natural $n$ the highest power of $b$ dividing $n$. Since the finite unions of arithmetic progressions are precisely the subsets of $\N$ that are definable by first order formulas in the Presburger arithmetic $\langle\mathbb{N},+\rangle$, this characterization provides us with a logical interpretation of Cobham’s theorem. In addition, this result turned out to be a powerful tool for showing that many properties of $b$-automatic sequences are decidable and, further, that many enumeration problems of $b$-automatic sequences can be described by $b$-regular sequences in the sense of Allouche and Shallit \cite{Allouche&Shallit:1992,Allouche&Shallit:2003,Charlier&Rampersad&Shallit:2012}. 

In the context of Cobham's theorem, the following question is natural 
and has received a constant attention during the last 30 years: given an automaton accepting the language of the base-$b$ expansions of a set $X\subseteq\N$, is it decidable whether $X$ is  a finite union of arithmetic progressions? Several authors gave decision procedures for this problem \cite{Allouche&Rampersad&Shallit:2009,Bruyere&Hansel&Michaux&Villemaire:1994,Honkala:1986,Leroux:2005,Marsault&Sakarovitch:2013}. Moreover, a multidimensional version of this problem was shown to be decidable in a beautiful way based on logical methods \cite{Bruyere&Hansel&Michaux&Villemaire:1994,Muchnik:2003}. 

With any set of integers $X$ is naturally associated an infinite word, which is its characteristic sequence $\chi_X\colon n\mapsto 1$ if $n\in X,\  n\mapsto 0$ otherwise. Thus, to a finite union of arithmetic progressions corresponds an ultimately periodic infinite word. Therefore, the HD0L ultimate periodicity problem consisting in deciding whether a given morphic word (i.e.\ the image under a coding of the fixed point of a morphism) is ultimately periodic is a generalization of the periodicity problem for $b$-recognizable sets mentioned in the previous paragraph. The HD0L ultimate periodicity problem
was shown to be decidable in its full generality \cite{Durand:2013,Mitrofanov:2013}. The proofs rely on return words, primitive substitutions or evolution of Rauzy graphs. However, these methods do not provide algorithms that could be easily implemented and the corresponding time complexity is very high. In addition, they do not allow us to obtain an algorithm for the multidimensional generalization of the periodicity problem, i.e.\ the problem of deciding whether a $b$-recognizable subset of $\N^d$ is definable within the Presburger arithmetic $\langle\N,+\rangle$. Therefore, a better understanding of the inner structure of automata arising from number systems remains a powerful tool to obtain efficient decision procedures. 

The general idea is as follows. Suppose that $\mathcal{L}=\{L_i\colon i\in \N\}$ is a collection of languages and that we want to decide whether some particular language $L$ belongs to $\mathcal{L}$. Now, suppose that we are able to explicitly give a lower bound on the state complexities of the languages in $\mathcal{L}$, i.e.\ for each given $N$, we can effectively produce a bound $B(N)$ such that for all $i> B(N)$, the state complexity of $L_i$ is greater than $N$. Then the announced problem is decidable: if $N$ is the state complexity of the given language $L$, then only the finitely many languages $L_0,\ldots,L_{B(N)}$ have to be compared with $L$. 

The state complexity of a $b$-recognizable set (i.e.\ the number of states of the minimal automaton accepting the $b$-expansions of its elements) is closely related to the length of the logical formula describing this set. Short formulas are crucial in order to produce efficient mechanical proofs by using for example the Walnut software \cite{Mousavi:2015,Shallit:2015}. There are several ways to improve the previous decision procedure. One of them is to use precise knowledge of the stucture of the involved automata. This idea was successfully used in the papers \cite{BMMR,Marsault&Sakarovitch:2013}. In \cite{Charlier&Rampersad&Rigo&Waxweiler:2011}, the structure of automata accepting the greedy expansions of $m\N$ for a wide class of non-standard numeration systems, and in particular, estimations of the state complexity of $m\N$ are given. Another way of improving this procedure is to have at our disposal the exact state complexities of the languages in $\mathcal{L}$. Finding an exact formula is a much more difficult problem than finding good estimates. However, some results in this direction are known. For instance, it is proved in \cite{Charlier&Rampersad&Rigo&Waxweiler:2011} that for the Zeckendorf numeration system (i.e.\ based on the Fibonacci numbers), the state complexity of $m\N$ is exactly $2m^2$. A complete description of the minimal automaton recognizing $m\N$ in any integer base $b$ was given in~\cite{Alexeev:2004} and the state complexity of $m\N$ with respect to the base $b$ is shown to be exactly 
\[
	\frac{m}{\gcd(m,b^N)}+\sum_{t=0}^{N-1} \frac{b^t}{\gcd(m,b^t)}
\]
where $N$ is the smallest integer $\alpha$ such that $\frac{m-b^\alpha}{\gcd(m,b^\alpha)}<\frac{m}{\gcd(m,b^{\alpha+1})}$. 

For all the above mentioned reasons, the study of the state complexity of $b$-recognizable sets deserves special interest. In the present work, we propose ourselves to initiate a study of the state complexity of sets of the form $mX+r$, for any recognizable subset $X$ of $\N$ (with respect to a given numeration system), any multiple $m$ and any remainder $r$. In doing so, we aim at generalizing the previous framework concerning the case $X=\N$ only.  Our study starts with the Thue-Morse set $\T$ of the so-called {\em evil numbers} \cite{Allouche:2015}, i.e.\ the natural numbers whose base-$2$ expansions contain an even number of occurrences of the digit $1$. The characteristic sequence of this set corresponds to the ubiquitous Thue-Morse word ${\tt abbabaabbaababba\cdots}$, which is the fixed point starting with ${\tt a}$ of the morphism ${\tt a\mapsto ab,b\mapsto ba}$. This infinite word is one of the archetypical aperiodic automatic words, see the surveys \cite{Allouche&Shallit:2009,Queffelec:2018}. Many number-theoretic works devoted to sets of integers defined thanks to the Thue-Morse word exist, such as the study of additive and multiplicative properties, or iterations and sums of such sets \cite{Allouche&Cloitre&Shevelev:2016,Bucci&Hindman&Puzynina&Zamboni:2013,Mauduit:2001}. In this vein, the set $\T$ seems to be a natural candidate to start with. The goal of this work is to provide a complete characterization of the minimal automata recognizing the sets $m\T+r$ for any multiple $m$ and remainder $r$, and any base $b$ which is a power of $2$ (other bases are not relevant with the choice of the Thue-Morse set in view of Cobham's theorem). A previous work dealing with the case $r=0$ recently appeared as a conference paper \cite{Charlier&Cisternino&Massuir:2019}. Surprisingly, the techniques of the present work, in particular the description of the left quotients (i.e.\ the states of the minimal automaton), are quite different from those we developed for the case $r=0$.

This paper has the following organization. In Section~\ref{sec:basics}, we recall the background that is necessary to tackle our problem. In Section~\ref{sec:method}, we state our main result and expose the method that will be carried out for its proof. More precisely, we present the steps of our construction of the minimal automaton accepting the base-$2^p$ expansions of the elements of $m\T+r$ for any positive integers $p$ and $m$, and any remainder $r\in\{0,\ldots,m-1\}$. In Section~\ref{sec:def-automata}, we give the details of the construction of the intermediate automata. In particular, we study the transitions of each automaton. Thus, at the end of Section~\ref{sec:def-automata}, we are provided with an automaton recognizing the desired language. Then in Section~\ref{sec:properties-automata}, we study the properties of the built automata that will be needed for proving the announced state complexity result. The minimization procedure of the last automaton is handled in Section~\ref{sec:minimization}. This part is the most technical one and it deeply relies on the properties of the intermediate automata proved in the previous sections. In Section~\ref{sec:decision}, we show that as an application of our  results, we obtain a decision procedure running in quadratic time for the problem of deciding whether a given $2^p$-recognizable set is equal to a set of the form $m\T+r$.  In Section~\ref{sec:r=0}, we explicitly give the correspondence between the description of the minimal automaton recognizing $m\T$ obtained in \cite{Charlier&Cisternino&Massuir:2019} and that given in the present work in the particular case where $r=0$. In Section~\ref{sec:Tc}, we show that the minimal automaton recognizing $m\overline{\T}+r$,  where $\overline{\T}$ is the complement of the Thue-Morse set $\T$, is obtained directly from the one recognizing  $m\T+r$ by moving the initial state. As a consequence, the state complexities of $m\overline{\T}+r$ and $m\T+r$ coincide. Finally, in Section~\ref{sec:perspectives}, we discuss future work and give two related open problems.

\section{Basics}
\label{sec:basics}

In this text, we use the usual definitions and notation (alphabet, letter, word, language, free monoid, automaton, etc.) of formal language theory; for example, see \cite{Lothaire1997,Sakarovitch2009}.  

Nevertheless, let us give a few definitions and properties that will be central in this work. The empty word is denoted by $\varepsilon$. For a finite word $w$, $|w|$ designates its length and $|w|_a$ the number of occurrences of the letter $a$ in $w$.  A {\em regular language} is a language which is accepted by a finite automaton. For $L\subseteq A^*$ and $w\in A^*$, the {\em (left) quotient} of $L$ by $w$ is the language
\[
	w^{-1}L=\{u\in A^*\colon wu\in L\}.
\]
As is well known, a language $L$ over an alphabet $A$ is regular if and only if it has finitely many quotients, that is, the set of languages
\[
	\{w^{-1}L\colon w\in A^*\}
\]
is finite. The {\em state complexity} of a regular language is the number of its quotients: $\card(\{w^{-1}L\colon w\in A^*\})$. It corresponds to the number of states of its minimal automaton. The following characterization of minimal automata will be used several times in this work: a deterministic finite automaton (or DFA for short) is minimal if and only if it is complete, reduced and accessible. A DFA is said to be {\em complete} if the transition function is total (i.e.\ from every state start transitions labeled with all possible letters), {\em reduced} if languages accepted from distinct states are distinct and {\em accessible} if every state can be reached from the initial state. The language accepted from a state $q$ is denoted by $L_q$. Thus, the language accepted by a DFA is the language accepted from its initial state.

In what follows we will need a notion that is somewhat stronger than that of reduced DFAs. We say that a DFA has {\em disjoint states} if the languages accepted from distinct states are disjoint: for distinct states $p$ and $q$, we have $L_p\cap L_q=\emptyset$. A state $q$ is said to be {\em coaccessible} if $L_q\ne \emptyset$ and, by extension, an automaton is {\em coaccessible} if all its states are coaccessible. Thus, any coaccessible DFA having disjoint states is reduced.

Now, let us give some background on numeration systems. Let $b\in\N_{\ge2}$. We define $A_b$ to be the alphabet $\{\tt{0},\ldots,\tt{b{-}1}\}$. Elements of $A_b$ are called {\em digits}. The number $b$ is called the {\em base} of the numeration. In what follows we will make no distinction between a digit ${\tt c}$ in $A_b$ and its {\em value} $c$ in $[\![0,b{-}1]\!]$. Otherwise stated, we identify the alphabet $A_b$ and the interval of integers $[\![0,b{-}1]\!]$. Note that here and throughout the text, we use the notation $[\![m,n]\!]$ to designate the interval of integers $\{m,m+1,\ldots,n\}$. The {\em $b$-expansion} of a positive integer $n$, which is denoted by $\rep_b(n)$, is the finite word $c_\ell\cdots c_0$ over $A_b$ defined by
\[
	n=\sum_{j=0}^\ell c_j b^j, \quad c_\ell\ne 0.
\]
The {\em  $b$-expansion} of $0$ is the empty word: $\rep_b(0)=\varepsilon$. Conversely, for a word $w=c_\ell\cdots c_0$ over $A_b$, we write $\val_b(w)=\sum_{j=0}^{\ell} c_j b^j$. Thus we have $\rep_b\colon\N\to A_b^*$ and $\val_b\colon A_b^*\to \N$. Clearly, the function $\val_b\circ \rep_b$ is the identity from $\N$ to $\N$. Moreover, for any $w\in A_b^*$, the words $\rep_b(\val_b(w))$ and $w$ only differ by the potential leading zeroes in $w$. Also note that for all subsets $X$ of $\N$, we have $\val_b^{-1}(X)=0^*\rep_b(X)$. A subset $X$ of $\N$ is said to be {\em $b$-recognizable} if the language $\rep_b (X)$ is regular, or equivalently, if the language $\val_b^{-1}(X)$ is regular. In what follows, we will always consider automata accepting $\val_b^{-1}(X)$ instead of $\rep_b(X)$. The {\em state complexity} of a $b$-recognizable subset $X$ of $\N$ {\em with respect to the base $b$} is the state complexity of the language $\val_b^{-1}(X)$. Note that the $b$-expansions are read from left to right, i.e.\ most significant digit first.

We will need to represent not only natural numbers, but also pairs of natural numbers. If $u=u_1\cdots u_n\in A^*$ and $v=v_1\cdots v_n\in B^*$ are words of the same length $n$, then we use the notation $(u,v)$ to designate the word $(u_1,v_1)\cdots (u_n,v_n)$ of length $n$ over the alphabet $A\times B$: 
\[
	(u,v)=(u_1,v_1)\cdots (u_n,v_n)\in (A\times B)^*.
\]
For $(m,n)\in\N^2$, we write
\[
	\rep_b(m,n)=(0^{\ell-|\rep_b(m)|}\rep_b(m),0^{\ell-|\rep_b(n)|}\rep_b(n))
\] 
where $\ell=\max\{|\rep_b(m)|,|\rep_b(n)|\}$. Otherwise stated, we add leading zeroes to the shortest expansion (if any) in order to obtain two words of the same length. Finally, for a subset $X$ of $\N^2$, we write
\[
	\val_{b}^{-1}(X)=(0,0)^*\rep_b(X).
\]

\section{Method}
\label{sec:method}

The Thue-Morse set, which we denote by $\T$, is the set of all natural numbers whose base-$2$ expansions contain an even number of occurrences of the digit $1$:
\[
	\T= \{n\in\N\colon|\rep_2 (n)|_1\in 2\N\}.
\]
The Thue-Morse set $\T$ is $2$-recognizable since the language $\val_2^{-1}(\T)$ is accepted by the automaton depicted in Figure~\ref{fig:aut-TM-2}.
\begin{figure}[htb]
\centering
\begin{tikzpicture}
\tikzstyle{every node}=[shape=circle, fill=none, draw=black,minimum size=20pt, inner sep=2pt]
\node(1) at (0,0) {$T$};
\node(2) at (0,-1.5) {$B$};
\tikzstyle{every node}=[shape=circle, fill=none, draw=black,minimum size=15pt, inner sep=2pt]
\node(2f) at (0,0) {};
\tikzstyle{every path}=[color=black, line width=0.6 pt]
\tikzstyle{every node}=[shape=circle, minimum size=5pt, inner sep=2pt]
\draw [->] (-1,0) to node {} (1); 
\draw [->] (1) to [loop above] node [above] {$0$} (1);
\draw [->] (2) to [loop below] node [below] {$0$} (2);
\draw [->] (1) to [bend right=30] node [left] {$1$} (2);
\draw [->] (2) to [bend right=30] node [right] {$1$} (1);
\end{tikzpicture}
\caption{The Thue-Morse set is $2$-recognizable.}
\label{fig:aut-TM-2}
\end{figure}
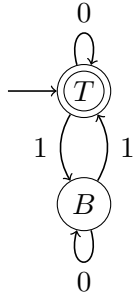
More precisely, the Thue-Morse set $\T$ is $2^p$-recognizable for all $p\in\N_{\ge1}$ and is not $b$-recognizable for any other base $b$. This is a consequence of the famous theorem of Cobham. Two positive integers are said to be {\em multiplicatively independent} if their only common integer power is $1$ and are said {\em multiplicatively dependent} otherwise. 

\begin{theorem}[Cobham \cite{Cobham:1969}] \ 
\begin{itemize}
\item Let $b,b'$ be two multiplicatively independent bases. Then a subset of $\N$ is both $b$-recognizable and $b'$-recognizable if and only if it is a finite union of arithmetic progressions.
\item Let $b,b'$ be two multiplicatively dependent bases. Then a subset of $\N$ is $b$-recognizable if and only if it is $b'$-recognizable.
\end{itemize}
\end{theorem}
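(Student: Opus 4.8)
Since the statement is the classical theorem of Cobham, I would treat its two items separately, the second being considerably easier than the first.

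For the multiplicatively dependent case, the plan is to reduce both $b$ and $b'$ to a common base. By definition of multiplicative dependence there are integers $k,k'\ge 1$ with $b^{k}=(b')^{k'}=:c$. The whole item then follows from a single auxiliary lemma: for any base $\beta\ge 2$ and any $k\ge 1$, a set $X\subseteq\N$ is $\beta$-recognizable if and only if it is $\beta^{k}$-recognizable. One direction is obtained by grouping the base-$\beta$ digits into blocks of length $k$: from an automaton for $\val_\beta^{-1}(X)$ one builds an automaton reading base-$\beta^{k}$ digits by composing transitions $k$ at a time, padding with leading zeroes so that lengths are multiples of $k$, which is harmless since $\val_\beta^{-1}(X)=0^*\rep_\beta(X)$. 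The converse amounts to decoding each base-$\beta^{k}$ digit as a word of $k$ base-$\beta$ digits, a length-multiplying operation that preserves regularity. Applying the lemma twice gives that $X$ is $b$-recognizable $\iff$ $X$ is $c$-recognizable $\iff$ $X$ is $b'$-recognizable.

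The first item is the genuine content of Cobham's theorem. The implication from \emph{finite union of arithmetic progressions} to \emph{recognizable in every base} is routine, since an ultimately periodic characteristic sequence is produced by a finite automaton reading digits in any base, so I would dispatch it quickly. For the hard implication I would first translate the hypothesis into the language of infinite words: $X$ is $\beta$-recognizable exactly when its characteristic sequence $\chi_X$ is $\beta$-automatic, equivalently when the $\beta$-kernel $\{(\chi_X(\beta^{i}n+j))_n : i\ge 0,\ 0\le j<\beta^{i}\}$ is finite. The goal becomes: a sequence that is simultaneously $b$-automatic and $b'$-automatic, with $b,b'$ multiplicatively independent, must be ultimately periodic.

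Two ingredients then have to be combined. The first is number-theoretic: multiplicative independence means $\log b/\log b'$ is irrational, so by Kronecker's theorem the values $b^{i}/(b')^{j}$ are dense near $1$; in particular one can find arbitrarily large exponents $i,j$ with $b^{i}$ and $(b')^{j}$ in a prescribed ratio arbitrarily close to $1$. The second is structural: finiteness of each kernel yields a self-similarity of $\chi_X$ under rescaling by powers of $b$ (resp.\ of $b'$), so that long windows of $\chi_X$ recur with controlled period. The plan is to feed the near-coincidence of scales $b^{i}\approx (b')^{j}$ into the two self-similarities so as to exhibit a genuine (not merely approximate) translation under which $\chi_X$ is eventually invariant, which is exactly ultimate periodicity. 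I expect this combination to be the main obstacle: bounded gaps alone do not force periodicity, since the Thue--Morse set itself is $2$-automatic, has gaps at most $2$, yet is not ultimately periodic, so the argument must genuinely exploit the incompatibility of the two scales, and turning the approximate scale-matching into an exact period is the delicate step. Here I would follow Hansel's simplification of Cobham's original argument, passing through the letter-fibers $\{n:\chi_X(n)=a\}$ and a syndeticity analysis, rather than attempting the original combinatorial estimates directly.
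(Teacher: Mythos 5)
The paper does not prove this statement at all: it is quoted as Cobham's classical theorem with a citation to \cite{Cobham:1969}, so there is no internal proof to compare against. Judged on its own terms, your treatment of the second item (multiplicatively dependent bases) is correct and essentially complete: the reduction to a common power $c=b^{k}=(b')^{k'}$ and the block-coding lemma showing that $\beta$-recognizability and $\beta^{k}$-recognizability coincide are standard and your grouping/ungrouping argument, including the remark that padding with leading zeroes is harmless because one works with $0^*\rep_\beta(X)$, is sound. The easy direction of the first item (a finite union of arithmetic progressions is recognizable in every base) is also correctly dispatched.

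For the hard direction of the first item, however, what you give is a proof plan rather than a proof, and the gap sits exactly where you yourself locate it. You correctly assemble the two ingredients --- finiteness of the $b$- and $b'$-kernels, and the density of $b^{i}/(b')^{j}$ near $1$ coming from the irrationality of $\log b/\log b'$ --- and you correctly observe (via the Thue--Morse example) that bounded gaps or approximate recurrence cannot suffice. But the step ``feed the near-coincidence of scales into the two self-similarities so as to exhibit a genuine translation under which $\chi_X$ is eventually invariant'' is precisely the content of Cobham's theorem, and no argument is supplied for it: one needs Hansel's syndeticity analysis of the letter-fibers, the notion of recognizability of factors at both scales, and a genuinely delicate propagation of local periods to a global ultimate period. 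Deferring this to ``follow Hansel's simplification'' is a reasonable thing to do in practice --- it is, after all, what the paper itself does by citing Cobham --- but as a self-contained proof the proposal is incomplete at its central point. If you intend the argument to stand alone, you must either carry out the syndeticity argument or state explicitly that you are invoking the published proof as a black box, as the authors do.
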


In the case of the Thue-Morse set, it is easily seen that, for each $p\in\N_{\ge1}$, the language $\val_{2^p}^{-1}(\T)$ is accepted by the DFA $(\{T,B\},T,T,A_{2^p},\delta)$ where for all $X\in\{T,B\}$ and all $a\in A_{2^p}$,  
\[
	\delta(X,a)=\begin{cases}
					X & \text{if } a\in \T \\
					\overline{X} & \text{else}
				\end{cases}
\]
where $\overline{T}=B$ and $\overline{B}=T$. For example this automaton is depicted in Figure~\ref{fig:aut-TM-4} for $p=2$.
\begin{figure}[htb]
\centering
\begin{tikzpicture}
\tikzstyle{every node}=[shape=circle, fill=none, draw=black,
minimum size=20pt, inner sep=2pt]
\node(1) at (0,0) {$T$};
\node(2) at (0,-1.5) {$B$};
\tikzstyle{every node}=[shape=circle, fill=none, draw=black,
minimum size=15pt, inner sep=2pt]
\node(2f) at (0,0) {};
\tikzstyle{every path}=[color=black, line width=0.5 pt]
\tikzstyle{every node}=[shape=circle, minimum size=5pt, inner sep=2pt]
\draw [->] (-1,0) to node {} (1); 
\draw [->] (1) to [loop above] node [above] {$0,3$} (1);
\draw [->] (2) to [loop below] node [below] {$0,3$} (2);
\draw [->] (1) to [bend right=30] node [left] {$1,2$} (2);
\draw [->] (2) to [bend right=30] node [right] {$1,2$} (1);
\end{tikzpicture}
\caption{The Thue-Morse set is $4$-recognizable.}
\label{fig:aut-TM-4}
\end{figure}
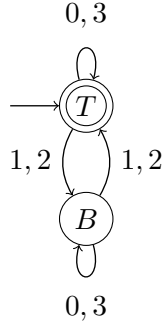

In order to avoid a systematic case separation, we introduce the following notation: for $X\in\{T,B\}$ and $n \in \N$, we define
\[
	X_n = \begin{cases}
			X & \text{if } n \in \T \\
			\overline{X} & \text{else.}  
			\end{cases}
\]
With this notation, we can simply rewrite the definition of the transition function $\delta$ as $\delta(X,a)=X_a$.

The following proposition is well known; for example see~\cite{Bruyere&Hansel&Michaux&Villemaire:1994}. 

\begin{proposition}
Let $b\in\N_{\ge 2}$ and $m,t\in\N$. If $X$ is $b$-recognizable, then so is $mX+t$. Otherwise stated, the dilation $n\mapsto mn$ and translation $n\mapsto n+t$ preserve $b$-recognizability.
\end{proposition}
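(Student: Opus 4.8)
The plan is to reduce the statement to a single closure property of $b$-recognizable relations. Since $mX+t$ can be written as the image
\[
	mX+t=\pi_2\big(R\cap(X\times\N)\big),\qquad R=\{(x,y)\in\N^2\colon y=mx+t\},
\]
where $\pi_2$ denotes the projection onto the second component, it suffices to prove that the binary relation $R$ is $b$-recognizable, i.e.\ that $\val_b^{-1}(R)=(0,0)^*\rep_b(R)$ is a regular language over $A_b\times A_b$. Indeed, if $X$ is $b$-recognizable then so is the cylinder $X\times\N$ (run the automaton of $X$ on the first track and accept anything on the second), $b$-recognizable relations are closed under intersection (product of automata) and under projection onto a coordinate (apply the letter-to-letter morphism erasing the first track; the $(0,0)^*$-padding built into $\val_b^{-1}$ guarantees that leading zeroes are absorbed correctly so that the projected language is exactly $\val_b^{-1}(mX+t)$). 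Thus everything hinges on building a finite automaton for $R$.

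To build an automaton for $R$, I would first read pairs of expansions least-significant-digit first, because the arithmetic relation $y=mx+t$ is governed by carries that propagate from the low-order digits upward. The states record a single nonnegative integer, the current carry $c$, enriched with the finite information of how many digits of the fixed constant $t$ remain to be consumed. On reading a digit pair $(a,a')\in A_b\times A_b$, with $\tau$ the relevant digit of $t$, the transition checks the local congruence $ma+\tau+c\equiv a'\pmod b$ and updates the carry to $c'=\lfloor(ma+\tau+c)/b\rfloor$; we accept exactly when the run ends with carry $0$ and $t$ exhausted. The decisive point is that the carry stays uniformly bounded: since $a\le b-1$ and $\tau\le b-1$, a short induction shows $c\le m+1$ throughout any run, so the state set is finite and the construction is legitimate.

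Finally I would reconcile this with the paper's conventions, which read expansions most-significant-digit first: the language just constructed describes reversed (padded) representations, and since the reversal of a regular language is regular and the $(0,0)^*$-padding is preserved under reversal, the MSD-first version of $\val_b^{-1}(R)$ is regular as well. The main obstacle is precisely this finiteness argument, namely establishing the uniform bound on the carry and checking carefully that the leading-zero padding conventions are respected both in the carry automaton and after the projection step; once the carry bound is in hand, the rest is routine product-and-projection bookkeeping.

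As an alternative one can bypass automata and argue logically via the characterization recalled in the introduction: $X$ being $b$-recognizable means it is definable by a first-order formula $\varphi(x)$ in $\langle\N,+,V_b\rangle$, and then $mX+t$ is defined by $\exists x\,\big(\varphi(x)\wedge y=x+\cdots+x+1+\cdots+1\big)$ with $m$ copies of $x$ and $t$ copies of $1$; this formula still lives in $\langle\N,+,V_b\rangle$, so $mX+t$ is $b$-recognizable. I would nonetheless favour the explicit automaton construction, as it is elementary and constructive.
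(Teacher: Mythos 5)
Your argument is correct. Note first that the paper itself does not prove this proposition: it is stated as well known with a pointer to the literature, and the logical alternative you sketch at the end (definability of $y=mx+t$ in $\langle\N,+,V_b\rangle$, using that $1$ is already $+$-definable) is essentially the proof found in the cited reference. Your main, automaton-theoretic argument follows exactly the decomposition that the paper then carries out concretely for $X=\T$: cylindrify $X$, intersect with the graph of $x\mapsto mx+t$, and project onto the second track --- compare the construction of $\A_{m,r,2^p}\times\A_{\T,2^p}$ and of $\Pi\left(\A_{m,r,2^p}\times\A_{\T,2^p}\right)$. The one genuine difference is how the automaton for the relation $y=mx+t$ is built. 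You read least-significant-digit first, track a carry, and reverse at the end; your carry bound is sound, since $c\le m+1$ gives $c'\le\big((m+1)(b-1)+c\big)/b\le m+1$. The paper's $\A_{m,r,b}$ instead works most-significant-digit first from the outset: its states are the residues $[\![0,m{-}1]\!]$, the transition $\delta_{m,r,b}(i,(d,e))=j\iff bi+e=md+j$ tracks the ``remainder still owed'', and the check that $d<b$ plays the role of your carry bound. Your route is perhaps more familiar arithmetically, but the paper's version yields directly a deterministic MSD-first automaton with exactly $m$ states and no reversal step, which is essential later when it minimizes and counts states --- reversal would destroy that control. One small point to watch in your projection step: the claim that the projected language equals $\val_b^{-1}(mX+t)$ on the nose uses $mx+t\ge x$, so that the padding length of each pair is governed by the second component; this holds for $m\ge1$ (the only case the paper needs), while for $m=0$ the projected language may omit some short padded representatives, but it still has the correct $\val_b$-image and regularity of $\rep_b(mX+t)$ follows by stripping leading zeroes, so the statement survives.
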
 

In particular, for any $m,t\in\N$ and $p\in\N_{\ge 1}$, the set $m\T+t$ is $2^p$-recognizable. The aim of this work is to show the following result.

\begin{theorem} 
\label{thm:main}
Let $m,p$ be positive integers and $r\in[\![0,m-1]\!]$. Then the state complexity of $m\T+r$ with respect to the base $2^p$ is equal to 
\[
	2k+\left\lceil \frac zp\right\rceil
\]
if $m=k2^z$ with $k$ odd.
\end{theorem}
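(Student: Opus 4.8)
The plan is to first produce a concrete (non-minimal) DFA recognizing $\val_{2^p}^{-1}(m\T+r)$ by composing the long-division-by-$m$ process with the Thue-Morse automaton, and then to determine its left quotients (equivalently, its Myhill--Nerode classes) in order to count the states of the minimal automaton. Write $b=2^p$. Reading most significant digit first and performing long division by $m$ in base $b$, I keep track of the current remainder $s$ and emit one quotient digit per step; the elementary identity $Vb+a=m\big(b\lfloor V/m\rfloor+\lfloor(bs+a)/m\rfloor\big)+\big((bs+a)\bmod m\big)$ (for $s=V\bmod m$) shows that on reading $a$ from remainder $s$ the emitted quotient digit is $\lfloor(bs+a)/m\rfloor\in[\![0,b-1]\!]$ and the new remainder is $(bs+a)\bmod m$. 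Feeding these quotient digits into the base-$2^p$ Thue-Morse automaton of Figure~\ref{fig:aut-TM-4}, I obtain a DFA with state set $[\![0,m-1]\!]\times\{T,B\}$, initial state $(0,T)$, transition sending $(s,X)$ on input $a$ to $\big((bs+a)\bmod m,\,X_{\lfloor(bs+a)/m\rfloor}\big)$, and single accepting state $(r,T)$: indeed $\val_b(w)\in m\T+r$ exactly when the remainder equals $r$ and the produced quotient, being $\lfloor\val_b(w)/m\rfloor$, lies in $\T$.

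With this description the language accepted from a state $(s,X)$ becomes explicit: reading a word $u$ of length $\ell$ and value $t=\val_b(u)$ lands in $\big((sb^\ell+t)\bmod m,\,X_{Q}\big)$ with $Q=\lfloor(sb^\ell+t)/m\rfloor$, so $u$ is accepted from $(s,X)$ iff $(sb^\ell+t)\equiv r\pmod m$ and $X_Q=T$. Two states are Nerode-equivalent precisely when these two coupled conditions agree for every $u$. After checking that the automaton is accessible and coaccessible (from any state one can steer both the remainder and, by choosing the leading digits of the suffix freely, the parity of $Q$, so as to reach $(r,T)$), the whole problem reduces to counting the equivalence classes among the states.

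To count them I would split the analysis according to the $2$-adic valuation, using $m=k2^z$ with $k$ odd. For suffixes $u$ that are long enough that $p\ell\ge z$, the remainder congruence $(sb^\ell+t)\equiv r\pmod{k2^z}$ decouples: modulo $2^z$ it reads $t\equiv r$, a condition on $u$ alone, while modulo $k$ it reads $t\equiv r-sb^\ell$, which depends on $s$ only through $s\bmod k$ because $b$ is invertible modulo $k$. This is the mechanism collapsing the $m$ remainders to $k$ classes in the recurrent part of the automaton, and together with the two Thue-Morse states it accounts for the $2k$ term. The remaining states form a transient stem produced by the short suffixes with $p\ell<z$, where the factor $2^z$ is not yet absorbed by $b^\ell$ and finer distinctions survive; the threshold $p\lceil z/p\rceil\ge z$ is exactly the length after which the stem feeds into the recurrent core, which is why its length is $\lceil z/p\rceil$. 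I expect to finish by exhibiting representatives of the $2k+\lceil z/p\rceil$ classes and proving they are pairwise inequivalent (lower bound) while every state is equivalent to one of them (upper bound).

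The main obstacle is this last equivalence analysis, i.e.\ the minimization, because the acceptance condition couples a modular constraint with the Thue-Morse parity $X_Q=T$ of the quotient $Q=\lfloor(sb^\ell+t)/m\rfloor$, and $Q$ is a genuinely arithmetic function of the suffix whose binary digit-sum parity is not governed by congruences alone. Disentangling the two conditions cleanly — showing that distinguishing witnesses can always be chosen so that the parity requirement does not accidentally realign two states that the modular condition separates, and conversely that within a class the parities match for \emph{all} suffixes — is where the real work lies, and it is presumably why the minimization warrants its own technical section. In the more constructive variant one instead writes down the $(2k+\lceil z/p\rceil)$-state automaton directly and proves minimality by checking that it is complete, accessible, and has pairwise disjoint accepted languages, hence is reduced by the criterion recalled in Section~\ref{sec:basics}; this disjoint-states property is the convenient substitute for a direct pairwise distinguishability argument.
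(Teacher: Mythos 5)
Your construction is exactly the paper's: the DFA you describe on $[\![0,m-1]\!]\times\{T,B\}$ with transition $(s,X)\mapsto\big((2^ps+a)\bmod m,\,X_{\lfloor(2^ps+a)/m\rfloor}\big)$ is precisely the projected product automaton $\Pi\left(\A_{m,r,2^p}\times\A_{\T,2^p}\right)$ of Sections~\ref{sec:def-automata}--\ref{sec:properties-automata}, and your characterization of the language accepted from a state is Lemma~\ref{lem:transitionsProd-proj}. So the setup is correct and identical to the paper's. The gap is that everything after this point --- the determination of the Nerode classes, which is the actual proof of the theorem and occupies all of Section~\ref{sec:minimization} --- is only announced, and the heuristic picture you give of those classes contains two concrete errors that would derail the execution.

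First, the recurrent classes are not ``residue mod $k$'' crossed independently with the two Thue--Morse states. For $pn\ge z$ the quotients computed from $(j,X)$ and from $(j+\ell k,\cdot)$ differ by $\ell 2^{pn-z}$, so that $\rep_2(d')=\rep_2(\ell)0^{pn-z-|\rep_2(d)|}\rep_2(d)$; hence the state equivalent to $(j,X)$ is $(j+\ell k,X_\ell)$, with the Thue--Morse component flipping according to whether $\ell$ is evil or odious (Lemma~\ref{lem:0}, Proposition~\ref{prop:min-D}). The ``realignment'' you fear might happen accidentally in fact happens systematically --- e.g.\ $(j,T)$ is equivalent to $(j+k,B)$, not to $(j+k,T)$ --- and this coupling is exactly what makes the recurrent count equal $2k$. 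Second, the transient part is not a stem of length $\lceil z/p\rceil$ independent of $r$: the classes $C_\alpha$ of states accepting a given suffix of $0^{N-R}\rep_{2^p}(r)$ number $N+1$ with $N=\max\{\lceil z/p\rceil,R\}$ and $R=|\rep_{2^p}(r)|$, which exceeds $\lceil z/p\rceil+1$ whenever $R>\lceil z/p\rceil$; the total only comes out to $2k+\lceil z/p\rceil$ because exactly $N-\lceil z/p\rceil$ of the $2k-1$ recurrent classes are then empty (Propositions~\ref{prop:nonempty-C} and~\ref{prop:empty-D}). Your analysis of the short-suffix regime uses only the threshold $p\ell\ge z$ and never involves $r$, so it cannot see this compensation and would not give a correct argument for general $r$. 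Exhibiting the correct partition, proving indistinguishability within each class and distinguishability across classes (via the suffixes of $0^{N-R}\rep_{2^p}(r)$ and the words $w_i(\rep_{2^p}(m))^\ell\rep_{2^p}(r)$), and performing the emptiness count are the substance of the theorem and are absent from the proposal.
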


Our proof of Theorem~\ref{thm:main} is constructive. In order to describe the minimal DFA of $\val_{2^p}^{-1}(m\T+r)$, we will successively construct  several automata. First, we build a DFA $\A_{\T,2^p}$ accepting the language 
\[
	\val_{2^p}^{-1}(\T\times \N).
\]
Then we build a DFA $\A_{m,r,b}$ accepting the language
\[
	\val_b^{-1}\big(\{(n,mn+r)\colon n\in \N\}\big).
\] 
Note that we do the latter step for any integer base $b$ and not only for powers of $2$.
Next, we consider the product automaton $\A_{m,r,2^p}\times\A_{\T,2^p}$. This DFA accepts the language 
\[
	\val_{2^p}^{-1}\big(\{(t,mt+r)\colon t \in \T\}\big).
\]
Finally, a finite automaton $\Pi(\A_{m,r,2^p}\times\A_{\T,2^p})$ accepting $\val_{2^p}^{-1}(m\T+r)$ is obtained by projecting the label of each transition in $\A_{m,r,2^p}\times\A_{\T,2^p}$ onto its second component.
At each step of our construction, we check that the automaton under consideration is minimal (and hence deterministic) and the ultimate step precisely consists in a minimization procedure.

From now on, we fix some positive integers $m,p$ and some remainder $r\in[\![0,m-1]\!]$. We also let $z$ and $k$ be the unique integers such that $m=k2^z$ with $k$ odd. Finally we let $R=|\rep_{2^p}(r)|$.

\section{Construction of the intermediate automata}
\label{sec:def-automata}

\subsection{The automaton $\A_{\T,2^p}$}

First, we build a DFA  $\A_{\T,2^p}$ accepting the language $\val_{2^p}^{-1}(\T\times \N)$. This DFA is a modified version of the automaton accepting $\val_{2^p}^{-1}(\T)$ defined in the previous section. Namely, we replace each transition labeled by $a\in A_{2^p}$ by $2^p$ copies of itself labeled by $(a,b)$, for each $b\in A_{2^p}$. Formally, 
\[
	\A_{\T,2^p}=(\{T,B\},T,T,A_{2^p}\times A_{2^p},\delta_{\T,2^p})
\]
where, for all $X\in\{T,B\}$ and all $a,b\in A_{2^p}$, we have $\delta_{\T,2^p}(X,(a,b))=X_a$. For example, the automata $\A_{\T,2}$ and $\A_{\T,4}$ are depicted in Figure~\ref{fig:aut-TM-2-4-couples}.
\begin{figure}[htb]
\vspace{-1cm}
\begin{minipage}[c]{0.34\linewidth}
\begin{tikzpicture}
\tikzstyle{every node}=[shape=circle, fill=none, draw=black,
minimum size=20pt, inner sep=2pt]
\node(1) at (0,0) {$T$};
\node(2) at (0,-1.75) {$B$};
\tikzstyle{every node}=[shape=circle, fill=none, draw=black,
minimum size=15pt, inner sep=2pt]
\node(2f) at (0,0) {};
\tikzstyle{every path}=[color=black, line width=0.5 pt]
\tikzstyle{every node}=[shape=circle, minimum size=5pt, inner sep=2pt]
\draw [->] (-1,0) to node {} (1); 
\draw [->] (1) to [loop above] node [above=-0.3cm] {\begin{tabular}{c}
$(0,0)$ \\ 
$(0,1)$ \\
\end{tabular}} (1);
\draw [->] (2) to [loop below] node [below=-0.3cm] {\begin{tabular}{c}
$(0,0)$ \\
$(0,1)$ \\
\end{tabular}} (2);
\draw [->] (1) to [bend right=30] node [left=-0.3cm] {\begin{tabular}{c}
$(1,0)$ \\
$(1,1)$ \\
\end{tabular}} (2);
\draw [->] (2) to [bend right=30] node [right=-0.3cm] {\begin{tabular}{c}
$(1,0)$ \\
$(1,1)$ \\
\end{tabular}} (1);
\end{tikzpicture}
\end{minipage}
\begin{minipage}[c]{0.64\linewidth}
\begin{tikzpicture}
\tikzstyle{every node}=[shape=circle, fill=none, draw=black,
minimum size=20pt, inner sep=2pt]
\node(1) at (0,0) {$T$};
\node(2) at (0,-1.75) {$B$};
\tikzstyle{every node}=[shape=circle, fill=none, draw=black,
minimum size=15pt, inner sep=2pt]
\node(2f) at (0,0) {};
\tikzstyle{every path}=[color=black, line width=0.5 pt]
\tikzstyle{every node}=[shape=circle, minimum size=5pt, inner sep=2pt]
\draw [->] (-1,0) to node {} (1); 
\draw [->] (1) to [loop above] node [above=-1.7cm] {\begin{tabular}{c}
$(0,0),(0,1),(0,2),(0,3)$ \\
$(3,0),(3,1),(3,2),(3,3)$ \\
\end{tabular}} (1);
\draw [->] (2) to [loop below] node [below=-1.7cm] {\begin{tabular}{c}
$(0,0),(0,1),(0,2),(0,3)$ \\
$(3,0),(3,1),(3,2),(3,3)$ \\
\end{tabular}} (2);
\draw [->] (1) to [bend right=30] node [left=-0.2cm] {\begin{tabular}{c}
$(1,0),(1,1),(1,2),(1,3)$ \\
$(2,0),(2,1),(2,2),(2,3)$ \\
\end{tabular}} (2);
\draw [->] (2) to [bend right=30] node [right=-0.2cm] {\begin{tabular}{c}
$(1,0),(1,1),(1,2),(1,3)$ \\
$(2,0),(2,1),(2,2),(2,3)$ \\
\end{tabular}} (1);
\end{tikzpicture}
\end{minipage}
\vspace{-1cm}
\caption{The automata $\A_{\T,2}$ (left) and $\A_{\T,4}$ (right).}
\label{fig:aut-TM-2-4-couples}
\end{figure}

\begin{lemma}
\label{lemlem:transitionsTM}
Let $u,v \in A_{2^p}^*$. Then $\val_{2^p}(uv)\in\T$ if and only if, either $\val_{2^p}(u)\in\T$ and  $\val_{2^p}(v)\in\T$, or $\val_{2^p}(u)\notin\T$ and  $\val_{2^p}(v)\notin\T$.
\end{lemma}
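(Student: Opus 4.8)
The plan is to reduce everything to the parity of the binary digit sum. Recall that $n\in\T$ exactly when $|\rep_2(n)|_1$ is even, so the statement asserts that the parities of $|\rep_2(\val_{2^p}(u))|_1$ and $|\rep_2(\val_{2^p}(v))|_1$ agree if and only if $\val_{2^p}(uv)\in\T$. I would therefore first establish the following additivity property: for any $w\in A_{2^p}^*$, the quantity $|\rep_2(\val_{2^p}(w))|_1$ equals the sum, over all digits $c$ occurring in $w$, of $|\rep_2(c)|_1$.

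To prove this additivity I would use crucially that the base $2^p$ is a power of $2$. Writing $w=c_\ell\cdots c_0$ with each $c_j\in A_{2^p}=[\![0,2^p{-}1]\!]$, we have $\val_{2^p}(w)=\sum_{j=0}^\ell c_j\,2^{pj}$. Since each $c_j<2^p$, its binary expansion fits in a block of $p$ bits, and the term $c_j\,2^{pj}$ occupies exactly the bit positions $pj,\ldots,pj+p-1$. These blocks are pairwise disjoint, so the binary expansion of $\val_{2^p}(w)$ is, up to leading zeroes, the concatenation of the length-$p$ binary blocks of the digits $c_\ell,\ldots,c_0$; in particular no carries occur, and the total number of $1$'s is the sum of the numbers of $1$'s contributed by each digit (the leading zeroes padding each block contribute no $1$'s). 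This is precisely the claimed additivity, and this is the only place where the hypothesis that the base is a power of $2$ is used.

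With the additivity in hand, the lemma is immediate. Applying it to $u$, to $v$, and to the concatenation $uv$, whose multiset of digits is the union of those of $u$ and $v$, yields
\[
	|\rep_2(\val_{2^p}(uv))|_1=|\rep_2(\val_{2^p}(u))|_1+|\rep_2(\val_{2^p}(v))|_1 .
\]
Hence $|\rep_2(\val_{2^p}(uv))|_1$ is even if and only if $|\rep_2(\val_{2^p}(u))|_1$ and $|\rep_2(\val_{2^p}(v))|_1$ have the same parity, which happens exactly when $\val_{2^p}(u)$ and $\val_{2^p}(v)$ are both in $\T$ or both outside $\T$. I expect no serious obstacle here: the only point requiring care is to phrase the block-concatenation argument in terms of the value $\val_{2^p}$ rather than the expansion $\rep_{2^p}$, so as to avoid worrying about leading zeroes, which contribute no $1$'s and hence do not affect the digit-sum parity anyway.
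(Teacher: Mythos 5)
Your proof is correct and follows essentially the same route as the paper: the paper packages your ``disjoint $p$-bit blocks, no carries'' observation into the $p$-uniform padding morphism $\tau(a)=0^{p-|\rep_2(a)|}\rep_2(a)$, for which $\val_{2^p}(w)=\val_2(\tau(w))$, and then gets your additivity of the number of $1$'s for free from the morphism property $|\tau(uv)|_1=|\tau(u)|_1+|\tau(v)|_1$. The parity conclusion is then identical in both arguments.
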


\begin{proof}
Let $\tau\colon A_{2^p}^*\to A_{2^p}^*$ be the $p$-uniform morphism defined by $\tau(a)=0^{p-|\rep_2(a)|}\rep_2(a)$ for each $a\in A_{2^p}$. Then, for all $w\in A_{2^p}^*$, we have $\val_{2^p}(w)=\val_2(\tau(w))$. Therefore, $\val_{2^p}(w)\in\T$ if and only if $|\tau(w)|_1\in 2\N$.
Since $\tau$ is a morphism, we have $|\tau(uv)|_1=|\tau(u)|_1+|\tau(v)|_1$. Hence $|\tau(uv)|_1$ is even if and only if $|\tau(u)|_1$ and $|\tau(v)|_1$ are both even or both odd. 
\end{proof}

\begin{lemma}
\label{lem:transitionsTM}
For all $X \in \{T,B\}$ and $(u,v)\in (A_{2^p}\times A_{2^p})^*$, we have 
\[
	\delta_{\T,2^p}(X,(u,v))=X_{\val_{2^p}(u)}.
\]
\end{lemma}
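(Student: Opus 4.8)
The plan is to prove Lemma~\ref{lem:transitionsTM} by induction on the length $\ell$ of the word $(u,v)$, using Lemma~\ref{lemlem:transitionsTM} as the engine for the inductive step. The statement asserts that after reading the pair-word $(u,v)$ from state $X$, the automaton $\A_{\T,2^p}$ lands in the state $X_{\val_{2^p}(u)}$, which depends only on the first component $u$ (as expected, since the second component is irrelevant to the transition rule $\delta_{\T,2^p}(X,(a,b))=X_a$).

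First I would settle the base case. When $\ell=0$, the word $(u,v)$ is empty, so $u=\varepsilon$ and $\val_{2^p}(\varepsilon)=0\in\T$; hence $X_{\val_{2^p}(u)}=X$, which matches $\delta_{\T,2^p}(X,\varepsilon)=X$. For the inductive step, I would write a pair-word of length $\ell+1$ as $(u,v)=(u',v')(a,b)$ where $u=u'a$, $v=v'b$, and $(a,b)\in A_{2^p}\times A_{2^p}$. Applying the induction hypothesis to $(u',v')$ gives $\delta_{\T,2^p}(X,(u',v'))=X_{\val_{2^p}(u')}$, and then one transition on the letter $(a,b)$ yields
\[
	\delta_{\T,2^p}(X,(u,v)) = \delta_{\T,2^p}\big(X_{\val_{2^p}(u')},(a,b)\big) = \big(X_{\val_{2^p}(u')}\big)_a,
\]
using the definition of $\delta_{\T,2^p}$. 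It then remains to verify the identity $\big(X_{\val_{2^p}(u')}\big)_a = X_{\val_{2^p}(u'a)}$.

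This last identity is where Lemma~\ref{lemlem:transitionsTM} enters and is the only genuinely substantive point. Observe that $u'a$ is the concatenation of $u'$ with the single-letter word $a$, and $\val_{2^p}(a)\in\T$ exactly when $a\in\T$ (identifying the digit with its value). By Lemma~\ref{lemlem:transitionsTM} applied to $u'$ and $a$, the number $\val_{2^p}(u'a)$ lies in $\T$ if and only if $\val_{2^p}(u')$ and $a$ are either both in $\T$ or both outside $\T$. Unwinding the definition of the subscript notation $X_n$ (which flips $X$ precisely when $n\notin\T$), this is exactly the rule that composes the two flips: applying the "$a$-flip" to $X_{\val_{2^p}(u')}$ flips $X$ once more iff $a\notin\T$, and the parity of the total number of flips agrees with whether $\val_{2^p}(u'a)\in\T$. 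A short case analysis on whether $\val_{2^p}(u')\in\T$ and whether $a\in\T$ confirms $\big(X_{\val_{2^p}(u')}\big)_a = X_{\val_{2^p}(u'a)}$ in all four cases, completing the induction. I do not anticipate a real obstacle here; the main care needed is to keep the notational bookkeeping of the $X_n$ subscript consistent with the parity argument from Lemma~\ref{lemlem:transitionsTM}.
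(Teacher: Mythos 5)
Your proposal is correct and follows essentially the same route as the paper: induction on the length of $(u,v)$, peeling off the last letter, and invoking Lemma~\ref{lemlem:transitionsTM} to justify the identity $\big(X_{\val_{2^p}(u')}\big)_a = X_{\val_{2^p}(u'a)}$. The paper's proof is exactly this computation, with the same decomposition and the same use of the auxiliary lemma at the final step.
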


\begin{proof}
We do the proof by induction on $|(u,v)|$. The case $|(u,v)|=0$ is trivial. 
Now let $X\in \{T,B \}$ and let $(ua,vb)\in (A_{2^p}\times A_{2^p})^*$ with $a,b\in A_{2^p}$. We suppose that the result is satisfied for $(u,v)$ and we show that it is also true for $(ua,vb)$. Let $Y=\delta_{\T,2^p}(X,(u,v))$. By induction hypothesis, we have $Y =X_{\val_{2^p}(u)}$. Thus we obtain
\[
	\delta_{\T,2^p}(X,(ua,vb)) 
	=\delta_{\T,2^p}(Y,(a,b))
	=Y_a
	=(X_{\val_{2^p}(u)})_a
	=X_{\val_{2^p}(ua)}
\]
where we have used Lemma~\ref{lemlem:transitionsTM} for the last step.
\end{proof}

\subsection{The automaton $\A_{m,r,b}$}

In this section, we consider an arbitrary integer base $b$. Let
\[
	\A_{m,r,b}=([\![0,m{-}1]\!],0,r,A_b\times A_b,\delta_{m,r,b})
\]
where the (partial) transition function $\delta_{m,r,b}$ is defined as follows: for $i,j\in[\![0,m{-}1]\!]$ and $d,e\in A_b$, we set
\[
	\delta_{m,r,b}(i,(d,e))=j \quad \iff \quad bi + e = md + j.
\]
The DFA $\A_{m,r,b}$ accepts the language $\val_{b}^{-1}\big(\{(n,mn+r)\colon n\in \N\}\big)$. We refer the interested reader to \cite{Waxweiler2009}. For example, the automaton $\A_{6,2,4}$ is depicted in Figure~\ref{fig:A-6,4}. 

\begin{figure}[htb]
\centering
\begin{tikzpicture}[scale=0.75]
\tikzstyle{every node}=[shape=circle, fill=none, draw=black,
minimum size=30pt, inner sep=2pt]
\node(0) at (0,6.5) {$0$};
\node(1) at (3.5,6.5) {$1$};
\node(2) at (7,6.5) {$2$};
\node(3) at (10.5,6.5) {$3$};
\node(4) at (14,6.5) {$4$};
\node(5) at (17.5,6.5) {$5$};

\tikzstyle{every node}=[shape=circle, fill=none, draw=black,
minimum size=25pt, inner sep=2pt]
\node at (7,6.5) {};

\tikzstyle{etiquettedebut}=[very near start,rectangle,fill=black!20,scale=0.9]
\tikzstyle{etiquettemilieu}=[midway,rectangle,fill=black!20,scale=0.9]
\tikzstyle{every path}=[color=black, line width=0.5 pt,scale=0.9]
\tikzstyle{every node}=[shape=circle, minimum size=5pt, inner sep=2pt,scale=0.9]

\draw [->] (-1.5,7.22) to node {} (0); 
\draw [->] (0) to [loop above] node [left,rectangle,fill=black!20,scale=0.9] {$(0,0)$} (0);
\draw [->] (1) to [loop below] node [left,rectangle,fill=black!20,scale=0.9] {$(1,3)$} (1);
\draw [->] (2) to [loop left] node [left,pos=0.2,rectangle,fill=black!20,scale=0.9] {$(1,0)$} (2);
\draw [->] (3) to [loop below] node [right,pos=0.3,rectangle,fill=black!20,scale=0.9] {$(2,3)$} (3);
\draw [->] (4) to [loop above] node [left,rectangle,fill=black!20,scale=0.9] {$(2,0)$} (4);
\draw [->] (5) to [loop above] node [left,rectangle,fill=black!20,scale=0.9] {$(3,3)$} (5);
\draw [->] (0) to [bend left=17] node [sloped,etiquettemilieu] {$(0,1)$} (1);
\draw [->] (0) to [bend left=30] node [sloped,etiquettemilieu] {$(0,2)$} (2);
\draw [->] (0) to [bend left=45] node [sloped,etiquettemilieu] {$(0,3)$} (3);
\draw [->] (1) to [bend left=40] node [sloped,etiquettemilieu] {$(0,0)$} (4);
\draw [->] (1) to [bend left=50] node [sloped,etiquettemilieu] {$(0,1)$} (5);
\draw [->] (1) to [bend left=17] node [sloped,etiquettemilieu] {$(1,2)$} (0);
\draw [->] (2) to [bend left=17] node [sloped,etiquettemilieu] {$(1,1)$} (3);
\draw [->] (2) to [bend left=30] node [sloped,etiquettemilieu] {$(1,2)$} (4);
\draw [->] (2) to [bend left=40] node [sloped,etiquettemilieu] {$(1,3)$} (5);
\draw [->] (3) to [bend left=40] node [sloped,etiquettemilieu] {$(2,0)$} (0);
\draw [->] (3) to [bend left=30] node [sloped,etiquettemilieu] {$(2,1)$} (1);
\draw [->] (3) to [bend left=17] node [sloped,etiquettemilieu] {$(2,2)$} (2);
\draw [->] (4) to [bend left=17] node [sloped,etiquettemilieu] {$(2,1)$} (5);
\draw [->] (4) to [bend left=50] node [sloped,etiquettemilieu] {$(3,2)$} (0);
\draw [->] (4) to [bend left=40] node [sloped,etiquettemilieu] {$(3,3)$} (1);
\draw [->] (5) to [bend left=45] node [sloped,etiquettemilieu] {$(3,0)$} (2);
\draw [->] (5) to [bend left=35] node [sloped,etiquettemilieu] {$(3,1)$} (3);
\draw [->] (5) to [bend left=17] node [sloped,etiquettemilieu] {$(3,2)$} (4);
\end{tikzpicture}
\caption{The automaton $\A_{6,2,4}$ accepts the language $\val_4^{-1}\big(\{(n,6n+2)\colon n\in \N\}\big)$.}
\label{fig:A-6,4}
\end{figure}

Note that the automaton $\A_{m,r,b}$ is not complete (see Remark~\ref{rem:unicitelettrepremcomp}). Also note that there is always a loop labeled by $(0,0)$ on the initial state $0$.

\begin{remark}
\label{rem:unicitelettrepremcomp}
For each $i \in [\![0,m{-}1]\!]$ and $e \in A_b$, there exist unique $d \in A_b$ and $j \in [\![0,m{-}1]\!]$ such that
$\delta_{m,r,b}(i,(d,e))=j$. Indeed, $d$ and $j$ are unique since they are the quotient and remainder of the Euclidean division of $bi+e$ by $m$. We still have to check that $d < b$. We have
\[
	bi + e = md + j \iff d = \frac{bi+e-j}{m}.
\]
Since $i\le m{-}1$, $j \ge 0$ and $e < b$, we have
\[
   	\frac{bi+e-j}{m}< \frac{b(m{-}1)+b}{m}=b.
\]
\end{remark}

\begin{lemma} 
\label{lem:transitionsAmb}
For $i,j\in [\![0,m{-}1]\!]$ and $(u,v) \in (A_b\times A_b)^*$, we have
\[
	\delta_{m,r,b}(i,(u,v))=j \iff b^{|(u,v)|}\, i + \val_b(v) = m\, \val_b(u) + j.
\]
\end{lemma}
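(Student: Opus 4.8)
The plan is to prove the stated equivalence by induction on the length $n=|(u,v)|$ of the input word, reducing the word case to the single-letter definition of $\delta_{m,r,b}$. The only arithmetic facts about $\val_b$ that will be needed are the recursions $\val_b(ud)=b\,\val_b(u)+d$ and $\val_b(ve)=b\,\val_b(v)+e$ for digits $d,e\in A_b$. In the base case $n=0$ we have $u=v=\varepsilon$, so that $\delta_{m,r,b}(i,(\varepsilon,\varepsilon))=i$, while the right-hand equation reads $b^0 i+0=m\cdot 0+j$, i.e.\ $i=j$; the two conditions therefore coincide.

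For the inductive step, I would write a length-$(n+1)$ word as $(ud,ve)$ with $d,e\in A_b$ and $|(u,v)|=n$, and use that $\delta_{m,r,b}(i,(ud,ve))=\delta_{m,r,b}\big(\delta_{m,r,b}(i,(u,v)),(d,e)\big)$ together with the defining relation $\delta_{m,r,b}(k,(d,e))=j\iff bk+e=md+j$. For the direction $\Rightarrow$, set $k=\delta_{m,r,b}(i,(u,v))$; the induction hypothesis gives $b^n i+\val_b(v)=m\,\val_b(u)+k$ and the single-letter definition gives $bk+e=md+j$. Multiplying the first identity by $b$ and substituting the value of $bk$ from the second yields precisely $b^{n+1}i+\val_b(ve)=m\,\val_b(ud)+j$ once $\val_b(ud)$ and $\val_b(ve)$ are expanded as above.

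For the converse, starting from $b^{n+1}i+\val_b(ve)=m\,\val_b(ud)+j$ with $j\in[\![0,m-1]\!]$, I would introduce $k:=b^n i+\val_b(v)-m\,\val_b(u)$ and rewrite the hypothesis (again expanding $\val_b(ud)$ and $\val_b(ve)$) as $bk+e=md+j$. The crucial point, and the step I expect to be the main obstacle, is to verify that $k$ is a legitimate state, i.e.\ that $k\in[\![0,m-1]\!]$; this partiality issue is what prevents the equivalence from being purely formal. It follows from $bk=md+j-e$ together with the bounds $0\le d,e\le b-1$ and $0\le j\le m-1$ (whence $bk\ge-(b-1)$ forces $k\ge 0$, and $bk\le mb-1$ forces $k\le m-1$), by the same kind of range estimate used in Remark~\ref{rem:unicitelettrepremcomp}. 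Once $k$ is known to lie in $[\![0,m-1]\!]$, the identity $b^n i+\val_b(v)=m\,\val_b(u)+k$ and the induction hypothesis give $\delta_{m,r,b}(i,(u,v))=k$, while $bk+e=md+j$ gives $\delta_{m,r,b}(k,(d,e))=j$ by definition, so $\delta_{m,r,b}(i,(ud,ve))=j$, completing the induction.
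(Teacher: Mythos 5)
Your proof is correct, but it runs the induction in the opposite direction from the paper. You peel off the \emph{last} letter read (the least significant digit), writing the word as $(ud,ve)$ and using $\delta_{m,r,b}(i,(ud,ve))=\delta_{m,r,b}\big(\delta_{m,r,b}(i,(u,v)),(d,e)\big)$, whereas the paper peels off the \emph{first} letter read (the most significant digit), writing the word as $(du,ev)$ and using $\delta_{m,r,b}(i,(du,ev))=\delta_{m,r,b}\big(\delta_{m,r,b}(i,(d,e)),(u,v)\big)$. The two routes shift where the nontrivial verification lands. In the paper's prefix decomposition, the delicate point in the converse direction is showing that the equation forces the leading digit to satisfy $d=\DIV(bi+e,m)$, which is done via a chain of nested integer-division identities ($\DIV(\DIV(x,m),b^n)=\DIV(\DIV(x,b^n),m)$, etc.). In your suffix decomposition, the delicate point is instead that the intermediate quantity $k=b^n i+\val_b(v)-m\,\val_b(u)$ is a legitimate state, which you settle by the elementary range estimate $-(b-1)\le bk=md+j-e\le mb-1$; you correctly identify this as the step that keeps the equivalence from being purely formal, and it is exactly the same kind of bound as in Remark~\ref{rem:unicitelettrepremcomp}. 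Both arguments are complete and of comparable length; yours trades the division-algebra manipulations for a slightly more transparent interval check, at the cost of needing the induction hypothesis in both directions (to recover $\delta_{m,r,b}(i,(u,v))=k$ from the identity for $k$), which you do invoke properly.
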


\begin{proof}
We do the proof by induction on $n=|(u,v)|$. If $n$ is equal to $0$, the result is clear. Now let $i,j\in [\![0,m{-}1]\!]$ and let $(du,ev)\in (A_b\times A_b)^*$ with $d,e\in A_b$ and $|(u,v)|=n$. We suppose that the result is satisfied for $(u,v)$ and we show that it is also true for $(du,ev)$. We use the notation $\DIV(x,y)$ and $\MOD(x,y)$ to designate the quotient and the remainder of the Euclidean division of $x$ by $y$ (thus, we have $\DIV(x,y)=\llfloor \frac xy \rrfloor$). 
By definition of the transition function, we have
\[
	\delta_{m,r,b}(i,(du,ev))=j
	 \iff  d=\DIV(bi+e,m) \ \andrm \ \delta_{m,r,b}(\MOD(bi+e,m),(u,v))=j.
\]
By using the induction hypothesis, we have
\begin{align*}
	\delta_{m,r,b}(bi+e-md,(u,v))=j
	& \iff b^n\, (bi+e-md) + \val_b(v)= m\, \val_b(u)+j 		\\
	& \iff b^{n+1}\, i + \val_b(ev)= m\, \val_b(du)+j.
\end{align*}
To be able to conclude the proof, we still have to show that 
\begin{equation}
\label{eq:val}
	b^{n+1}\, i + \val_b(ev)= m\, \val_b(du)+j 
\end{equation}
implies 
\[
	d=\DIV(bi+e,m).
\]
Thus, suppose that \eqref{eq:val} is true. Then
\[
	b^{n+1}\, i+ b^n e + \val_b(v)= m(b^nd+ \val_b(u))+j.
\]
Since $\val_b(u)$ and $\val_b(v)$ are less than $b^n$, $d\ge 0$, $j<m$ and $b^nd+ \val_b(u)\ge 0$, 
we obtain
\begin{align*}
	d 	&= \DIV(b^nd+ \val_b(u),b^n) \\
		&= \DIV(\DIV(b^{n+1}\, i+ b^n e + \val_b(v),m),b^n) \\
		&= \DIV(\DIV(b^{n+1}\, i+ b^n e + \val_b(v),b^n),m) \\
		&= \DIV(b\, i+ e,m)
\end{align*}
as desired.
\end{proof}

\begin{remark} 
\label{rem:unicitepremcomp}
It is easily checked that Remark~\ref{rem:unicitelettrepremcomp} extends from letters to words: for each $i\in[\![0,m{-}1]\!]$ and $v\in A_b^*$, there exist unique $u\in A_b^*$ and $j\in[\![0,m{-}1]\!]$ such that $\delta_{m,r,b}(i,(u,v))=j$. In particular, the word $u$ must have the same length as the word $v$, and hence $\val_b(u)<b^{|v|}$.
\end{remark}

\subsection{The projected automaton $\Pi(\A_{m,r,b})$}

In this section again, $b$ is an arbitrary integer base. We consider the automaton obtained by projecting the label of each transition of $\A_{m,r,b}$ onto its second component. We denote by $\Pi(\A_{m,r,b})$ the automaton obtained thanks to this projection. Thanks to Remark~\ref{rem:unicitelettrepremcomp}, the automaton $\Pi(\A_{m,r,b})$ is deterministic and complete. We denote by $\delta_{m,r,b}^\Pi$ the corresponding transition function. For example, the automaton $\Pi(\A_{6,2,4})$ is depicted in Figure~\ref{projA-6,4}.
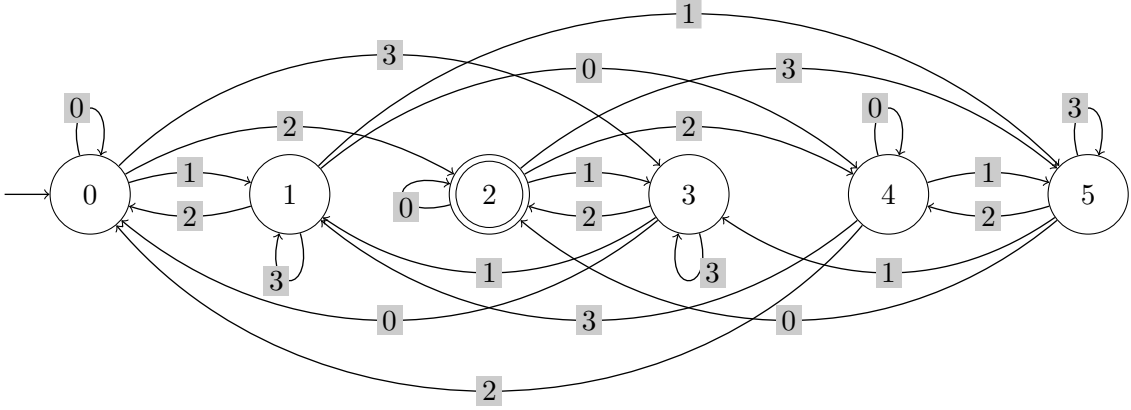
\begin{figure}[htb]
\centering
\begin{tikzpicture}[scale=0.75]
\tikzstyle{every node}=[shape=circle, fill=none, draw=black,
minimum size=30pt, inner sep=2pt]
\node(0) at (0,6.5) {$0$};
\node(1) at (3.5,6.5) {$1$};
\node(2) at (7,6.5) {$2$};
\node(3) at (10.5,6.5) {$3$};
\node(4) at (14,6.5) {$4$};
\node(5) at (17.5,6.5) {$5$};

\tikzstyle{every node}=[shape=circle, fill=none, draw=black,
minimum size=25pt, inner sep=2pt]
\node at (7,6.5) {};

\tikzstyle{etiquettedebut}=[very near start,rectangle,fill=black!20]
\tikzstyle{etiquettemilieu}=[midway,rectangle,fill=black!20]
\tikzstyle{every path}=[color=black, line width=0.5 pt]
\tikzstyle{every node}=[shape=circle, minimum size=5pt, inner sep=2pt]

\draw [->] (-1.5,6.5) to node {} (0); 
\draw [->] (0) to [loop above] node [left,rectangle,fill=black!20] {$0$} (0);
\draw [->] (1) to [loop below] node [left,rectangle,fill=black!20] {$3$} (1);
\draw [->] (2) to [loop left] node [left,pos=0.2,rectangle,fill=black!20] {$0$} (2);
\draw [->] (3) to [loop below] node [right,pos=0.3,rectangle,fill=black!20] {$3$} (3);
\draw [->] (4) to [loop above] node [left,rectangle,fill=black!20] {$0$} (4);
\draw [->] (5) to [loop above] node [left,rectangle,fill=black!20] {$3$} (5);
\draw [->] (0) to [bend left=17] node [sloped,etiquettemilieu] {$1$} (1);
\draw [->] (0) to [bend left=30] node [sloped,etiquettemilieu] {$2$} (2);
\draw [->] (0) to [bend left=45] node [sloped,etiquettemilieu] {$3$} (3);
\draw [->] (1) to [bend left=40] node [sloped,etiquettemilieu] {$0$} (4);
\draw [->] (1) to [bend left=45] node [sloped,etiquettemilieu] {$1$} (5);
\draw [->] (1) to [bend left=17] node [sloped,etiquettemilieu] {$2$} (0);
\draw [->] (2) to [bend left=17] node [sloped,etiquettemilieu] {$1$} (3);
\draw [->] (2) to [bend left=30] node [sloped,etiquettemilieu] {$2$} (4);
\draw [->] (2) to [bend left=40] node [sloped,etiquettemilieu] {$3$} (5);
\draw [->] (3) to [bend left=40] node [sloped,etiquettemilieu] {$0$} (0);
\draw [->] (3) to [bend left=35] node [sloped,etiquettemilieu] {$1$} (1);
\draw [->] (3) to [bend left=17] node [sloped,etiquettemilieu] {$2$} (2);
\draw [->] (4) to [bend left=17] node [sloped,etiquettemilieu] {$1$} (5);
\draw [->] (4) to [bend left=50] node [sloped,etiquettemilieu] {$2$} (0);
\draw [->] (4) to [bend left=40] node [sloped,etiquettemilieu] {$3$} (1);
\draw [->] (5) to [bend left=40] node [sloped,etiquettemilieu] {$0$} (2);
\draw [->] (5) to [bend left=35] node [sloped,etiquettemilieu] {$1$} (3);
\draw [->] (5) to [bend left=17] node [sloped,etiquettemilieu] {$2$} (4);
\end{tikzpicture}
\caption{The projected automaton $\Pi(\A_{6,2,4})$.}
\label{projA-6,4}
\end{figure}

\begin{lemma} 
\label{lem:transitionsPiAmb}
For $i,j\in [\![0,m{-}1]\!]$ and $v \in A_b^*$, we have
\[
	\delta_{m,r,b}^\Pi(i,v)=j \iff 
	b^{|v|}\, i + \val_b(v) \equiv  j\pmod m.
\]
\end{lemma}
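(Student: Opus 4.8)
The plan is to reduce this statement to the already-established Lemma~\ref{lem:transitionsAmb}, which describes the transitions of the unprojected automaton $\A_{m,r,b}$, by combining the definition of the projection with the uniqueness guaranteed by Remark~\ref{rem:unicitepremcomp}. By construction of $\Pi(\A_{m,r,b})$, the equality $\delta_{m,r,b}^\Pi(i,v)=j$ holds exactly when there is some word $u\in A_b^*$ with $|u|=|v|$ such that $\delta_{m,r,b}(i,(u,v))=j$ in the original automaton. Lemma~\ref{lem:transitionsAmb} then rewrites this (using $|(u,v)|=|v|$) as the existence of such a $u$ satisfying
\[
	b^{|v|}\, i + \val_b(v) = m\, \val_b(u) + j.
\]

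For the forward implication I would simply reduce this displayed equation modulo $m$: since $m\,\val_b(u)\equiv 0\pmod m$, any witnessing $u$ yields $b^{|v|}\, i + \val_b(v)\equiv j\pmod m$, which is precisely the asserted congruence.

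For the converse I would invoke Remark~\ref{rem:unicitepremcomp}, which ensures that for the given $i$ and $v$ there exist unique $u\in A_b^*$ and $j'\in[\![0,m{-}1]\!]$ with $\delta_{m,r,b}(i,(u,v))=j'$, and hence $b^{|v|}\, i + \val_b(v) = m\, \val_b(u) + j'$. Reducing this modulo $m$ gives $b^{|v|}\, i + \val_b(v)\equiv j'\pmod m$; comparing with the hypothesis $b^{|v|}\, i + \val_b(v)\equiv j\pmod m$ and using that both $j$ and $j'$ lie in $[\![0,m{-}1]\!]$, I conclude $j=j'$, so that $\delta_{m,r,b}^\Pi(i,v)=j$.

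The argument is essentially bookkeeping, and the only point deserving care — the step I expect to be the main (if minor) obstacle — is ensuring that a congruence, which determines $j$ only modulo $m$, actually pins down a single state of the automaton. This is exactly what the uniqueness in Remark~\ref{rem:unicitepremcomp} supplies, so no new combinatorial work is needed beyond citing it; unlike Lemma~\ref{lem:transitionsAmb}, no induction on $|v|$ is required here, since the inductive content has already been absorbed into that lemma.
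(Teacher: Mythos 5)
Your proof is correct, and the forward direction coincides exactly with the paper's. The converse, however, takes a slightly different route. The paper, given the congruence $b^{|v|}i+\val_b(v)=m\ell+j$, \emph{explicitly constructs} the witnessing word $u=0^{|v|-|\rep_b(\ell)|}\rep_b(\ell)$, first verifying the bound $\ell<b^{|v|}$ so that $u$ really has length $|v|$, and then concludes via Lemma~\ref{lem:transitionsAmb}. You instead invoke Remark~\ref{rem:unicitepremcomp} to get, for free, the existence of \emph{some} $u$ and $j'$ with $\delta_{m,r,b}(i,(u,v))=j'$ (i.e.\ the completeness of $\Pi(\A_{m,r,b})$ on the word $v$), compute $j'$ modulo $m$ from Lemma~\ref{lem:transitionsAmb}, and match it against $j$ using that both lie in $[\![0,m{-}1]\!]$. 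Both arguments are sound; the trade-off is that your version pushes the construction of the witness into Remark~\ref{rem:unicitepremcomp}, which the paper only asserts as ``easily checked'' (it does follow by iterating the letter-level Remark~\ref{rem:unicitelettrepremcomp}), whereas the paper's proof of the converse is self-contained and in effect re-derives that existence claim, bound and all. Your observation that no induction on $|v|$ is needed is accurate in both treatments, since the inductive work lives entirely in Lemma~\ref{lem:transitionsAmb}.
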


\begin{proof}
Let  $i,j\in [\![0,m{-}1]\!]$ and $v \in A_b^*$. If $\delta_{m,r,b}^\Pi(i,v)=j$, then there exists a word $u$ of the same length as $v$ such that $\delta_{m,r,b}(i,(u,v))=j$. By Lemma~\ref{lem:transitionsAmb}, we get $b^{|v|}\, i + \val_b(v) \equiv  j\pmod m$. Conversely, suppose that there exists some $\ell\in\N$ such that $b^{|v|}\, i + \val_b(v) = m\ell + j$. Since $i\le m-1$, $\val_b(v)<b^{|v|}$ and $j\ge0$, we necessarily have 
\[
	\ell=\frac{b^{|v|}i+\val_b(v)-j}{m}<\frac{b^{|v|}(m-1)+b^{|v|}}{m}=b^{|v|}.
\] 
Hence $|\rep_b(\ell)|\le |v|$ and the word $u=0^{|v|-|\rep_b(\ell)|}\rep_b(\ell)$ has length $|v|$ and is such that $\val_b(u)=\ell$. The conclusion follows from Lemma~\ref{lem:transitionsAmb}.
\end{proof}

\subsection{The product automaton $\A_{m,r,2^p}\times \A_{\T,2^p}$}

In this section, we study the product automaton $\A_{m,r,2^p} \times \A_{\T,2^p}$. Since the states of $\A_{m,r,2^p}$ are numbered from $0$ to $m{-}1$ and those of $\A_{\T,2^p}$ are $T$ and $B$, we denote the states of the product automaton by
\[
	(0,T),\ldots,(m{-}1,T) \andrm (0,B) , \ldots , (m{-}1, B).
\]
The transitions of $\A_{m,r,2^p} \times \A_{\T,2^p}$ are defined as follows. For $i,j\in[\![0,m{-}1]\!]$, $X,Y \in \{T,B\}$ and $d,e \in A_{2^p}$, there is a transition labeled by $(d,e)$ from the state $(i,X)$ to the state $(j,Y)$ if and only if
\[
	2^p i + e = md + j \quad \andrm \quad 
	Y = 	X_d.
\]
We denote by $\delta_\times$ the (partial) transition function of this product automaton. The initial state is $(0,T)$ and the only final state is $(r,T)$.

\begin{lemma}
\label{lem:transitionsProd}
For all $i,j\in[\![0,m{-}1]\!]$, $X ,Y \in \{T,B\}$ and $(u,v)\in (A_{2^p}\times A_{2^p})^*$, we have $\delta_\times((i,X),(u,v))=(j,Y)$ if and only if
\[
	2^{p\,|(u,v)|}\, i + \val_{2^p}(v) = m\, \val_{2^p}(u) + j 
	\quad \andrm \quad
	Y= X_{\val_{2^p}(u)}.
\]
\end{lemma}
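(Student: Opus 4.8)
The plan is to exploit the fact that the transition function of a product automaton decomposes component-wise, and then to invoke the two transition lemmas already established for the factors. Indeed, by the very definition of $\A_{m,r,2^p}\times\A_{\T,2^p}$, its single-letter transition function sends $((i,X),(d,e))$ to the pair $(\delta_{m,r,2^p}(i,(d,e)),\delta_{\T,2^p}(X,(d,e)))$. A straightforward induction on the length of $(u,v)$ then lifts this decomposition from letters to words, giving
\[
	\delta_\times((i,X),(u,v))=\big(\delta_{m,r,2^p}(i,(u,v)),\,\delta_{\T,2^p}(X,(u,v))\big).
\]
Consequently, $\delta_\times((i,X),(u,v))=(j,Y)$ holds if and only if both $\delta_{m,r,2^p}(i,(u,v))=j$ and $\delta_{\T,2^p}(X,(u,v))=Y$.

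The second step is then purely a matter of translating these two conditions via the earlier results. Applying Lemma~\ref{lem:transitionsAmb} with $b=2^p$ rewrites the first condition as
\[
	2^{p\,|(u,v)|}\, i + \val_{2^p}(v) = m\, \val_{2^p}(u) + j,
\]
while Lemma~\ref{lem:transitionsTM} rewrites the second as $Y=X_{\val_{2^p}(u)}$. Combining the two equivalences yields exactly the claimed characterization, and the proof is complete.

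Alternatively, one could give a self-contained induction on $|(u,v)|$ mirroring the proofs of the two component lemmas: the base case $|(u,v)|=0$ is immediate, and the inductive step for $(ua,vb)$ uses the single-letter defining relation $2^p i + e = md + j$ together with Lemma~\ref{lemlem:transitionsTM} to update the $\{T,B\}$-component. I expect the only mild obstacle to lie in this self-contained route, namely threading the Euclidean-division uniqueness underlying $\delta_{m,r,2^p}$ correctly through the inductive step (as is done in the proof of Lemma~\ref{lem:transitionsAmb}). Since both component lemmas are already at our disposal, however, the cleanest path is simply to appeal to the component-wise decomposition above, which reduces the statement to a one-line combination of Lemmas~\ref{lem:transitionsAmb} and~\ref{lem:transitionsTM}.
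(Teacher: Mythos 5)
Your proposal is correct and follows exactly the paper's route: the paper's proof is the one-line remark that it suffices to combine Lemmas~\ref{lem:transitionsTM} and~\ref{lem:transitionsAmb}, which is precisely the component-wise decomposition you spell out. Your elaboration of why the product transition function factors through the two components is a faithful (and slightly more detailed) account of the same argument.
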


\begin{proof}
It suffices to combine Lemmas~\ref{lem:transitionsTM} and~\ref{lem:transitionsAmb}.
\end{proof}

In Figure~\ref{fig:product}, we have depicted the automaton $\A_{6,2,4} \times \A_{\T,4}$, as well as the automata $\A_{6,2,4}$ and $\A_{\T,4}$, which we have placed in such a way that the labels of the product automaton can be easily deduced. Here and in the next figures, states are named $iX$ instead of $(i,X)$ for clarity. 

\begin{sidewaysfigure}
\begin{center}
\begin{minipage}{\linewidth} 
\begin{tikzpicture}[scale=0.8]
\tikzstyle{every node}=[shape=circle, fill=none, draw=black,
minimum size=30pt, inner sep=2pt]
\node(0) at (0,6.5) {$0$};
\node(1) at (3.5,6.5) {$1$};
\node(2) at (7,6.5) {$2$};
\node(3) at (10.5,6.5) {$3$};
\node(4) at (14,6.5) {$4$};
\node(5) at (17.5,6.5) {$5$};
\node(T) at (-5.5,0) {$T$};
\node(B) at (-5.5,-4.5) {$B$};
\node(0T) at (0,0) {$0T$};
\node(1T) at (3.5,0) {$1T$};
\node(2T) at (7,0) {$2T$};
\node(3T) at (10.5,0) {$3T$};
\node(4T) at (14,0) {$4T$};
\node(5T) at (17.5,0) {$5T$};
\node(0B) at (0,-4.5) {$0B$};
\node(1B) at (3.5,-4.5) {$1B$};
\node(2B) at (7,-4.5) {$2B$};
\node(3B) at (10.5,-4.5) {$3B$};
\node(4B) at (14,-4.5) {$4B$};
\node(5B) at (17.5,-4.5) {$5B$};
\tikzstyle{every node}=[shape=circle, fill=none, draw=black,
minimum size=25pt, inner sep=2pt]
\node at (7,0) {};
\node at (7,6.5) {};
\node at (-5.5,0) {};

\tikzstyle{etiquettedebut}=[very near start,rectangle,fill=black!20]
\tikzstyle{etiquettemilieu}=[midway,rectangle,fill=black!20]
\tikzstyle{every path}=[color=black, line width=0.5 pt]
\tikzstyle{every node}=[shape=circle, minimum size=5pt, inner sep=2pt]

\draw [->] (-1.5,6.5) to node {} (0); 
\draw [->] (0) to [loop above] node [left,rectangle,fill=black!20] {$(0,0)$} (0);
\draw [->] (1) to [loop below] node [left,rectangle,fill=black!20] {$(1,3)$} (1);
\draw [->] (2) to [loop left] node [left,rectangle,fill=black!20] {$(1,0)$} (2);
\draw [->] (3) to [loop below] node [left,rectangle,fill=black!20] {$(2,3)$} (3);
\draw [->] (4) to [loop above] node [left,rectangle,fill=black!20] {$(2,0)$} (4);
\draw [->] (5) to [loop above] node [left,rectangle,fill=black!20] {$(3,3)$} (5);
\draw [->] (0) to [bend left=15] node [sloped,etiquettemilieu] {$(0,1)$} (1);
\draw [->] (0) to [bend left=30] node [sloped,etiquettemilieu] {$(0,2)$} (2);
\draw [->] (0) to [bend left=45] node [sloped,etiquettedebut] {$(0,3)$} (3);
\draw [->] (1) to [bend left=30] node [sloped,pos=0.37,rectangle,fill=black!20] {$(0,0)$} (4);
\draw [->] (1) to [bend left=45] node [sloped,etiquettedebut] {$(0,1)$} (5);
\draw [->] (1) to [bend left=15] node [sloped,etiquettemilieu] {$(1,2)$} (0);
\draw [->] (2) to [bend left=15] node [sloped,etiquettemilieu] {$(1,1)$} (3);
\draw [->] (2) to [bend left=30] node [sloped,etiquettemilieu] {$(1,2)$} (4);
\draw [->] (2) to [bend left=45] node [sloped,etiquettemilieu] {$(1,3)$} (5);
\draw [->] (3) to [bend left=45] node [sloped,etiquettemilieu] {$(2,0)$} (0);
\draw [->] (3) to [bend left=40] node [sloped,pos=0.45,rectangle,fill=black!20] {$(2,1)$} (1);
\draw [->] (3) to [bend left=15] node [sloped,etiquettemilieu] {$(2,2)$} (2);
\draw [->] (4) to [bend left=15] node [sloped,etiquettemilieu] {$(2,1)$} (5);
\draw [->] (4) to [bend left=45] node [sloped,etiquettemilieu] {$(3,2)$} (0);
\draw [->] (4) to [bend left=40] node [sloped,pos=0.5,rectangle,fill=black!20] {$(3,3)$} (1);
\draw [->] (5) to [bend left=40] node [sloped,pos=0.45,rectangle,fill=black!20] {$(3,0)$} (2);
\draw [->] (5) to [bend left=35] node [sloped,etiquettemilieu] {$(3,1)$} (3);
\draw [->] (5) to [bend left=15] node [sloped,etiquettemilieu] {$(3,2)$} (4);

\draw [->] (-7,0) to node {} (T); 
\draw [->] (T) to [loop above] node [etiquettemilieu] {\begin{tabular}{c}
$(0,0),(0,1),(0,2),(0,3)$ \\
$(3,0),(3,1),(3,2),(3,3)$ \\
\end{tabular}} (T);
\draw [->] (B) to [loop below] node [etiquettemilieu] {\begin{tabular}{c}
$(0,0),(0,1),(0,2),(0,3)$ \\
$(3,0),(3,1),(3,2),(3,3)$ \\
\end{tabular}} (B);
\draw [->] (T) to [bend right=30] node [left=-0.3cm,etiquettemilieu] {\begin{tabular}{c}
$(1,0),(1,1),$\\$(1,2),(1,3)$ \\
$(2,0),(2,1),$\\$(2,2),(2,3)$ \\
\end{tabular}} (B);
\draw [->] (B) to [bend right=30] node [right=-0.3cm,etiquettemilieu] {\begin{tabular}{c}
$(1,0),(1,1),$\\$(1,2),(1,3)$ \\
$(2,0),(2,1),$\\$(2,2),(2,3)$ \\
\end{tabular}} (T);

\draw [->] (-1.5,0) to node {} (0T); 
\draw [->] (0T) to [loop above] node [] {} (0T);
\draw [->] (5T) to [loop above] node [] {} (5T);
\draw [->] (0B) to [loop below] node [] {} (0B);
\draw [->] (5B) to [loop below] node [] {} (5B);
\draw [->] (0T) to [] node [] {} (1T);
\draw [->] (0T) to [bend left=20] node [] {} (2T);
\draw [->] (0T) to [bend left=25] node [] {} (3T);
\draw [->] (1T) to [bend left=20] node [] {} (4T);
\draw [->] (1T) to [bend left=30] node [] {} (5T);
\draw [->] (1T) to [] node [] {} (0B); 
\draw [->] (1T) to [bend left=15] node [] {} (1B);
\draw [->] (2T) to [bend left=15] node [] {} (2B);
\draw [->] (2T) to [bend left=5] node [] {} (3B);
\draw [->] (2T) to [] node [] {} (4B); 
\draw [->] (2T) to [] node [] {} (5B);
\draw [->] (3T) to [] node [] {} (0B);
\draw [->] (3T) to [] node [] {} (1B);
\draw [->] (3T) to [bend left=5] node [] {} (2B);
\draw [->] (3T) to [bend left=15] node [] {} (3B); 
\draw [->] (4T) to [bend left=15] node [] {} (4B);
\draw [->] (4T) to [] node [] {} (5B);
\draw [->] (4T) to [bend right=35] node [] {} (0T);
\draw [->] (4T) to [bend right=25] node [] {} (1T); 
\draw [->] (5T) to [bend right=25] node [] {} (2T);
\draw [->] (5T) to [bend right=20] node [] {} (3T);
\draw [->] (5T) to [] node [] {} (4T); 

\draw [->] (0B) to [] node [] {} (1B); 
\draw [->] (0B) to [bend right=20] node [] {} (2B);
\draw [->] (0B) to [bend right=25] node [] {} (3B);
\draw [->] (1B) to [bend right=20] node [] {} (4B);
\draw [->] (1B) to [bend right=30] node [] {} (5B); 
\draw [->] (1B) to [] node [] {} (0T);
\draw [->] (1B) to [bend left=15] node [] {} (1T);
\draw [->] (2B) to [bend left=15] node [] {} (2T);
\draw [->] (2B) to [bend left=5] node [] {} (3T); 
\draw [->] (2B) to [] node [] {} (4T);
\draw [->] (2B) to [] node [] {} (5T);
\draw [->] (3B) to [] node [] {} (0T); 
\draw [->] (3B) to [] node [] {} (1T);
\draw [->] (3B) to [bend left=5] node [] {} (2T);
\draw [->] (3B) to [bend left=15] node [] {} (3T);
\draw [->] (4B) to [bend left=15] node [] {} (4T);
\draw [->] (4B) to [] node [] {} (5T); 
\draw [->] (4B) to [bend left=35] node [] {} (0B);
\draw [->] (4B) to [bend left=25] node [] {} (1B);
\draw [->] (5B) to [bend left=25] node [] {} (2B); 
\draw [->] (5B) to [bend left=20] node [] {} (3B);
\draw [->] (5B) to [] node [] {} (4B);
\end{tikzpicture}
\caption{The product automaton $\A_{6,2,4} \times \A_{\T,4}$}
\label{fig:product}
\end{minipage}
\end{center}
\end{sidewaysfigure}
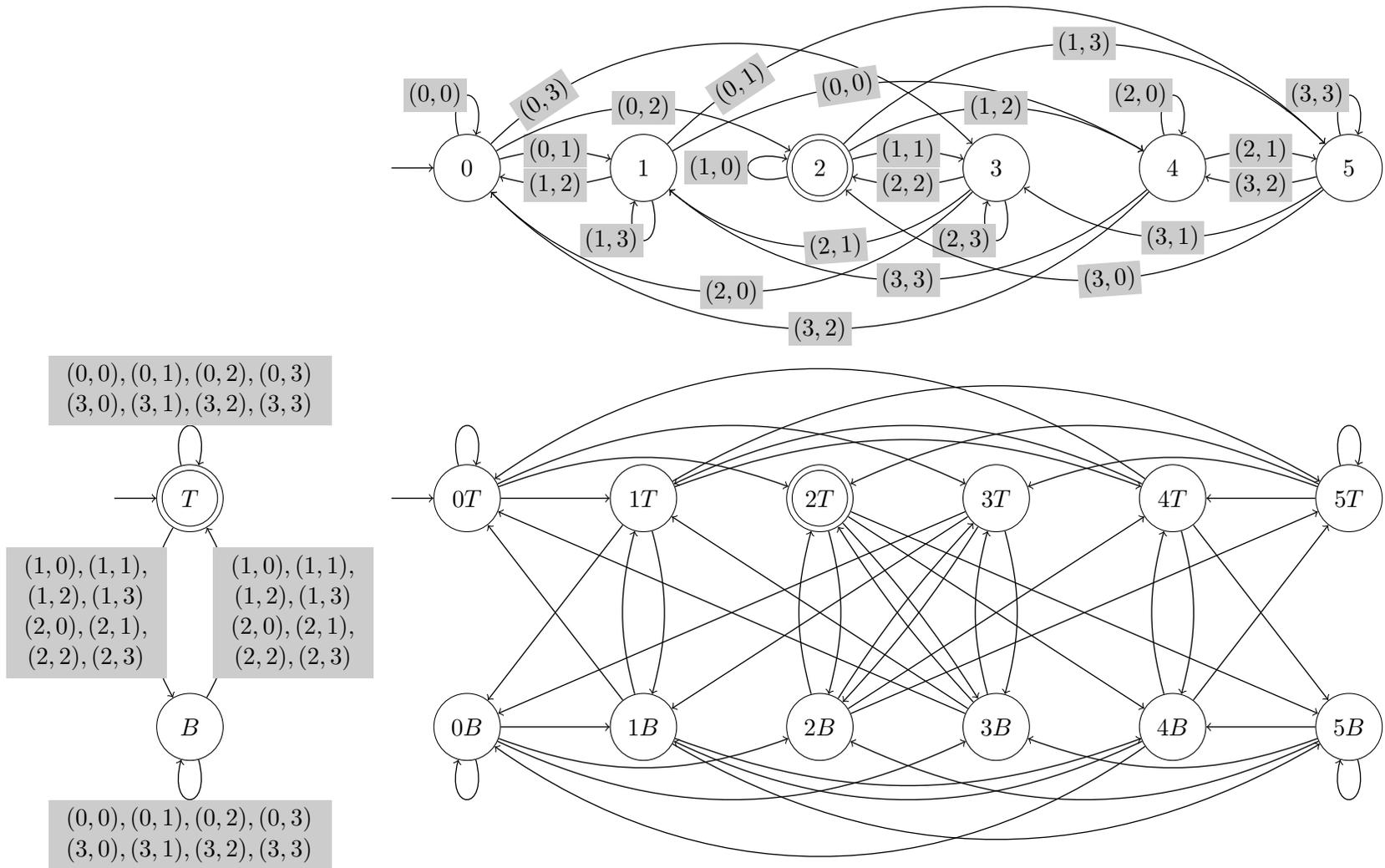

\subsection{The projection $\Pi \left( \A_{m,r,2^p} \times \A_{\T,2^p} \right)$ of the product automaton}

Now, we provide a DFA accepting the language $\val_{2^p}^{-1}(m\T+r)$. This automaton is denoted by $\Pi \left( \A_{m,r,2^p} \times \A_{\T,2^p} \right)$ and is defined from the automaton $\A_{m,r,2^p} \times \A_{\T,2^p}$ by only keeping the second component of the label of each transition. Formally, the states of $\Pi \left( \A_{m,r,2^p} \times \A_{\T,2^p} \right)$ are 
\[
	(0,T), \ldots , (m{-}1, T) \andrm (0,B) , \ldots , (m{-}1, B),
\]
the initial state is $(0,T)$, the only final state is $(r,T)$, and the transitions are defined as follows. For $i,j \in [\![0,m{-}1]\!]$, $X,Y \in \{T,B\}$ and $e\in A_{2^p}$, there is a transition labeled by $e$ from the state $(i,X)$ to the state $(j,Y)$ if and only if there exists $d \in A_{2^p}$ such that
\[
	2^p i + e = md + j \quad \andrm \quad 
	Y = 	X_d.
\]
We denote by $\delta_\times^\Pi$ the (partial) transition function of this product automaton. 

\begin{example}
The automata $\A_{6,2,4} \times \A_{\T,4}$ and $\Pi \left( \A_{6,2,4} \times \A_{\T,4} \right)$ are depicted in Figures~\ref{fig:product} and~\ref{fig:projected-aut} respectively. In Figure~\ref{fig:projected-aut}, all edges labeled by $0$ ($1,2$ and $3$ respectively) are represented in black (blue, red and green respectively).
\begin{figure}[htb]
\centering
\begin{tikzpicture}[scale=0.75]
\tikzstyle{every node}=[shape=circle, fill=none, draw=black,
minimum size=30pt, inner sep=2pt]
\node(0T) at (0,0) {$0T$};
\node(1T) at (3.5,0) {$1T$};
\node(2T) at (7,0) {$2T$};
\node(3T) at (10.5,0) {$3T$};
\node(4T) at (14,0) {$4T$};
\node(5T) at (17.5,0) {$5T$};
\node(0B) at (0,-4.5) {$0B$};
\node(1B) at (3.5,-4.5) {$1B$};
\node(2B) at (7,-4.5) {$2B$};
\node(3B) at (10.5,-4.5) {$3B$};
\node(4B) at (14,-4.5) {$4B$};
\node(5B) at (17.5,-4.5) {$5B$};
\tikzstyle{every node}=[shape=circle, fill=none, draw=black,
minimum size=25pt, inner sep=2pt]
\node at (7,0) {};

\tikzstyle{etiquettedebut}=[very near start,rectangle,fill=black!20]
\tikzstyle{etiquettemilieu}=[midway,rectangle,fill=black!20]
\tikzstyle{every path}=[color=black, line width=0.5 pt]
\tikzstyle{every node}=[shape=circle, minimum size=5pt, inner sep=2pt]
\draw [->] (-1.5,0) to node {} (0T); 
\draw [->] (0T) to [loop above] node [] {$0$} (0T);
\draw [green,->] (5T) to [loop above] node [] {} (5T);
\draw [->] (0B) to [loop below] node [] {} (0B);
\draw [green,->] (5B) to [loop below] node [] {} (5B);
\draw [blue,->] (0T) to [] node [above=-0.1] {$1$} (1T);
\draw [red,->] (0T) to [bend left=20] node [above=-0.1] {$2$} (2T);
\draw [green,->] (0T) to [bend left=25] node [above] {$3$} (3T);
\draw [->] (1T) to [bend left=20] node [] {} (4T);
\draw [blue,->] (1T) to [bend left=30] node [] {} (5T);
\draw [red,->] (1T) to [] node [] {} (0B);
\draw [green,->] (1T) to [bend left=15] node [] {} (1B);
\draw [->] (2T) to [bend left=15] node [] {} (2B);
\draw [blue,->] (2T) to [bend left=5] node [] {} (3B);
\draw [red,->] (2T) to [] node [] {} (4B);
\draw [green,->] (2T) to [] node [] {} (5B);
\draw [->] (3T) to [] node [] {} (0B);
\draw [blue,->] (3T) to [] node [] {} (1B);
\draw [red,->] (3T) to [bend left=5] node [] {} (2B);
\draw [green,->] (3T) to [bend left=15] node [] {} (3B);
\draw [->] (4T) to [bend left=15] node [] {} (4B);
\draw [blue,->] (4T) to [] node [] {} (5B);
\draw [red,->] (4T) to [bend right=35] node [] {} (0T);
\draw [green,->] (4T) to [bend right=25] node [] {} (1T);
\draw [->] (5T) to [bend right=25] node [] {} (2T);
\draw [blue,->] (5T) to [bend right=20] node [] {} (3T);
\draw [red,->] (5T) to [] node [] {} (4T);
\draw [blue,->] (0B) to [] node [] {} (1B);
\draw [red,->] (0B) to [bend right=20] node [] {} (2B);
\draw [green,->] (0B) to [bend right=25] node [] {} (3B);
\draw [->] (1B) to [bend right=20] node [] {} (4B);
\draw [blue,->] (1B) to [bend right=30] node [] {} (5B);
\draw [red,->] (1B) to [] node [] {} (0T);
\draw [green,->] (1B) to [bend left=15] node [] {} (1T);
\draw [->] (2B) to [bend left=15] node [] {} (2T);
\draw [blue,->] (2B) to [bend left=5] node [] {} (3T);
\draw [red,->] (2B) to [] node [] {} (4T);
\draw [green,->] (2B) to [] node [] {} (5T);
\draw [->] (3B) to [] node [] {} (0T);
\draw [blue,->] (3B) to [] node [] {} (1T);
\draw [red,->] (3B) to [bend left=5] node [] {} (2T);
\draw [green,->] (3B) to [bend left=15] node [] {} (3T);
\draw [->] (4B) to [bend left=15] node [] {} (4T);
\draw [blue,->] (4B) to [] node [] {} (5T);
\draw [red,->] (4B) to [bend left=35] node [] {} (0B);
\draw [green,->] (4B) to [bend left=25] node [] {} (1B);
\draw [->] (5B) to [bend left=25] node [] {} (2B);
\draw [blue,->] (5B) to [bend left=20] node [] {} (3B);
\draw [red,->] (5B) to [] node [] {} (4B);
\end{tikzpicture}
\caption{The projected automaton $\Pi \left( \A_{6,2,4} \times \A_{\T,4} \right)$.}
\label{fig:projected-aut}
\end{figure}
\end{example}

\begin{lemma}
\label{lem:transitionsProd-proj}
For all $i,j\in[\![0,m{-}1]\!]$, $X ,Y \in \{T,B\}$ and $v\in A_{2^p}^*$, we have
$\delta_\times^\Pi((i,X),v)=(j,Y)$ if and only if there exists $\ell\in\N$ 
such that
\[
	2^{p\,|v|}\, i + \val_{2^p}(v) = m\ell + j 
	\quad \andrm \quad
	Y= X_\ell.
\]
\end{lemma}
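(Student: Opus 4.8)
The plan is to treat this statement as the projection counterpart of Lemma~\ref{lem:transitionsProd}, exactly as Lemma~\ref{lem:transitionsPiAmb} played this role for Lemma~\ref{lem:transitionsAmb}. By definition of the projection, a transition $\delta_\times^\Pi((i,X),v)=(j,Y)$ in $\Pi(\A_{m,r,2^p}\times\A_{\T,2^p})$ holds precisely when there exists a word $u$ of the same length as $v$ with $\delta_\times((i,X),(u,v))=(j,Y)$ in the product automaton. So the whole argument amounts to translating the existence of such a first component $u$ into the existence of the integer $\ell$, under the dictionary $\ell=\val_{2^p}(u)$.

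For the forward direction I would simply invoke Lemma~\ref{lem:transitionsProd}. If $\delta_\times^\Pi((i,X),v)=(j,Y)$, pick $u$ with $\delta_\times((i,X),(u,v))=(j,Y)$; then Lemma~\ref{lem:transitionsProd} yields $2^{p\,|v|}\,i+\val_{2^p}(v)=m\,\val_{2^p}(u)+j$ and $Y=X_{\val_{2^p}(u)}$, so setting $\ell=\val_{2^p}(u)$ produces the desired $\ell$.

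The converse is where the real work lies, and the crux is producing a genuine word $u$ of length $|v|$ realizing a given $\ell$. Starting from $\ell$ with $2^{p\,|v|}\,i+\val_{2^p}(v)=m\ell+j$ and $Y=X_\ell$, I would first check that $\ell<2^{p\,|v|}$, using $i\le m-1$, $\val_{2^p}(v)<2^{p\,|v|}$ and $j\ge 0$:
\[
	\ell=\frac{2^{p\,|v|}\,i+\val_{2^p}(v)-j}{m}<\frac{2^{p\,|v|}(m-1)+2^{p\,|v|}}{m}=2^{p\,|v|}.
\]
Consequently $|\rep_{2^p}(\ell)|\le|v|$, and padding with leading zeroes produces a word $u=0^{|v|-|\rep_{2^p}(\ell)|}\rep_{2^p}(\ell)$ of length exactly $|v|$ with $\val_{2^p}(u)=\ell$. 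Feeding this $u$ back into Lemma~\ref{lem:transitionsProd} gives $\delta_\times((i,X),(u,v))=(j,Y)$, hence $\delta_\times^\Pi((i,X),v)=(j,Y)$. The only mildly delicate point is this length-and-bound bookkeeping, which is entirely parallel to the proof of Lemma~\ref{lem:transitionsPiAmb}; I therefore do not expect any genuine obstacle beyond carefully matching the word $u$ to the integer $\ell$.
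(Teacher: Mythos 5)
Your proof is correct and follows exactly the paper's own route: it reduces the claim to Lemma~\ref{lem:transitionsProd} via the dictionary $\ell=\val_{2^p}(u)$, and for the converse constructs $u$ by bounding $\ell<2^{p|v|}$ and padding with leading zeroes, which is precisely the argument the paper borrows from the proof of Lemma~\ref{lem:transitionsPiAmb}. Your write-up is in fact more detailed than the paper's, which simply cites that earlier proof.
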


\begin{proof}
We have $\delta_\times^\Pi((i,X),v)=(j,Y)$ if and only if there exists some word $u$ over $A_{2^p}$ of the same length as $v$ such that $\delta_\times((i,X),(u,v))=(j,Y)$. Take $\ell=\val_{2^p}(u)$. The conclusion follows from Lemma~\ref{lem:transitionsProd} and a similar argument as in the proof of Lemma~\ref{lem:transitionsPiAmb}.
\end{proof}

\begin{remark}
\label{rem:2^pn}
Note that in the statement of Lemma~\ref{lem:transitionsProd-proj}, the integer $\ell$ is necessarily less than $2^{p|v|}$. This is due to the fact that if $v$ is the label of some path in the projected automaton $\Pi \left(\A_{m,r,2^p} \times \A_{\T,2^p} \right)$, there must be a word $u$ of the same length $\ell$ as $v$ such that the pair $(u,v)$ is the label of a path in the automaton $\A_{m,r,2^p} \times \A_{\T,2^p}$. This can be deduced directly from the computation: if $2^{p\,|v|}\, i + \val_{2^p}(v) = m\ell + j$ (with $i,j\in[\![0,m{-}1]\!]$) then 
\[
	\ell=\frac{2^{p\,|v|}\, i + \val_{2^p}(v)-j}{m}<\frac{2^{p\,|v|} (i +1)}{m}\le 2^{p\,|v|}.
\]
\end{remark}

\section{Properties of the intermediate automata}
\label{sec:properties-automata}

Now we prove some properties of the automata $\A_{\T,2^p}$, $\A_{m,r,b}$, $\Pi(\A_{m,r,b})$, $\A_{\T,2^p}\times\A_{m,r,b}$ and $\Pi(\A_{\T,2^p}\times\A_{m,r,b})$ that will be useful for our concerns.

\subsection{Properties of $\A_{\T,2^p}$}

\begin{lemma}
\label{lem:TM}
For all $X,Y \in \{T,B\}$ and $(u,v)\in (A_{2^p}\times A_{2^p})^*$, we have 
\[
	\delta_{\T,2^p}(X,(u,v))=Y\iff \delta_{\T,2^p}(\overline{X},(u,v))=\overline{Y}.
\]
\end{lemma}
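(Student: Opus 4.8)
The plan is to reduce the statement to a purely combinatorial property of the overline operation on the two-element set $\{T,B\}$, using the explicit description of the transition function already obtained in Lemma~\ref{lem:transitionsTM}. By that lemma, for any state $X$ and any pair $(u,v)$ of words of equal length, we have $\delta_{\T,2^p}(X,(u,v))=X_{\val_{2^p}(u)}$. Writing $n=\val_{2^p}(u)$, the claimed equivalence becomes
\[
	X_n=Y\iff \overline{X}_n=\overline{Y},
\]
so everything boils down to understanding how the unary operation $X\mapsto X_n$ interacts with the involution $X\mapsto\overline{X}$.

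First I would record the key identity
\[
	\overline{X_n}=\overline{X}_n \quad \text{for all } X\in\{T,B\}\andrm n\in\N.
\]
This is checked by the two cases in the definition of $X_n$: if $n\in\T$ then both sides equal $\overline{X}$, and if $n\notin\T$ then both sides equal $\overline{\overline{X}}=X$, where I use that the overline operation is an involution, i.e.\ $\overline{\overline{X}}=X$.

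With this identity in hand, the equivalence is immediate. Since overline is a bijection of $\{T,B\}$, we have $X_n=Y$ if and only if $\overline{X_n}=\overline{Y}$; combining this with $\overline{X_n}=\overline{X}_n$ yields $X_n=Y \iff \overline{X}_n=\overline{Y}$, which is exactly the desired statement after substituting back $n=\val_{2^p}(u)$ and applying Lemma~\ref{lem:transitionsTM} to both $\delta_{\T,2^p}(X,(u,v))$ and $\delta_{\T,2^p}(\overline{X},(u,v))$.

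There is essentially no hard step here; the only point requiring any care is the commutation identity $\overline{X_n}=\overline{X}_n$, and even that is a two-line case distinction. Conceptually, the lemma merely records that the automaton $\A_{\T,2^p}$ is invariant under the symmetry swapping its two states $T$ and $B$, a fact that is already visible directly from Figure~\ref{fig:aut-TM-2-4-couples}.
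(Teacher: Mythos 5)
Your proof is correct and follows the same route as the paper, which simply states that the lemma ``directly follows from Lemma~\ref{lem:transitionsTM}''; you have merely spelled out the elementary verification of the commutation identity $\overline{X_n}=\overline{X}_n$ that the paper leaves implicit.
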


\begin{proof}
This directly follows from Lemma~\ref{lem:transitionsTM}.
\end{proof}

\begin{proposition}
\label{prop:AT}
The automaton $\A_{\T,2^p}$ is complete, accessible, coaccessible and has disjoint states. In particular, it is the minimal automaton of $\val_{2^p}^{-1}(\T\times \N)$.
\end{proposition}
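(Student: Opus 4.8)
The plan is to verify the four listed properties directly from the explicit description of $\A_{\T,2^p}$, and then invoke the characterization of minimal automata recalled in Section~\ref{sec:basics}: a DFA is minimal if and only if it is complete, reduced and accessible, together with the observation that a coaccessible DFA with disjoint states is reduced.

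First I would establish \emph{completeness}. By construction, for every state $X\in\{T,B\}$ and every letter $(a,b)\in A_{2^p}\times A_{2^p}$, the transition $\delta_{\T,2^p}(X,(a,b))=X_a$ is defined, so the transition function is total and $\A_{\T,2^p}$ is complete. Next, \emph{accessibility}: the initial state is $T$, and since $1\notin\T$ we have $\delta_{\T,2^p}(T,(1,0))=T_1=B$, so $B$ is reachable from $T$; both states are thus accessible. For \emph{coaccessibility}, I would exhibit for each state a word it accepts. The final state is $T$, so $L_T\ne\emptyset$ (it contains $\varepsilon$); and reading $(1,0)$ from $B$ lands in $T$, so $L_B\ne\emptyset$ as well. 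Hence every state is coaccessible.

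The substantive step is to prove that $\A_{\T,2^p}$ has \emph{disjoint states}, i.e.\ $L_T\cap L_B=\emptyset$. Here I would use Lemma~\ref{lem:transitionsTM}: for any $(u,v)$, we have $\delta_{\T,2^p}(T,(u,v))=T_{\val_{2^p}(u)}$ and $\delta_{\T,2^p}(B,(u,v))=B_{\val_{2^p}(u)}$. Since the unique final state is $T$, a word $(u,v)$ is accepted from $T$ exactly when $T_{\val_{2^p}(u)}=T$, that is when $\val_{2^p}(u)\in\T$, whereas it is accepted from $B$ exactly when $B_{\val_{2^p}(u)}=T$, that is when $\val_{2^p}(u)\notin\T$. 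These two conditions are mutually exclusive for any fixed $(u,v)$, so no word is accepted from both states and $L_T\cap L_B=\emptyset$. Equivalently, one can phrase this via Lemma~\ref{lem:TM}: the languages accepted from $T$ and from $B$ are related by the bijection $Y\mapsto\overline{Y}$ on target states, which swaps acceptance, forcing disjointness.

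Having verified all four properties, I would conclude: being coaccessible with disjoint states, $\A_{\T,2^p}$ is reduced; being in addition complete and accessible, it is minimal by the stated characterization, and therefore it is the minimal automaton recognizing $\val_{2^p}^{-1}(\T\times\N)$. The main (and only) obstacle is the disjointness argument, and it reduces cleanly to Lemma~\ref{lem:transitionsTM}; everything else is an immediate reading of the definition of $\delta_{\T,2^p}$.
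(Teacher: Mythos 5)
Your proof is correct and follows the same route the paper intends: the paper's own proof of Proposition~\ref{prop:AT} consists of the single sentence ``These properties are all straightforward verifications,'' and your argument simply spells out those verifications, with the disjointness of $L_T$ and $L_B$ correctly reduced to Lemma~\ref{lem:transitionsTM}. Nothing is missing and no step fails.
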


\begin{proof}
These properties are all straightforward verifications.
\end{proof}

\subsection{Properties of $\Pi(\A_{m,r,2^p})$ and $\A_{m,r,2^p}$}

In what follows, we let  $K=|\rep_{2^p}\big((k{-}1)2^z\big)|$, provided that $k>1$. Then we define a permutation  $\sigma$ of the integers in $[\![0,k{-}1]\!]$ by $\sigma(i)=- 2^{pK-z} i \bmod k$. Note that $\sigma$ permutes the integers $0,1,\ldots,k-1$ because $k$ is odd. Further, we define $w_i$ to be the unique word of length $K$ representing $\sigma(i)2^z$ in base $2^p$:
\[
	w_i=0^{K-|\rep_{2^p}(\sigma(i)2^z)|}\rep_{2^p}(\sigma(i)2^z)
\] 
for each $i\in [\![0,k{-}1]\!]$. Note that the words $w_i$ are well defined since, by the choice of $K$, we have $\sigma(i)2^z\le (k{-}1)2^z<2^{pK}$ for every $i\in[\![0,k{-}1]\!]$.

\begin{lemma}
\label{lem:k>1}
If $k>1$ then $pK\ge z$.
\end{lemma}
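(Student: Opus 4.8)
The plan is to extract the inequality encoded in the definition of $K$ and combine it with a crude lower bound on $(k-1)2^z$ coming from the fact that $k$ is odd and strictly larger than $1$. First I would recall the standard characterization of representation length: for any positive integer $n$, the length $\ell=|\rep_{2^p}(n)|$ of its base-$2^p$ expansion satisfies $2^{p(\ell-1)}\le n<2^{p\ell}$. Only the right-hand (strict) inequality will be needed here. Since $k>1$, the integer $(k-1)2^z$ is positive, so this characterization applies to it and gives $(k-1)2^z<2^{pK}$ by the very definition of $K$.

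Next I would use the hypothesis that $k$ is odd with $k>1$, so that $k\ge 3$ and hence $k-1\ge 2\ge 1$. This yields the lower bound $(k-1)2^z\ge 2^z$. Chaining the two bounds produces $2^z\le (k-1)2^z<2^{pK}$, whence $2^z<2^{pK}$. Taking base-$2$ logarithms gives $z<pK$, which is even slightly stronger than the claimed $pK\ge z$.

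The argument involves no genuine obstacle; it is a one-line chain of inequalities. The only points requiring a little care are to invoke the correct (strict, on the right) inequality in the length-value characterization, and to verify at the outset that $(k-1)2^z$ is indeed a positive integer so that this characterization is legitimately applicable. Both are immediate from the standing assumption $k>1$.
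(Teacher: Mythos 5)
Your proof is correct and takes essentially the same route as the paper: both arguments reduce to the observation that $(k-1)2^z<2^{pK}$ by the definition of $K$ as a digit-length, together with the lower bound $(k-1)2^z\ge 2^z$ coming from $k-1\ge 1$. The paper phrases this via $K=\lfloor\log_{2^p}((k-1)2^z)\rfloor+1\ge\lceil z/p\rceil$ while you compare powers of $2$ directly (and in fact obtain the slightly stronger $pK>z$), but the content is identical.
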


\begin{proof}
We have
\[
	K=\left\lfloor \log_{2^p}\big((k-1)2^z)\right\rfloor +1 
	=\left\lfloor \log_{2^p}(k-1)+\frac zp\right\rfloor +1
	\ge \left\lfloor \frac zp\right\rfloor+1
	\ge \left\lceil \frac zp\right\rceil.
\]
Thus $pK\geq p\big\lceil \frac zp\big\rceil \geq z$.
\end{proof}

\begin{lemma}
\label{lem:wi}
Suppose that $k>1$ and let $i\in[\![0,k{-}1]\!]$. Then the word $w_i$ leads from the state $i$ to the state $0$ in the automaton $\Pi(\A_{m,r,2^p})$. Otherwise stated, $\delta_{m,r,2^p}^\Pi(i,w_i)=0$.
\end{lemma}

\begin{proof}
The word $w_i$ has length $K$ and from Lemma~\ref{lem:k>1}, we know that $pK\ge z$. 
By Lemma~\ref{lem:transitionsPiAmb}, we have
\begin{align*}
 \delta_{m,r,2^p}^\Pi(i,w_i)=0
 		&\iff	2^{pK} i +\val_{2^p}(w_i)\equiv 0\pmod m \\
 		&\iff	2^{pK} i +\sigma(i)2^z\equiv 0\pmod{k2^z} \\
 		&\iff	2^{pK-z} i +\sigma(i)\equiv 0\pmod k.
	\end{align*}
The result follows from the definition of $\sigma$.
\end{proof}

\begin{lemma}
\label{lem:ij}
Suppose that $k>1$ and let $i,j,\in[\![0,k{-}1]\!]$. For any $\ell\in\N$, the word 
\[
	w_i(\rep_{2^p}(m))^\ell\rep_{2^p}(r)
\] 
is accepted from $j$ in the automaton $\Pi(\A_{m,r,2^p})$ if and only if $i=j$. 
\end{lemma}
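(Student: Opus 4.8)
The plan is to track the run of the automaton $\Pi(\A_{m,r,2^p})$ on the three consecutive blocks $w_i$, $(\rep_{2^p}(m))^\ell$ and $\rep_{2^p}(r)$, translating each transition into a congruence via Lemma~\ref{lem:transitionsPiAmb}. The guiding observation is that, starting from any state $j\in[\![0,k{-}1]\!]$, the state reached after reading $w_i$ is a multiple of $2^z$, and more precisely the whole run stays inside the $k$-element set $\{0,2^z,2\cdot 2^z,\ldots,(k{-}1)2^z\}$ of multiples of $2^z$ lying in $[\![0,m{-}1]\!]$. On this set the transitions reduce to arithmetic in $\Z/k\Z$, in which every power of $2$ is invertible because $k$ is odd; this invertibility is exactly what forces the equivalence with $i=j$.

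First I would compute the state $s_1=\delta_{m,r,2^p}^\Pi(j,w_i)$. Using Lemma~\ref{lem:transitionsPiAmb} together with $\val_{2^p}(w_i)=\sigma(i)2^z$ and the definition $\sigma(i)=-2^{pK-z}i\bmod k$, and factoring out $2^z$ (which is legitimate since $pK\ge z$ by Lemma~\ref{lem:k>1}), I would obtain
\[
	s_1=2^z\big(2^{pK-z}(j-i)\bmod k\big).
\]
Since $2^{pK-z}$ is invertible modulo $k$ and $j-i\not\equiv 0\pmod k$ whenever $i\ne j$ (both lying in $[\![0,k{-}1]\!]$), this yields $s_1=0$ when $i=j$ (recovering Lemma~\ref{lem:wi}) and a nonzero multiple of $2^z$ otherwise.

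Next I would analyse the block $(\rep_{2^p}(m))^\ell$. Reading $\rep_{2^p}(m)$ from a state $s$ sends it to $(2^{pM}s+m)\bmod m=(2^{pM}s)\bmod m$, where $M=|\rep_{2^p}(m)|$. For $s=2^z c$ this equals $2^z(2^{pM}c\bmod k)$, so the set of multiples of $2^z$ is preserved, the state $0$ is fixed, and a nonzero multiple of $2^z$ is sent to another nonzero multiple of $2^z$ (again by invertibility of $2^{pM}$ modulo $k$). Iterating $\ell$ times, the state $s_2$ reached after the middle block satisfies $s_2=0$ if and only if $s_1=0$, that is, if and only if $i=j$.

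Finally I would read $\rep_{2^p}(r)$ from $s_2=2^z c$, reaching the state $(2^{pR+z}c+r)\bmod m$. Since $r<m=k2^z$, this equals the accepting state $r$ if and only if $2^{pR+z}c\equiv 0\pmod{k2^z}$, i.e.\ $2^{pR}c\equiv 0\pmod k$, i.e.\ $c=0$ (invertibility of $2^{pR}$ modulo $k$ once more). As $c=0$ is equivalent to $s_2=0$, acceptance holds precisely when $i=j$, which is the claim. The only delicate point I expect is the bookkeeping needed to guarantee that the run never leaves the set of multiples of $2^z$; once this invariant is established, every step reduces to the single fact that powers of $2$ are units in $\Z/k\Z$.
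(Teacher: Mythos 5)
Your proof is correct. It rests on the same two ingredients as the paper's proof --- Lemma~\ref{lem:transitionsPiAmb} to turn runs into congruences, and the invertibility of $2$ modulo the odd number $k$ --- but you organize the argument differently. The paper applies Lemma~\ref{lem:transitionsPiAmb} once to the whole word $y_i=w_i(\rep_{2^p}(m))^\ell\rep_{2^p}(r)$, writes out $\val_{2^p}(y_i)$ explicitly (the middle block contributing a sum of multiples of $m$ that vanishes modulo $m$), divides the resulting congruence by $2^z$ using $pK\ge z$, and concludes in a single chain of equivalences ending at $(j-i)2^{p(K+\ell M+R)-z}\equiv 0\pmod k$. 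You instead track the actual intermediate states, establishing the invariant that the run stays on the $k$ states $\{0,2^z,\ldots,(k-1)2^z\}$, on which each block acts as an invertible affine map of $\Z/k\Z$. Your version is a bit longer and requires the bookkeeping you mention (factoring $2^z$ out at each stage, which again needs $pK\ge z$, $pM\ge z$ and $pR+z\ge z$ --- all harmless), but it yields more information: it exhibits the state reached after each block, recovers Lemma~\ref{lem:wi} as the case $i=j$ of your first step, and isolates the structural reason the lemma holds, namely that the dynamics restricted to the multiples of $2^z$ is multiplication by units of $\Z/k\Z$. Both arguments are complete and correct.
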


\begin{proof}
Let $\ell\in\N$ and, for each $i\in[\![0,k{-}1]\!]$, let $y_i=w_i(\rep_{2^p}(m))^\ell\rep_{2^p}(r)$. Recall that we have set $R=|\rep_{2^p}(r)|$. Further, set $M=|\rep_{2^p}(m)|$. Then $|y_i|=K+\ell M+R$ and from Lemma~\ref{lem:k>1}, we know that $pK\ge z$. Therefore, we have
\begin{align*}
& \hphantom{\iff}\ \;  j 2^{p|y_i|} +\val_{2^p}(y_i) \equiv  r\pmod m \\
		&\iff 	j 2^{p(K+\ell M+R)} +\val_{2^p}(w_i)2^{p(\ell M+R)}+\sum_{s=0}^{\ell-1}m2^{p(sM+R)}+r\equiv r\pmod m \\
		&\iff 	j 2^{p(K+\ell M+R)} +\sigma(i)2^{z+p(\ell M+R)}\equiv 0\pmod{k2^z} \\	
		&\iff 	j 2^{p(K+\ell M+R)-z} + \sigma(i)2^{p(\ell M+R)}\equiv 0\pmod k \\
		&\iff 	j 2^{p(K+\ell M+R)-z} -i2^{p(K+\ell M+R)-z}\equiv 0\pmod k \\
		&\iff	j\equiv i \pmod k\\
		&\iff	j=i.
\end{align*}
The conclusion follows from Lemma~\ref{lem:transitionsPiAmb}.
\end{proof}

\begin{proposition}
\label{prop:PiAproperties}
The automaton $\Pi(\A_{m,r,2^p})$ is complete, accessible and coaccessible.
\end{proposition}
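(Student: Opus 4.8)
The plan is to verify the three properties separately, observing at the outset that completeness requires no new work: as already remarked right after the definition of $\Pi(\A_{m,r,b})$, Remark~\ref{rem:unicitelettrepremcomp} guarantees that for each state $i\in[\![0,m{-}1]\!]$ and each digit $e\in A_{2^p}$ there is exactly one outgoing transition labeled by $e$, so $\Pi(\A_{m,r,2^p})$ is deterministic and complete. The remaining two properties, accessibility and coaccessibility, will both be established directly from the arithmetic description of the transition function provided by Lemma~\ref{lem:transitionsPiAmb}, namely that $\delta_{m,r,2^p}^\Pi(i,v)=j$ holds if and only if $2^{p|v|}\,i+\val_{2^p}(v)\equiv j\pmod m$.

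For accessibility, I would start from the initial state $0$ and, given an arbitrary target state $j\in[\![0,m{-}1]\!]$, read the word $v=\rep_{2^p}(j)$. Since $\val_{2^p}(v)=j$, Lemma~\ref{lem:transitionsPiAmb} immediately yields $\delta_{m,r,2^p}^\Pi(0,v)=j$, so every state is reachable from $0$. This step is essentially a one-line application of the lemma.

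For coaccessibility, I would fix a state $i\in[\![0,m{-}1]\!]$ and exhibit a word leading from $i$ to the unique final state $r$, which gives $L_i\neq\emptyset$. The idea is to pick a length $n$ large enough that $2^{pn}\ge m$, set $c=(r-2^{pn}i)\bmod m$, and then take the length-$n$ word $v=0^{\,n-|\rep_{2^p}(c)|}\rep_{2^p}(c)$, which satisfies $\val_{2^p}(v)=c$. Lemma~\ref{lem:transitionsPiAmb} then gives $2^{pn}i+\val_{2^p}(v)\equiv r\pmod m$, i.e.\ $\delta_{m,r,2^p}^\Pi(i,v)=r$, so $r$ is reachable from every state. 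The only point that needs a moment of care—and the closest thing here to a genuine obstacle—is justifying that the prescribed residue class $c$ modulo $m$ can actually be realized by a word of the chosen length; this is exactly where the choice $2^{pn}\ge m$ is used, since it forces $c\le m-1<2^{pn}$ and hence $|\rep_{2^p}(c)|\le n$, so the padding producing $v$ is legitimate. Combining the three parts then shows that $\Pi(\A_{m,r,2^p})$ is complete, accessible and coaccessible.
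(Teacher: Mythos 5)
Your proposal is correct. The completeness and accessibility arguments coincide with the paper's: completeness via Remark~\ref{rem:unicitelettrepremcomp}, and accessibility by reading $\rep_{2^p}(j)$ from the initial state $0$ and applying Lemma~\ref{lem:transitionsPiAmb}. Where you genuinely diverge is coaccessibility. The paper reaches the final state $r$ from an arbitrary state $i$ by concatenating the explicit word $w_i$ of Lemma~\ref{lem:wi} (built from the permutation $\sigma(i)=-2^{pK-z}i\bmod k$, which sends $i$ to $0$) with $\rep_{2^p}(r)$; you instead choose any length $n$ with $2^{pn}\ge m$, let $c=(r-2^{pn}i)\bmod m$, and read the zero-padded word of length $n$ with value $c$, which Lemma~\ref{lem:transitionsPiAmb} (used in the ``if'' direction) sends to $r$. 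Your check that $c\le m-1<2^{pn}$ legitimizes the padding, so the argument is complete. Your route is more elementary and, notably, works verbatim for an arbitrary integer base $b$, not just powers of $2$ (the paper itself remarks that coaccessibility for general $b$ required a different method in the earlier conference version). What the paper's choice buys is economy later on: the words $w_i$ are not ad hoc but are reused in Lemma~\ref{lem:ij} and Proposition~\ref{prop:red-D} to distinguish the classes $D_{(i,X)}$, so proving coaccessibility through them costs nothing extra once that machinery is in place.
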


$\delta_{m,r,2^p}^\Pi(0,\rep_{2^p}(r))=r$ 

\begin{proof}
By Lemma~\ref{lem:transitionsPiAmb}, we have $\delta_{m,r,2^p}^\Pi(0,\rep_{2^p}(i))=i$ for all $i\in[\![0,m{-}1]\!]$. Thus, the automaton $\Pi(\A_{m,r,2^p})$ is accessible and in particular, we have $\delta_{m,r,2^p}^\Pi(0,\rep_{2^p}(r))=r$. By Lemma~\ref{lem:wi}, we have $\delta_{m,r,2^p}^\Pi(i,w_i)=0$ for all $i\in[\![0,m{-}1]\!]$, and hence, the concatenation $w_i\rep_{2^p}(r)$ leads from $i$ to $r$. This shows that $\Pi(\A_{m,r,2^p})$ is also coaccessible. In order to see that it is complete, observe that for every state $i\in[\![0,m{-}1]\!]$ and every digit $e\in A_{2^p}$, there is a transition labeled by $e$ from $i$ to the state $2^pi+e\bmod m$ (see Remark~\ref{rem:unicitelettrepremcomp}).
\end{proof}

In~\cite{Charlier&Cisternino&Massuir:2019}, we proved the coaccessibility for arbitrary integer bases $b$ by using a different method. In the present case of a base which is a power of two, the argument is more straightforward since we are able to explicitly provide a word that is accepted from any state $i$.

The automaton $\Pi(\A_{m,r,b})$ is not minimal in general: it is minimal if and only if $m$ and $b$ are coprime; see for example \cite{Alexeev:2004}. 


%
%
\begin{proposition} 
\label{prop:Ambproperties}
The automaton $\A_{m,r,2^p}$ is accessible, coaccessible and has disjoint states. 
\end{proposition}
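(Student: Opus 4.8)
The plan is to establish the three required properties of $\A_{m,r,2^p}$ separately, closely mirroring the structure just used for $\Pi(\A_{m,r,2^p})$ in Proposition~\ref{prop:PiAproperties}, but now working with the full two-component transition function $\delta_{m,r,2^p}$ rather than its projection. The key tool throughout will be Lemma~\ref{lem:transitionsAmb}, which translates paths in $\A_{m,r,2^p}$ into the arithmetic identity $2^{p|(u,v)|}i + \val_{2^p}(v) = m\,\val_{2^p}(u) + j$. The crucial structural difference from the projected automaton is that here each transition carries an \emph{input} component $u$ as well, and Remark~\ref{rem:unicitepremcomp} guarantees that for a given source state $i$ and a given second component $v$, the first component $u$ and the target $j$ are uniquely determined. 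This uniqueness is exactly what will let us deduce that the states are disjoint.

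First I would prove accessibility. For each $i\in[\![0,m{-}1]\!]$ I want a pair $(u,v)$ leading from the initial state $0$ to $i$. Taking $v=\rep_{2^p}(i)$ and invoking Remark~\ref{rem:unicitepremcomp} with source state $0$ produces a unique $u$ of the same length with $\delta_{m,r,2^p}(0,(u,v))=j$ for some $j$; by Lemma~\ref{lem:transitionsAmb} this $j$ satisfies $j\equiv \val_{2^p}(v)=i\pmod m$, so $j=i$ since $0\le i<m$. Hence every state is reachable. Next, coaccessibility: I must exhibit, from each state $i$, a pair accepted at the final state $(r)$, i.e.\ a path to $r$ whose continuation lands on $r$. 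The natural choice is to read some $v$ with $\delta_{m,r,2^p}(i,(u,v))=r$; again Remark~\ref{rem:unicitepremcomp} furnishes, for a suitable choice of second component reaching $r$, the required first component, so $L_i\ne\emptyset$ for every state $i$.

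The disjoint-states property is where the real content lies, and it is the step I expect to be the main obstacle. I want to show that for distinct states $(i)$ and $(i')$ the accepted languages $L_i$ and $L_{i'}$ are disjoint. Suppose a single pair $(u,v)$ is accepted from both, meaning $\delta_{m,r,2^p}(i,(u,v))=r$ and $\delta_{m,r,2^p}(i',(u',v))=r$ for the respective unique first components. The point is that the \emph{second} component $v$ of an accepted word, together with the terminal state $r$, pins down the source: by Lemma~\ref{lem:transitionsAmb} both $i$ and $i'$ must satisfy $2^{p|v|}i + \val_{2^p}(v)\equiv r\pmod m$ with the same $v$, forcing $2^{p|v|}i\equiv 2^{p|v|}i'\pmod m$. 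Here one must be careful: $2^{p|v|}$ is not invertible modulo $m$ in general, so this congruence alone does not immediately give $i=i'$. The resolution is to exploit that the \emph{full} pair $(u,v)$, not merely $v$, is accepted from each state, and that by Remark~\ref{rem:unicitepremcomp} the first component $u$ is itself a function of the source $i$ and of $v$; I would argue that if the same pair $(u,v)$ is read from two sources then the uniqueness in Remark~\ref{rem:unicitepremcomp}, read in the reverse direction along the path, collapses the two sources to one.

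Concretely, to make that last argument airtight I would run it by reverse induction along the path, or equivalently observe that accepting the same word $(u,v)$ from $i$ and from $i'$ yields, via Lemma~\ref{lem:transitionsAmb}, the two identities $2^{p|v|}i+\val_{2^p}(v)=m\,\val_{2^p}(u)+r$ and $2^{p|v|}i'+\val_{2^p}(v)=m\,\val_{2^p}(u)+r$ with \emph{identical} right-hand sides, since $u$, $v$ and $r$ are shared; subtracting gives $2^{p|v|}(i-i')=0$, whence $i=i'$. Thus the genuine obstacle I flagged dissolves once one notices that a single accepted word determines \emph{both} components and hence the same $\val_{2^p}(u)$ for both sources, which is precisely the strengthening that distinguishes disjointness from mere reducedness. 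Since $\A_{m,r,2^p}$ is coaccessible and has disjoint states, it is in particular reduced, completing the proof.
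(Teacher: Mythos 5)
Your proposal is correct and follows essentially the same route as the paper: accessibility and coaccessibility are inherited by lifting the corresponding properties of the projected automaton $\Pi(\A_{m,r,2^p})$ from Proposition~\ref{prop:PiAproperties}, and disjointness of states comes from applying Lemma~\ref{lem:transitionsAmb} to a common accepted word $(u,v)$ and subtracting the two resulting exact identities to get $2^{p|(u,v)|}(i-i')=0$. The detour through the congruence-mod-$m$ worry is unnecessary (as you yourself conclude), since a word of $L_i\cap L_{i'}$ over the product alphabet fixes both components and hence both sides of the identity exactly.
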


\begin{proof} 
It directly follows from Proposition~\ref{prop:PiAproperties} that $\A_{m,r,2^p}$ is accessible and coaccessible. Now, let $i,j \in [\![0,m{-}1]\!]$ and let $(u,v) \in L_i\cap L_j$. By Lemma~\ref{lem:transitionsAmb}, we have
\[
	2^{p|(u,v)|} i + \val_b(v) = m\, \val_{2^p}(u)+r\quad  \andrm \quad 
	 2^{p|(u,v)|} j + \val_b(v) = m\,\val_{2^p}(u)+r,
\]
which implies that $i=j$. This proves that $\A_{m,r,2^p}$ has disjoint states.
\end{proof}

In a reduced DFA, there can be at most one non coaccessible state. Thus, we deduce from Proposition~\ref{prop:Ambproperties} that $\A_{m,r,2^p}$ is indeed the {\em trim minimal} automaton of the language $\val_{2^p}^{-1}\big(\{(n,mn+r)\colon n\in \N\}\big)$, that is the automaton obtained by removing the only non coaccessible state from its minimal automaton.

\subsection{Properties of $\A_{m,r,2^p} \times \A_{\T,2^p}$}

\begin{lemma}
\label{lem:TM-product}
For all $i,j\in[\![0,m{-}1]\!]$, $X,Y \in \{T,B\}$ and $(u,v)\in (A_{2^p}\times A_{2^p})^*$, we have 
\[
	\delta_{\T,2^p}((i,X),(u,v))=(j,Y)\iff \delta_{\T,2^p}((i,\overline{X}),(u,v))=(j,\overline{Y}).
\]
\end{lemma}

\begin{proof}
This directly follows from Lemma~\ref{lem:TM}.
\end{proof}

\begin{lemma}
\label{lem:1-m}
Let $i\in[\![0,m{-}1]\!]$ and $X\in\{T,B\}$. The word $\rep_{2^p}(1,m)$ is the label of a path from the state $(0,X)$ to the state $(0,\overline{X})$ in $\A_{m,r,2^p} \times \A_{\T,2^p}$. 
\end{lemma}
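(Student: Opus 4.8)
The plan is to verify the claim by a direct computation using the transition characterization for the product automaton, namely Lemma~\ref{lem:transitionsProd}. The statement asserts that $\rep_{2^p}(1,m)$ labels a path from $(0,X)$ to $(0,\overline{X})$ in $\A_{m,r,2^p}\times\A_{\T,2^p}$. Here $\rep_{2^p}(1,m)$ is the pair $(u,v)$ where $u$ and $v$ are the base-$2^p$ expansions of $1$ and $m$ respectively, padded with leading zeroes to a common length. Writing $M=|\rep_{2^p}(m)|$, this common length is $M$ (since $m\ge1$ forces $M\ge|\rep_{2^p}(1)|$), so $\val_{2^p}(u)=1$ and $\val_{2^p}(v)=m$, with $|(u,v)|=M$.

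First I would apply Lemma~\ref{lem:transitionsProd} with $i=0$ and the target state $(j,Y)$, which gives that $\delta_\times((0,X),(u,v))=(j,Y)$ if and only if
\[
	2^{pM}\cdot 0 + \val_{2^p}(v) = m\,\val_{2^p}(u) + j
	\quad\andrm\quad
	Y = X_{\val_{2^p}(u)}.
\]
Substituting $\val_{2^p}(u)=1$ and $\val_{2^p}(v)=m$, the first condition reads $m = m\cdot 1 + j$, which forces $j=0$ as required. This is the only solution since $j\in[\![0,m-1]\!]$ is uniquely determined by the equation (indeed $j=m-m=0$). The second condition becomes $Y = X_1$, and since $1\notin\T$ (its binary expansion is the single digit $1$, which has an odd number of ones), the definition of $X_n$ gives $X_1=\overline{X}$, so $Y=\overline{X}$.

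Combining these two computations, $\delta_\times((0,X),\rep_{2^p}(1,m))=(0,\overline{X})$, which is exactly the assertion. I expect no serious obstacle here; the only small point requiring care is confirming that the padded expansion $\rep_{2^p}(1,m)$ genuinely has values $1$ and $m$ in its two components and common length $M$, so that the hypotheses of Lemma~\ref{lem:transitionsProd} apply cleanly, and confirming that $1\notin\T$ so the Thue-Morse component flips. Both are immediate from the definitions in Section~\ref{sec:basics} and the definition of $\T$.
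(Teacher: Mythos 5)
Your proof is correct and follows exactly the route the paper intends: the paper's proof is the one-line remark that the claim ``directly follows from Lemma~\ref{lem:transitionsProd}'', and your computation ($2^{pM}\cdot 0+m=m\cdot 1+j$ forces $j=0$, and $1\notin\T$ forces $Y=X_1=\overline{X}$) is precisely the verification being left implicit there. The only discrepancy is harmless: the unused variable $i$ in the lemma's statement is a leftover in the paper, and you were right to ignore it.
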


\begin{proof}
This directly follows from Lemma~\ref{lem:transitionsProd}. 
\end{proof}



\begin{proposition}
\label{prop:prodaccessible}
The automaton $\A_{m,r,2^p} \times \A_{\T,2^p}$ accepts $\val_{2^p}^{-1}(\{(t,mt+r)\colon t\in\T\})$, is accessible, coaccessible and has disjoint states. 
\end{proposition}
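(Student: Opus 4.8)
The statement on the accepted language is immediate from Lemma~\ref{lem:transitionsProd}: a pair $(u,v)$ labels a path from the initial state $(0,T)$ to the final state $(r,T)$ if and only if $\val_{2^p}(v)=m\,\val_{2^p}(u)+r$ and $T=T_{\val_{2^p}(u)}$, the latter meaning $\val_{2^p}(u)\in\T$. Setting $t=\val_{2^p}(u)$, these two conditions say exactly that $(u,v)\in\val_{2^p}^{-1}(\{(t,mt+r)\colon t\in\T\})$, where the prefix $(0,0)^*$ is absorbed because $0\in\T$ turns $(0,0)$ into a loop at $(0,T)$. It then remains to prove the three structural properties, and I would treat them in the order accessibility, coaccessibility, disjoint states.

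For accessibility, I would reach $(i,T)$ by reading the pair $(0^L,\rep_{2^p}(i))$ with $L=|\rep_{2^p}(i)|$: since $\val_{2^p}(0^L)=0\in\T$ and $i=m\cdot 0+i$, Lemma~\ref{lem:transitionsProd} shows this is a valid path $(0,T)\to(i,T_0)=(i,T)$. To reach $(i,B)$, I would first apply Lemma~\ref{lem:1-m} to pass from $(0,T)$ to $(0,B)$ via $\rep_{2^p}(1,m)$, and then read $(0^L,\rep_{2^p}(i))$ to land in $(i,B_0)=(i,B)$.

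Coaccessibility is the step I expect to require the most care, since one must steer \emph{both} coordinates to $(r,T)$, and the first coordinate can only be driven through the forced first component determined by Remark~\ref{rem:unicitepremcomp}. The plan is to route every state through some state $(0,X')$ and then correct the second coordinate. To bring the first coordinate to $0$, I would distinguish two cases: when $k>1$, reading the unique pair whose second component is $w_i$ sends $(i,X)$ to $(0,X')$ by Lemma~\ref{lem:wi}; when $k=1$, i.e.\ $m=2^z$, the word $v=0^{\lceil z/p\rceil}$ satisfies $\delta_{m,r,2^p}^\Pi(i,v)=2^{p\lceil z/p\rceil}i\bmod 2^z=0$ by Lemma~\ref{lem:transitionsPiAmb}, and the associated forced pair again reaches some $(0,X')$. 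From $(0,X')$ I would apply Lemma~\ref{lem:1-m} to toggle to $(0,T)$ if $X'=B$, and finally read $(0^R,\rep_{2^p}(r))$, which by Lemma~\ref{lem:transitionsProd} is a valid path $(0,T)\to(r,T_0)=(r,T)$. The genuinely delicate point here is the unavoidable split between $k>1$ and $k=1$, since $w_i$ is defined only when $k>1$; by contrast the $T/B$ correction is cheap, being localized at the state $0$ through Lemma~\ref{lem:1-m}.

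Finally, the disjoint states property would be proved just as in Proposition~\ref{prop:Ambproperties}, now keeping track of the second coordinate. Suppose $(u,v)\in L_{(i,X)}\cap L_{(j,Y)}$. Applying Lemma~\ref{lem:transitionsProd} to both states gives $2^{p|(u,v)|}i+\val_{2^p}(v)=m\,\val_{2^p}(u)+r$ and $2^{p|(u,v)|}j+\val_{2^p}(v)=m\,\val_{2^p}(u)+r$, and subtracting yields $2^{p|(u,v)|}(i-j)=0$, hence $i=j$. Moreover the conditions $T=X_{\val_{2^p}(u)}$ and $T=Y_{\val_{2^p}(u)}$ force $X=Y$, because whether or not $\val_{2^p}(u)\in\T$ both $X$ and $Y$ are determined identically by the value $T$. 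Thus $(i,X)=(j,Y)$, so the languages accepted from distinct states are disjoint, completing the proof.
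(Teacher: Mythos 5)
Your proof is correct and takes essentially the same route as the paper's: the same reliance on Lemma~\ref{lem:transitionsProd} for the accepted language, for accessibility and for disjointness of states, and the same coaccessibility scheme (drive the first coordinate to $0$, toggle $T/B$ with $\rep_{2^p}(1,m)$ via Lemma~\ref{lem:1-m}, then append a representation of $r$). The one point where you are more careful than the paper is the explicit case $k=1$ in the coaccessibility step, handled with the word $0^{\lceil z/p\rceil}$: the paper simply invokes Lemma~\ref{lem:wi}, which is stated only for $k>1$ (and, strictly speaking, only for $i\in[\![0,k-1]\!]$, so both your proof and the paper's implicitly extend $w_i$ to all $i\in[\![0,m-1]\!]$ via $\sigma(i)=-2^{pK-z}i\bmod k$, which is harmless since the proof of that lemma goes through verbatim).
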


\begin{proof}
By construction of the product automaton and since
\[
	\{(n,mn+r)\colon n\in \N\}
	\cap \big(\T\times \N\big)
	=\{(t,mt+r)\colon t\in\T\},
\]
we get that the product automaton $\A_{m,r,2^p} \times \A_{\T,2^p}$ accepts the language
\[
	\val_{2^p}^{-1}(\{(t,mt+r)\colon t\in\T\}).
\]
By Lemma~\ref{lem:transitionsProd}, we can check that for every $i\in[\![0,m{-}1]\!]$, the states $(i,T)$ and $(i,B)$ are  accessible thanks to the word $\rep_{2^p}(0,i)$ and $\rep_{2^p}(1,m+i)$ respectively. Hence, $\A_{m,r,2^p} \times \A_{\T,2^p}$ is accessible. 

To show the coaccessibility, we now fix some $i\in[\![0,m{-}1]\!]$ and $X \in \{T,B \}$. By Lemma~\ref{lem:wi}, we know that there exists a word $w_i$ that leads from the state $i$ to the state $0$ in the automaton $\Pi(\A_{m,r,2^p})$. Thus, there exists a word $u$ of the same length as $w_i$ such that the word $(u,w_i)$ leads from $i$ to $0$ in $\A_{m,r,2^p}$. Now, by reading $(u,w_i)$ from $(i,X)$ in $\A_{m,r,2^p} \times \A_{\T,2^p}$, we reach either the state $(0,T)$ or the state $(0,B)$. If we reach $(0,T)$, then the concatenation $(u,w_i)\rep_b(0,r)$ leads from the  state $(i,X)$ to $(r,T)$ in $\A_{m,r,2^p} \times \A_{\T,2^p}$. If we reach $(0,B)$ instead, then we may apply Lemma~\ref{lem:1-m} in order to obtain that the concatenation $(u,w_i)\rep_b(1,m)\rep_b(0,r)$ leads from $(i,X)$ to $(r,T)$ in $\A_{m,r,2^p} \times \A_{\T,2^p}$. This proves that $\A_{m,r,2^p} \times \A_{\T,2^p}$ is coaccessible. 

The fact that $\A_{m,r,2^p} \times \A_{\T,2^p}$ has disjoint states follows from Propositions~\ref{prop:AT} and~\ref{prop:Ambproperties}.
\end{proof}

\subsection{Properties of $\Pi(\A_{m,r,2^p} \times \A_{\T,2^p})$}

\begin{lemma}
\label{lem:projdisj-0}
For all $i,j \in [\![0,m{-}1]\!]$, $X,Y\in\{T,B\}$ and $v\in A_{2^p}^*$, we have
\[
	\delta_\times^\Pi\big((i,X),v\big)=(j,Y)\implies
	\delta_{m,r,2^p}^\Pi(i,v)=j.
\]
\end{lemma}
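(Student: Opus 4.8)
The plan is to reduce the statement to the two explicit descriptions of the transition functions that are already at our disposal, namely Lemma~\ref{lem:transitionsProd-proj} for $\delta_\times^\Pi$ and Lemma~\ref{lem:transitionsPiAmb} for $\delta_{m,r,2^p}^\Pi$. The key observation is that both characterizations rest on the very same arithmetic relation between $i$, $j$, $|v|$ and $\val_{2^p}(v)$; the product automaton merely carries the additional component $Y=X_\ell$, which the implication simply forgets.

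Concretely, I would start from the hypothesis $\delta_\times^\Pi((i,X),v)=(j,Y)$. By Lemma~\ref{lem:transitionsProd-proj}, applied with base $2^p$, there exists $\ell\in\N$ such that $2^{p|v|}\,i+\val_{2^p}(v)=m\ell+j$ and $Y=X_\ell$. I would keep only the first of these two conclusions. From the equality $2^{p|v|}\,i+\val_{2^p}(v)=m\ell+j$ one reads off immediately the congruence $2^{p|v|}\,i+\val_{2^p}(v)\equiv j\pmod m$. Invoking Lemma~\ref{lem:transitionsPiAmb} with $b=2^p$, this congruence is exactly equivalent to $\delta_{m,r,2^p}^\Pi(i,v)=j$, which is the desired conclusion.

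There is no real obstacle here: the content of the lemma is that projecting the product automaton onto its first component recovers $\Pi(\A_{m,r,2^p})$, and this becomes transparent once the two transition functions are expressed through their respective congruences. The only mild point worth noting is that the existential witness $\ell$ furnished by Lemma~\ref{lem:transitionsProd-proj} is precisely what legitimizes the passage from an equality to a congruence modulo $m$, so that no information about the Thue--Morse component (the datum $Y=X_\ell$) is needed, and this is exactly why the statement is an implication and need not be an equivalence.
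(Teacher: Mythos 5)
Your proposal is correct and amounts to exactly the ``direct verification'' the paper leaves to the reader: passing through the arithmetic characterizations of the two transition functions (Lemmas~\ref{lem:transitionsProd-proj} and~\ref{lem:transitionsPiAmb}) and discarding the Thue--Morse component $Y=X_\ell$ is a legitimate and complete way to carry it out. No gaps.
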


\begin{proof}
This is a direct verification.
\end{proof}

\begin{lemma}
\label{lem:projdisj}
For every $i \in [\![0,m{-}1]\!]$, the states $(i,T)$ and $(i,B)$ are disjoint in the projected automaton $\Pi \left( \A_{m,r,2^p} \times \A_{\T,2^p} \right)$.
\end{lemma}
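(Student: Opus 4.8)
The plan is to establish the disjointness directly from the transition characterization, by showing that no single word $v\in A_{2^p}^*$ can be accepted both from $(i,T)$ and from $(i,B)$. Since the only final state is $(r,T)$, a word $v$ lies in $L_{(i,X)}$ exactly when $\delta_\times^\Pi\big((i,X),v\big)=(r,T)$. So the goal is to prove $L_{(i,T)}\cap L_{(i,B)}=\emptyset$, and the whole argument rests on Lemma~\ref{lem:transitionsProd-proj} together with the elementary fact that the auxiliary integer $\ell$ it produces is forced to be the same in both computations.

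First I would argue by contradiction and suppose that some $v\in A_{2^p}^*$ belongs to both $L_{(i,T)}$ and $L_{(i,B)}$. Applying Lemma~\ref{lem:transitionsProd-proj} to the accepting path from $(i,T)$ yields an integer $\ell$ with $2^{p|v|}i+\val_{2^p}(v)=m\ell+r$ and $T=T_\ell$; applying it to the accepting path from $(i,B)$ yields an integer $\ell'$ with $2^{p|v|}i+\val_{2^p}(v)=m\ell'+r$ and $T=B_{\ell'}$. Because both equalities have the same left-hand side, we get $m\ell+r=m\ell'+r$, and since $m>0$ this forces $\ell=\ell'$.

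Finally I would extract the contradiction by unwinding the notation $X_n$, which equals $X$ precisely when $n\in\T$. The condition $T=T_\ell$ then means $\ell\in\T$, whereas $T=B_{\ell'}$ means $\overline{B}=T$, i.e.\ $\ell'\notin\T$. With $\ell=\ell'$ this asserts simultaneously $\ell\in\T$ and $\ell\notin\T$, which is absurd. Hence no such $v$ exists and the states $(i,T)$ and $(i,B)$ are disjoint. I do not anticipate a genuine obstacle here, as the substantive work is already packaged into Lemma~\ref{lem:transitionsProd-proj}; the only point needing a moment's care is the uniqueness of $\ell$ (consistent with Remark~\ref{rem:2^pn}), which is exactly what pits the two mutually exclusive Thue--Morse conditions against each other.
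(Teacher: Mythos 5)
Your proof is correct, but it takes a different route through the paper's machinery than the one the authors use. The paper argues by lifting the two hypothetical accepting paths back to the product automaton $\A_{m,r,2^p}\times\A_{\T,2^p}$: by Remark~\ref{rem:unicitepremcomp} the hidden first components $u$ and $u'$ of the two lifted labels must coincide, so a single word $(u,v)$ would be accepted from both $(i,T)$ and $(i,B)$, contradicting the fact that the product automaton has disjoint states (Proposition~\ref{prop:prodaccessible}). You instead stay entirely inside the projected automaton and invoke the arithmetic characterization of Lemma~\ref{lem:transitionsProd-proj}: the quotient $\ell$ is uniquely determined by $2^{p|v|}i+\val_{2^p}(v)=m\ell+r$, and the two acceptance conditions $T=T_\ell$ and $T=B_\ell$ force $\ell\in\T$ and $\ell\notin\T$ simultaneously. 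The underlying mechanism is the same (uniqueness of the first component is exactly uniqueness of $\ell=\val_{2^p}(u)$), but your version is more self-contained and makes the Thue--Morse parity clash explicit, whereas the paper's version is shorter because it delegates that clash to the already-established disjointness of states in $\A_{\T,2^p}$. One small point worth stating explicitly in your write-up: acceptance of $v$ from $(i,X)$ is equivalent to $\delta_\times^\Pi((i,X),v)=(r,T)$ only because the projected automaton is deterministic (Remark~\ref{rem:unicitelettrepremcomp}); this is implicit in your use of $\delta_\times^\Pi$ as a function, but it is the reason the lemma applies in the form you need.
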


\begin{proof}
Proceed by contradiction and suppose that a word $v$ over $A_{2^p}$ is accepted from both $(i,T)$ and $(i,B)$ in $\Pi \left( \A_{m,r,2^p} \times \A_{\T,2^p} \right)$ for some $i \in [\![0,m{-}1]\!]$. Then there exists words $u$ and $u'$ over $A_{2^p}$ of length $|v|$ such that,  in $\A_{m,r,2^p} \times \A_{\T,2^p}$, the words $(u,v)$ and $(u',v)$ are accepted from $(i,T)$ and $(i,B)$ respectively. But from Remark~\ref{rem:unicitepremcomp}, we must have $u=u'$. Hence the word $(u,v)$ is accepted from both $(i,T)$ and $(i,B)$ in $\A_{m,r,2^p} \times \A_{\T,2^p}$, contradicting that this automaton has disjoint states (see Proposition~\ref{prop:prodaccessible}).
\end{proof}

\begin{proposition}
\label{prop:m-odd}
The automaton $\Pi \left(\A_{m,r,2^p}\times \A_{\T,2^p} \right)$ accepts $\val_{2^p}^{-1}(m\T+r)$, is deterministic, complete, accessible and coaccessible. 
\end{proposition}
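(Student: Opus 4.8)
The plan is to verify the four properties of $\Pi(\A_{m,r,2^p}\times\A_{\T,2^p})$ in turn, leveraging the corresponding properties already established for the product automaton $\A_{m,r,2^p}\times\A_{\T,2^p}$ in Proposition~\ref{prop:prodaccessible} and the projection lemmas. The fact that the automaton accepts $\val_{2^p}^{-1}(m\T+r)$ is essentially a restatement of its construction: by Proposition~\ref{prop:prodaccessible}, the product automaton accepts $\val_{2^p}^{-1}(\{(t,mt+r)\colon t\in\T\})$, and projecting onto the second component replaces each accepted pair $(u,v)$ by $v=\rep_{2^p}(mt+r)$ (up to leading zeroes), so the accepted language becomes $\val_{2^p}^{-1}(\{mt+r\colon t\in\T\})=\val_{2^p}^{-1}(m\T+r)$.

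For \emph{accessibility}, the key observation is that projection does not remove reachability: since $\A_{m,r,2^p}\times\A_{\T,2^p}$ is accessible, every state $(i,X)$ is reached there by some pair $(u,v)$, and reading the projected label $v$ in $\Pi(\A_{m,r,2^p}\times\A_{\T,2^p})$ lands on $(i,X)$ as well (here one uses that $\delta_\times^\Pi((0,T),v)$ is single-valued, i.e.\ that the automaton is deterministic, which I address next). Concretely, I would invoke the explicit access words from the proof of Proposition~\ref{prop:prodaccessible}: $\rep_{2^p}(0,i)$ for $(i,T)$ and $\rep_{2^p}(1,m+i)$ for $(i,B)$, whose projections $\rep_{2^p}(i)$ and $\rep_{2^p}(m+i)$ access the desired states. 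Coaccessibility follows the same pattern: the coaccessibility argument in Proposition~\ref{prop:prodaccessible} produces, from each state $(i,X)$, an explicit pair-word leading to the final state $(r,T)$, and its second-component projection is then accepted from $(i,X)$ in the projected automaton.

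The two properties requiring genuine argument are \emph{determinism} and \emph{completeness}. For completeness, I would argue directly from the transition rule: given a state $(i,X)$ and a digit $e\in A_{2^p}$, Remark~\ref{rem:unicitelettrepremcomp} guarantees unique $d\in A_{2^p}$ and $j\in[\![0,m{-}1]\!]$ with $2^pi+e=md+j$, and then setting $Y=X_d$ yields exactly one transition labeled $e$ out of $(i,X)$; hence every state has an outgoing $e$-transition for every digit, so the automaton is complete. The main obstacle is determinism, because projection of a deterministic automaton is \emph{a priori} only nondeterministic: a single label $e$ could come from two different first components $d,d'$. Here I expect the crux to lie in showing this cannot happen, and the decisive input is precisely the disjointness of states established in Lemma~\ref{lem:projdisj} together with Remark~\ref{rem:unicitelettrepremcomp}. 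The uniqueness of $d$ for fixed $(i,e)$ in Remark~\ref{rem:unicitelettrepremcomp} forces a unique target $j$ for the first component, and then the uniqueness of $Y=X_d$ pins down the second component; combined with Lemma~\ref{lem:projdisj} (which rules out the $T$ and $B$ copies merging into a single nondeterministic branch), this yields a single well-defined transition $\delta_\times^\Pi((i,X),e)$, so the projected automaton is deterministic. Once determinism and completeness are in hand, the accessibility and coaccessibility arguments sketched above go through cleanly.
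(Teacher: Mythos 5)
Your proposal is correct and follows essentially the same route as the paper's proof, which likewise gets the accepted language directly from the construction, derives determinism and completeness from Remark~\ref{rem:unicitelettrepremcomp}, and inherits accessibility and coaccessibility from $\A_{m,r,2^p}\times \A_{\T,2^p}$ (Proposition~\ref{prop:prodaccessible}). The only superfluous ingredient is your appeal to Lemma~\ref{lem:projdisj} for determinism: for a fixed source state $(i,X)$ and letter $e$, Remark~\ref{rem:unicitelettrepremcomp} already forces a unique $d$ (the quotient of $2^pi+e$ by $m$), hence a unique target $j$ and a unique $Y=X_d$, so no disjointness-of-languages argument is needed there.
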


\begin{proof}
By construction, $\Pi \left( \A_{m,r,2^p} \times \A_{\T,2^p} \right)$ accepts $\val_{2^p}^{-1}(m\T+r)$; see Section~\ref{sec:method}. The fact that this automaton is deterministic and complete follows from Remark~\ref{rem:unicitelettrepremcomp}. It is accessible and coaccessible because so is $\A_{m,r,2^p} \times \A_{\T,2^p}$.  
\end{proof}

\section{Minimization of $\Pi \left( \A_{m,r,2^p} \times \A_{\T,2^p} \right)$}
\label{sec:minimization}

We start by defining some classes of states of $\Pi \left( \A_{m,r,2^p} \times \A_{\T,2^p} \right)$. Our aim is twofold. First, we will prove that those classes consist in {\em indistinguishable} states, i.e.\ accepting the same language. Second, we will show that states belonging to different classes are {\em distinguishable}, i.e.\ accept different languages. Otherwise stated, these classes correspond to the left quotients $w^{{-}1}L$ where $w$ is any finite word over the alphabet $A_{2^p}$ and $L=\val^{-1}_{2^p}(m\T+r)$.

\subsection{Definition of the classes}
\label{sec:def-classes}

Recall that $R=|\rep_{2^p}(r)|$. We define $N=\max\{\big\lceil \frac{z}{p}\big\rceil,R\}$. The classes we are going to define are closely related to the base $2^p$-expansion of the remainder $r$ with some additional leading zeroes. More precisely, we have to consider the word $0^{N-R}\rep_{2^p}(r)$, which is the unique word over the alphabet $A_{2^p}$ with length $N$ and $2^p$-value $r$. This word is equal to the $2^p$-expansion $\rep_{2^p}(r)$ if and only if $N=R$, i.e.\ $\big\lceil \frac{z}{p}\big\rceil\le R$.


\begin{definition}
For $\alpha\in[\![0,N]\!]$, we define
\[
	C_\alpha' = \begin{cases}
					\big\{\big(\llfloor\frac{r}{2^{p\alpha}}\rrfloor+\ell \frac{m}{2^{p\alpha}}, T_\ell\big) \colon 0 \le \ell \le 2^{p\alpha}{-}1\big\}  				
					& \text{if } \alpha \le \frac{z}{p}\\
					\big\{\big(\llfloor\frac{r}{2^{p\alpha}}\rrfloor+\ell k, T_\ell\big) \colon 0 \le \ell \le 2^z{-}1\big\}  
					& \text{if } \alpha \ge \frac{z}{p}.
				\end{cases}
\]
\end{definition}

Note that in the case where $z$ is divisible by $p$, the two cases of the definition coincide for the value $\alpha=\frac{z}{p}$.

Let us comment the previous definition, which may seem quite technical at first. The first elements of the sets $C_\alpha'$ are the integer part of the remainder $r$ divided by increasing powers of the base $2^p$, i.e.\ $r$ divided by $2^{p\alpha}$ for the set indexed by $\alpha$. The further elements of a set $C_\alpha'$ are obtained by adding to the first element $\llfloor\frac{r}{2^{p\alpha}}\rrfloor$ integer multiples of $\frac{m}{2^{p\alpha}}$ so that the greatest element so-obtained is still less than $m$, provided that $m$ is divisible by this power $2^{p\alpha}$. When $m$ is no longer divisible by $2^{p\alpha}$, i.e.\ when $\alpha > \frac{z}{p}$, then we add integer multiples of $k$, which is the odd part of $m$. In particular, if $m$ is odd, i.e.\ if $z=0$, then all the sets $C_\alpha'$ are reduced to a single state: $C_\alpha'=\{(\llfloor\frac{r}{2^{p\alpha}}\rrfloor,T)\}$. Finally, note that since $R=|\rep_{2^p}(r)|$ and $N\ge R$, we have $\llfloor\frac{r}{2^{pN}}\rrfloor=0$ and $C_N'=\{(k\ell,T_\ell)\colon 0 \le \ell \le 2^z{-}1\}$.

We will see in Lemma~\ref{lem:suffixes} that the states in $C_\alpha'$ are exactly those from which there is a path labeled by the suffix of length $\alpha$ of $0^{N-R}\rep_{2^p}(r)$ to the state $(r,T)$. 

\begin{example}
\label{ex:c-alpha'}
Let $m=24$ and $p=2$. We have $k=3$ and $z=3$. Let us consider the extremal possible values of the remainder $r$. For $r=23$, we have $\rep_4(23)=113$, $R=3$ and $N=\max\{\lceil\frac{3}{2}\rceil,3\}=3$. Thus, the sets defined above are
\begin{align*}
	C_0' & = \{(23,T)\} \\
	C_1' & = \{(5,T),(11,B),(17,B),(23,T)\} \\ 
	C_2' & = \{(1,T),(4,B),(7,B),(10,T),(13,B),(16,T),(19,T),(22,B)\} \\ 
	C_3' & = \{(0,T),(3,B),(6,B),(9,T),(12,B),(15,T),(18,T),(21,B)\}. 
\end{align*}
For instance, we have the following paths, which are labeled by $0^{N-R}\rep_4(23)=113$:
\[
	(0,T)\overset{1}{\longrightarrow} (1,T)\overset{1}{\longrightarrow} (5,T)\overset{3}{\longrightarrow} (23,T)
\]
and
\[
	(3,B)\overset{1}{\longrightarrow} (13,B)\overset{1}{\longrightarrow} (5,T)\overset{3}{\longrightarrow} (23,T).
\]
For $r=0$, we have $R=0$ and $N=\max\{\lceil\frac{3}{2}\rceil,0\}=2$. In this case, the sets $C_\alpha'$ are
\begin{align*}
	C_0' & = \{(0,T)\} \\
	C_1' & = \{(0,T),(6,B),(12,B),(18,T)\} \\ 
	C_2' & = \{(0,T),(3,B),(6,B),(9,T),(12,B),(15,T),(18,T),(21,B)\}.
\end{align*}
For instance, we have the following paths, which are labeled by $0^{N-R}\rep_4(0)=00$:
\[
	(0,T)\overset{0}{\longrightarrow} (0,T)\overset{0}{\longrightarrow} (0,T)
\]
and
\[
	(3,B)\overset{0}{\longrightarrow} (12,B)\overset{0}{\longrightarrow} (0,T).
\]
\end{example}

The sets $C_\alpha'$ are not necessarily disjoint as Example~\ref{ex:c-alpha'} shows.
In order to obtain the desired classes of states of the automaton $\Pi \left( \A_{m,r,2^p} \times \A_{\T,2^p} \right)$, we consider the following definition. 

\begin{definition}
\label{def:C}
For $\alpha\in[\![0,N]\!]$, we define
\[
	C_\alpha=C_\alpha'\setminus \bigcup_{\beta=0}^{\alpha-1} C_\beta'.
\]
\end{definition}

Let us define a second type of classes. The idea behind this definition is that these classes are "too far" from the remainder $r$ with respect to the division by consecutive powers of the base $2^p$, in the sense that these states do not accept any suffix of  $0^{N-R}\rep_{2^p}(r)$.

\begin{definition}
\label{def:D}
For $(j,X)\in\big([\![0,k-1]\!] \times \{T,B\}\big)\setminus \{(0,T)\}$, we define
\[
	D_{(j,X)}' = \{(j+\ell k,X_\ell) \colon 0 \le \ell \le 2^z {-}1\}
\]
and 
\[
	D_{(j,X)} = D_{(j,X)}'\setminus \bigcup_{\alpha=0}^N C_\alpha.
\]
\end{definition}

As we already observed,  all states of the form $(\ell k,T_\ell)$ appear in the set $C_N'$ and thus, also in the union of the sets $C_\alpha$. This is the reason why the sets $D_{(0,T)}'$ and $D_{(0,T)}$ are not defined, i.e.\ $(j,X)\ne (0,T)$ in the previous definition. 

We will refer to the sets of states $C_\alpha$ and $D_{(j,X)}$ as {\em classes} of states. Let us make some preliminary observations concerning the previous definitions.

The classes $C_\alpha$ and $D_{(j,X)}$ are pairwise disjoint: the intersection of any two such classes is empty.
Moreover, the nonempty classes $C_\alpha$ and $D_{(j,X)}$ form a partition of the whole set of states $[\![0,m-1]\!]\times\{T,B\}$ of $\Pi \left( \A_{m,r,2^p} \times \A_{\T,2^p} \right)$. Note that if $m$ is odd, i.e.\ if $z =0$, then the sets $D_{(j,X)}'$ are reduced to a single state. If $m$ is a power of $2$, i.e.\  if $k=1$, then there is no set of the form $D_{(j,X)}'$ and $D_{(j,X)}$ with $j\ge 1$.

\begin{example}
Let us resume Example~\ref{ex:c-alpha'}.
For $r=23$, the classes defined above are
\begin{align*}
	C_0 & = \{(23,T)\} \\
	C_1 & = \{(5,T),(11,B),(17,B)\} \\ 
	C_2 & = \{(1,T),(4,B),(7,B),(10,T),(13,B),(16,T),(19,T),(22,B)\} \\ 
	C_3 & = \{(0,T),(3,B),(6,B),(9,T),(12,B),(15,T),(18,T),(21,B)\} \\ 
	D_{(1,T)} & =\emptyset  \\ 
	D_{(2,T)} & =\{(2,T),(5,B),(8,B),(11,T),(14,B),(17,T),(20,T),(23,B)\} \\
	D_{(0,B)} & =\{(0,B),(3,T),(6,T),(9,B),(12,T),(15,B),(18,B),(21,T)\} \\
	D_{(1,B)} & =\{(1,B),(4,T),(7,T),(10,B),(13,T),(16,B),(19,B),(22,T)\} \\
	D_{(2,B)} & =\{(2,B),(8,T),(14,T),(20,B)\}.
\end{align*}
Note that $2k+\lceil\frac{z}{p}\rceil=2\cdot 3+\lceil\frac{3}{2}\rceil=8$ of them are nonempty. In Figure~\ref{fig:projected-aut24-23}, the automaton $\Pi \left( \A_{24,23,4} \times \A_{\T,4} \right)$ is represented without the transitions and the states of are colored with respect to these classes.
\begin{figure}[!ht]
\centering
\begin{tikzpicture}
[scale=0.225]
\tikzstyle{every node}=[shape=circle, fill=none, draw=black,
minimum size=10pt, inner sep=2pt]
\node[fill=violet](0T) at (0,0) {};
\node[fill=cyan](1T) at (2,0) {};
\node[fill=gray](2T) at (4,0) {};
\node[fill=yellow](3T) at (6,0) {};
\node[fill=magenta](4T) at (8,0) {};
\node[fill=vert](5T) at (10,0) {};
\node[fill=yellow](6T) at (12,0) {};
\node[fill=magenta](7T) at (14,0) {};
\node[fill=orange](8T) at (16,0) {};
\node[fill=violet](9T) at (18,0) {};
\node[fill=cyan](10T) at (20,0) {};
\node[fill=gray](11T) at (22,0) {};
\node[fill=yellow](12T) at (24,0) {};
\node[fill=magenta](13T) at (26,0) {};
\node[fill=orange](14T) at (28,0) {};
\node[fill=violet](15T) at (30,0) {};
\node[fill=cyan](16T) at (32,0) {};
\node[fill=gray](17T) at (34,0) {};
\node[fill=violet](18T) at (36,0) {};
\node[fill=cyan](19T) at (38,0) {};
\node[fill=gray](20T) at (40,0) {};
\node[fill=yellow](21T) at (42,0) {};
\node[fill=magenta](22T) at (44,0) {};
\node(23T) at (46,0) {};

\node[fill=yellow](0B) at (0,-4.5) {};
\node[fill=magenta](1B) at (2,-4.5) {};
\node[fill=orange](2B) at (4,-4.5) {};
\node[fill=violet](3B) at (6,-4.5) {};
\node[fill=cyan](4B) at (8,-4.5) {};
\node[fill=gray](5B) at (10,-4.5) {};
\node[fill=violet](6B) at (12,-4.5) {};
\node[fill=cyan](7B) at (14,-4.5) {};
\node[fill=gray](8B) at (16,-4.5) {};
\node[fill=yellow](9B) at (18,-4.5) {};
\node[fill=magenta](10B) at (20,-4.5) {};
\node[fill=vert](11B) at (22,-4.5) {};
\node[fill=violet](12B) at (24,-4.5) {};
\node[fill=cyan](13B) at (26,-4.5) {};
\node[fill=gray](14B) at (28,-4.5) {};
\node[fill=yellow](15B) at (30,-4.5) {};
\node[fill=magenta](16B) at (32,-4.5) {};
\node[fill=vert](17B) at (34,-4.5) {};
\node[fill=yellow](18B) at (36,-4.5) {};
\node[fill=magenta](19B) at (38,-4.5) {};
\node[fill=orange](20B) at (40,-4.5) {};
\node[fill=violet](21B) at (42,-4.5) {};
\node[fill=cyan](22B) at (44,-4.5) {};
\node[fill=gray](23B) at (46,-4.5) {};

\tikzstyle{every node}=[shape=circle, fill=none, draw=black,
minimum size=7pt, inner sep=2pt]
\node at (46,0) {};
\end{tikzpicture}
\caption{The classes of the projected automaton $\Pi \left( \A_{24,23,4} \times \A_{\T,4} \right)$.}
\label{fig:projected-aut24-23}
\end{figure}
If now we consider $r=0$, these classes are
\begin{align*}
	C_0 & = \{(0,T)\} \\
	C_1 & = \{(6,B),(12,B),(18,T)\} \\
	C_2 & = \{(3,B),(9,T),(15,T),(21,B)\} \\
	D_{(1,T)} & =\{(1,T),(4,B),(7,B),(10,T),(13,B),(16,T),(19,T),(22,B)\}  \\
	D_{(2,T)} & =\{(2,T),(5,B),(8,B),(11,T),(14,B),(17,T),(20,T),(23,B)\} \\
	D_{(0,B)} & =\{(0,B),(3,T),(6,T),(9,B),(12,T),(15,B),(18,B),(21,T)\} \\
	D_{(1,B)} & =\{(1,B),(4,T),(7,T),(10,B),(13,T),(16,B),(19,B),(22,T)\} \\
	D_{(2,B)} & =\{(2,B),(5,T),(8,T),(11,B),(14,T),(17,B),(20,B),(23,T)\}.
\end{align*}
In this case, they are all nonempty and there are $8$ of them. In Figure~\ref{fig:projected-aut24}, the states of the automaton $\Pi \left( \A_{24,0,4} \times \A_{\T,4} \right)$ are colored with respect to these classes.
\begin{figure}[!ht]
\centering
\begin{tikzpicture}
[scale=0.225]
\tikzstyle{every node}=[shape=circle, fill=none, draw=black,
minimum size=10pt, inner sep=2pt]
\node(0T) at (0,0) {};
\node[fill=cyan](1T) at (2,0) {};
\node[fill=gray](2T) at (4,0) {};
\node[fill=yellow](3T) at (6,0) {};
\node[fill=magenta](4T) at (8,0) {};
\node[fill=orange](5T) at (10,0) {};
\node[fill=yellow](6T) at (12,0) {};
\node[fill=magenta](7T) at (14,0) {};
\node[fill=orange](8T) at (16,0) {};
\node[fill=violet](9T) at (18,0) {};
\node[fill=cyan](10T) at (20,0) {};
\node[fill=gray](11T) at (22,0) {};
\node[fill=yellow](12T) at (24,0) {};
\node[fill=magenta](13T) at (26,0) {};
\node[fill=orange](14T) at (28,0) {};
\node[fill=violet](15T) at (30,0) {};
\node[fill=cyan](16T) at (32,0) {};
\node[fill=gray](17T) at (34,0) {};
\node[fill=vert](18T) at (36,0) {};
\node[fill=cyan](19T) at (38,0) {};
\node[fill=gray](20T) at (40,0) {};
\node[fill=yellow](21T) at (42,0) {};
\node[fill=magenta](22T) at (44,0) {};
\node[fill=orange](23T) at (46,0) {};

\node[fill=yellow](0B) at (0,-4.5) {};
\node[fill=magenta](1B) at (2,-4.5) {};
\node[fill=orange](2B) at (4,-4.5) {};
\node[fill=violet](3B) at (6,-4.5) {};
\node[fill=cyan](4B) at (8,-4.5) {};
\node[fill=gray](5B) at (10,-4.5) {};
\node[fill=vert](6B) at (12,-4.5) {};
\node[fill=cyan](7B) at (14,-4.5) {};
\node[fill=gray](8B) at (16,-4.5) {};
\node[fill=yellow](9B) at (18,-4.5) {};
\node[fill=magenta](10B) at (20,-4.5) {};
\node[fill=orange](11B) at (22,-4.5) {};
\node[fill=vert](12B) at (24,-4.5) {};
\node[fill=cyan](13B) at (26,-4.5) {};
\node[fill=gray](14B) at (28,-4.5) {};
\node[fill=yellow](15B) at (30,-4.5) {};
\node[fill=magenta](16B) at (32,-4.5) {};
\node[fill=orange](17B) at (34,-4.5) {};
\node[fill=yellow](18B) at (36,-4.5) {};
\node[fill=magenta](19B) at (38,-4.5) {};
\node[fill=orange](20B) at (40,-4.5) {};
\node[fill=violet](21B) at (42,-4.5) {};
\node[fill=cyan](22B) at (44,-4.5) {};
\node[fill=gray](23B) at (46,-4.5) {};

\tikzstyle{every node}=[shape=circle, fill=none, draw=black,
minimum size=7pt, inner sep=2pt]
\node at (0,0) {};
\end{tikzpicture}
\caption{The classes of the projected automaton $\Pi \left( \A_{24,0,4} \times \A_{\T,4} \right)$.}
\label{fig:projected-aut24}
\end{figure}
\end{example}

Our aim is to prove that the nonempty classes defined above correspond exactly to the left quotients of the language $\val^{-1}_{2^p}(m\T+r)$.

\subsection{Research of the empty classes}
\label{sec:counting}

\begin{proposition}
\label{prop:nonempty-C}
For all $\alpha\in[\![0,N]\!]$, the classes $C_\alpha$ are nonempty. 
\end{proposition}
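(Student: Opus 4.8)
The plan is to prove, for each $\alpha\in[\![0,N]\!]$, that $C_\alpha'$ is not entirely absorbed by the earlier sets, i.e.\ that $C_\alpha'\setminus\bigcup_{\beta=0}^{\alpha-1}C_\beta'\neq\emptyset$, which is exactly the assertion $C_\alpha\neq\emptyset$ by Definition~\ref{def:C}. The whole argument is governed by the sequence $c_\gamma:=\lfloor r/2^{p\gamma}\rfloor$. First I would record its elementary properties: since $c_{\gamma+1}=\lfloor c_\gamma/2^p\rfloor$, the sequence $(c_\gamma)$ is non-increasing, strictly decreasing as long as it is positive, and $c_\gamma=0$ if and only if $\gamma\ge R$. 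Moreover, every first component of a state in $C_\gamma'$ is congruent to $c_\gamma$ modulo $k$, because one adds to $c_\gamma$ multiples of $m/2^{p\gamma}=k2^{z-p\gamma}$ or of $k$, both divisible by $k$. Consequently two classes $C_\alpha'$ and $C_\beta'$ can meet only when $c_\alpha\equiv c_\beta\pmod k$, which is the mechanism I will exploit to control overlaps. If $z=0$ the situation is immediate: then $2^z-1=0$, so every $C_\alpha'$ is the singleton $\{(c_\alpha,T)\}$, and the $c_\alpha$ are pairwise distinct for $\alpha\in[\![0,N]\!]=[\![0,R]\!]$ by the monotonicity above, whence each $C_\alpha$ is nonempty. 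So I assume $z\ge1$ and treat the two branches of the definition of $C_\alpha'$ separately.

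For the first branch ($\alpha\le z/p$) I would argue by cardinality. The first components $\lfloor r/2^{p\alpha}\rfloor+\ell\,m/2^{p\alpha}$, $0\le\ell\le 2^{p\alpha}-1$, form an arithmetic progression of step $m/2^{p\alpha}\ge 1$, so $C_\alpha'$ consists of exactly $2^{p\alpha}$ distinct states; and every $\beta<\alpha$ also falls in the first branch, whence $|C_\beta'|=2^{p\beta}$. Since
\[
	\Big|\bigcup_{\beta=0}^{\alpha-1}C_\beta'\Big|\le\sum_{\beta=0}^{\alpha-1}2^{p\beta}=\frac{2^{p\alpha}-1}{2^p-1}<2^{p\alpha}=|C_\alpha'|,
\]
the set $C_\alpha$ is nonempty.

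For the second branch ($\alpha> z/p$) I would exhibit an explicit witness, namely the state obtained for $\ell=1$, that is $(c_\alpha+k,B)$; recall $1\notin\T$, so $T_1=B$, and this state indeed belongs to $C_\alpha'$ since $c_\alpha+k\le 2k-1\le m-1$ (using $z\ge1$). I claim it lies in no earlier $C_\beta'$. By the congruence above any such $\beta$ must satisfy $c_\beta\equiv c_\alpha\pmod k$. If $\beta$ is in the second branch, then $c_\beta\in[\![0,k{-}1]\!]$ forces $c_\beta=c_\alpha$, which by strict decrease requires $c_\alpha=c_\beta=0$, and a short bookkeeping with $N=\max\{\lceil z/p\rceil,R\}$ shows that $\beta<\alpha$ is then impossible. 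If $\beta$ is in the first branch, one compares $c_\alpha+k$ with $c_\beta+\ell'\,m/2^{p\beta}$: when $z-p\beta\ge1$ the choice $\ell'\ge1$ is ruled out by size, as $\ell'\,m/2^{p\beta}\ge 2k>c_\alpha+k$, leaving only $\ell'=0$, which yields second component $T\neq B$; the remaining possibility $z-p\beta=0$ collapses again through the monotonicity of $(c_\gamma)$ and the value of $N$.

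The main obstacle is precisely this last point: ruling out that the witness sits inside an earlier first-branch class at the boundary $p\beta=z$. There the step $m/2^{p\beta}$ degenerates to $k$, the size argument no longer separates the two classes, and one must combine the exact behaviour of $c_\gamma=\lfloor r/2^{p\gamma}\rfloor$ (it reaches, and stays at, $0$ only from index $R$ on) with the definition $N=\max\{\lceil z/p\rceil,R\}$ to contradict $\beta<\alpha\le N$. Everything else reduces to a routine but careful verification.
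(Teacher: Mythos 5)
Your proof is correct, but it is organized quite differently from the paper's. The paper splits on $\alpha\le R$ versus $\alpha>R$: for $\alpha\le R$ the \emph{first} element $(\lfloor r/2^{p\alpha}\rfloor,T)$ of $C_\alpha'$ survives, because the sequence $\lfloor r/2^{p\alpha}\rfloor$ strictly decreases until it reaches $0$ at $\alpha=R$; for $R<\alpha\le N$ the \emph{second} element ($\ell=1$) survives, with the same two sub-cases $p\alpha\le z$ and $p\alpha>z$ that you treat. You instead split on the two branches of the definition of $C_\alpha'$: for $\alpha\le z/p$ you replace any witness-hunting by a counting argument ($|C_\alpha'|=2^{p\alpha}$ strictly exceeds $\sum_{\beta<\alpha}2^{p\beta}$), and for $\alpha>z/p$ you use the $\ell=1$ element, controlled by the observation that all first components of $C_\gamma'$ are congruent to $\lfloor r/2^{p\gamma}\rfloor$ modulo $k$ — a clean organizing principle that the paper does not state explicitly. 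I checked the delicate boundary case you flag ($\beta$ in the first branch with $p\beta=z$): there $c_\alpha=c_\beta$ forces $c_\beta=0$ hence $\beta\ge R$, and together with $\beta\ge\lceil z/p\rceil$ this gives $\beta\ge N\ge\alpha$, contradicting $\beta<\alpha$; so your sketch does close. What your route buys is a shorter, less case-ridden argument for small $\alpha$; what it gives up is the explicit identification of \emph{which} element of $C_\alpha'$ lies in $C_\alpha$ (first element for $\alpha\le R$, second for $\alpha>R$), a piece of information the paper records in Figure~\ref{fig:nonempty-classes} and reuses in the proofs of Corollary~\ref{cor:empty-D} and Proposition~\ref{prop:min-C}.
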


\begin{proof}
The sequence $\big(\llfloor\frac{r}{2^{p\alpha}}\rrfloor\big)_{\alpha\in[\![0,N]\!]}$ is (strictly) decreasing for $\alpha\in[\![0,R]\!]$ and is equal to~$0$ for $\alpha\in[\![R,N]\!]$. In particular, note that $\alpha=R$ is the first value for which $\llfloor\frac{r}{2^{p\alpha}}\rrfloor=0$. Therefore $(\llfloor\frac{r}{2^{p\alpha}}\rrfloor,T_0)=(\llfloor\frac{r}{2^{p\alpha}}\rrfloor,T)\in C_\alpha$ for every $\alpha\in[\![0,R]\!]$; see Figure~\ref{fig:nonempty-classes}. 
\begin{figure}[htb]
\[
\begin{array}{c|ccc}
C_0' & (r,T) \\
C_1' & (\llfloor\frac{r}{2^p}\rrfloor,T) \\
\vdots & \vdots \\
C_R' & (0,T) \\
\hline
C_{R+1}' & (0,T) & (\frac{m}{2^{p(R+1)}},B) \\
C_{R+2}' & (0,T) & (\frac{m}{2^{p(R+2)}},B)\\
\vdots & \vdots \\
C_{N}'	& (0,T) & (\frac{m}{2^{pN}},B) 
\end{array}
\]
\caption{The elements of the sets $C_\alpha'$ up to the first belonging to the classes $C_\alpha$.}
\label{fig:nonempty-classes}
\end{figure} 
If $N=R$ then we are done (in this case, the part of Figure~\ref{fig:nonempty-classes} below the line is empty).

Now suppose that $N=\lceil\frac{z}{p}\rceil>R$. Then $(\llfloor\frac{r}{2^{p\alpha}}\rrfloor,T_0)=(0,T)\in C_\alpha'\setminus C_\alpha$ for every $\alpha\in[\![R+1,N]\!]$. We claim that, for each such $\alpha$, the "next" element of $C_\alpha'$ (i.e.\ the element corresponding to $\ell=1$) indeed belongs to $C_\alpha$. Let us fix some $\alpha\in[\![R+1,N]\!]$.

First, suppose that $p\alpha\le z$. Then we have to prove that $(\frac{m}{2^{p\alpha}},T_1)=(\frac{m}{2^{p\alpha}},B)\notin C_\beta'$ for $\beta<\alpha$.  If $\alpha>\beta$ then $\frac{m}{2^{p\alpha}}<\frac{m}{2^{p\beta}}\le \llfloor\frac{r}{2^{p\beta}}\rrfloor+\frac{m}{2^{p\beta}}$. This shows that if the state $(\frac{m}{2^{p\alpha}},B)$ belongs to some $C_\beta'$ with $\beta<\alpha$, then its first component has to be $\llfloor\frac{r}{2^{p\beta}}\rrfloor$. But since it has second component $T_1=B$ and since $T_0=T$, the state $(\frac{m}{2^{p\alpha}},B)$ cannot be the first state of any set $C_\beta'$. 

Second, suppose that $p\alpha> z$. In this case, we have to prove that $(k,B)\notin C_\beta'$ for $\beta<\alpha$. Similarly to what precedes, if $(k,B)$ belongs to some set $C_\beta'$, then we must have $\llfloor \frac{r}{2^{p\beta}}\rrfloor=0$ and $\beta\ge\frac{z}{p}$. But since $\frac{z}{p}<\alpha\le N=\lceil\frac{z}{p}\rceil$, we get $\alpha=\llceil \frac{z}{p}\rrceil$. Therefore, any $\beta<\alpha$ is such that $p\beta< z$. 
\end{proof}

\begin{lemma}
\label{lem:number-classes}
We have $|\rep_{2^p}(m{-}1)|- \llceil\frac{z}{p} \rrceil\in[\![0,k{-}1]\!]$.
\end{lemma}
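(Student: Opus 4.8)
The plan is to translate the statement into elementary inequalities on the number of base-$2^p$ digits of $m-1$. Write $L=|\rep_{2^p}(m{-}1)|$ and $q=\big\lceil\frac zp\big\rceil$; the goal is to show $0\le L-q\le k-1$. The case $m=1$ (so $k=1$, $z=0$, $m-1=0$, $L=0$) is immediate, since then $L-q=0=k-1$; so I would assume $m\ge 2$. Then $L\ge 1$, and by the very definition of the length of a $2^p$-expansion, $L$ is characterized by
\[
	2^{p(L-1)}\le m-1<2^{pL}.
\]
Substituting $m=k2^z$ turns both inequalities into comparisons of $k$ with a power of $2$, which is exactly the form needed to extract the two bounds.

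For the lower bound $L\ge q$, I would use the right-hand inequality $k2^z-1<2^{pL}$. Since $k2^z$ and $2^{pL}$ are integers, this is equivalent to $k2^z\le 2^{pL}$, hence $k\le 2^{pL-z}$. As $k\ge 1$, this forces $pL-z\ge 0$, i.e.\ $L\ge z/p$; because $L$ is an integer, $L\ge\big\lceil\frac zp\big\rceil=q$, giving $L-q\ge 0$.

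The upper bound $L-q\le k-1$ is the crux. From the left-hand inequality $2^{p(L-1)}\le k2^z-1<k2^z$ I get $k>2^{p(L-1)-z}$. Writing $t=L-q$ and using $pq-z\ge 0$ (which holds since $q=\big\lceil\frac zp\big\rceil$, so in fact $pq-z\in[\![0,p{-}1]\!]$), one checks that $p(L-1)-z=p(t-1)+(pq-z)\ge p(t-1)$, whence $k>2^{p(t-1)}$. If $t\le 0$, the bound $t\le k-1$ is trivial because $k\ge 1$. If $t\ge 1$, then $p\ge 1$ gives $2^{p(t-1)}\ge 2^{t-1}$, and the elementary inequality $2^{t-1}\ge t$ (a one-line induction) yields $k>2^{p(t-1)}\ge t$, so $k\ge t+1$, i.e.\ $t\le k-1$.

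The step I expect to require the most care — and the only real subtlety — is keeping the integer-versus-real comparisons honest, namely upgrading a strict inequality between integers to a non-strict one with a shift by $1$, and correctly absorbing the slack term $pq-z$ produced by the ceiling. Once these bookkeeping points are handled, the single nontrivial ingredient is the elementary bound $2^{t-1}\ge t$, which closes the argument.
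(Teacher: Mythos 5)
Your proof is correct. The case $m=1$ is handled properly, the sandwich $2^{p(L-1)}\le m-1<2^{pL}$ is the right characterization of the length for $m\ge 2$, the integrality upgrade $k2^z-1<2^{pL}\Rightarrow k2^z\le 2^{pL}$ is sound, and the chain $k>2^{p(t-1)}\ge 2^{t-1}\ge t$ for $t\ge1$ correctly yields $t\le k-1$. Your route is genuinely different from the paper's. The paper starts from the logarithmic formula $|\rep_{2^p}(m)|=\lfloor\log_{2^p}(k)+\frac zp\rfloor+1$, must first split on whether $m$ is a power of $2^p$ in order to relate $|\rep_{2^p}(m-1)|$ to $|\rep_{2^p}(m)|$, and then proves the upper bound by a case analysis on $(k,p)$: the cases $k=1$, and $k=3$ with $p=1$, are checked by hand, and all remaining cases use the numerical inequality $\log_{2^p}(k)<k-2$. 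You instead work directly with $m-1$, extract both bounds from the two sides of a single sandwich inequality, and close the upper bound with the uniform elementary estimate $2^{t-1}\ge t$, so that the only separate case is the degenerate $m=1$. What the paper's approach buys is a direct reuse of the standard digit-count formula; what yours buys is the elimination of both case analyses in favor of one uniform inequality, which is arguably cleaner and less error-prone. Both arguments are fully elementary and of comparable length.
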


\begin{proof}
Observe that
\begin{equation}
\label{eq:repm}
	|\rep_{2^p}(m)|
	=\llfloor\log_{2^p}(m)\rrfloor+1
	=\left\lfloor\log_{2^p}(k)+\frac{z}{p}\right\rfloor+1.
\end{equation}
If $m$ is a power of $2^p$, i.e.\ if $k=1$ and $p$ divides $z$, then $|\rep_{2^p}(m{-}1)|=|\rep_{2^p}(m)|-1=\frac{z}{p}$ and the result is clear.

Now, suppose that $m$ is not a power of the base $2^p$. Then $|\rep_{2^p}(m{-}1)|=|\rep_{2^p}(m)|$. From \eqref{eq:repm} we get that $|\rep_{2^p}(m)|\ge \llceil\frac{z}{p}\rrceil$. Let us show that $|\rep_{2^p}(m)|-\llceil\frac{z}{p}\rrceil\le k-1$. If $k=1$ then $p$ does not divide $z$ (for otherwise $m$ would be a power of $2^p$) and we get from \eqref{eq:repm} that $|\rep_{2^p}(m)|= \llfloor\frac{z}{p}\rrfloor+1=\llceil\frac{z}{p} \rrceil$. 
If $k=3$ and $p=1$, then we obtain $|\rep_{2^p}(m)|=\llfloor\log_{2}(3)+z\rrfloor+1 =2+z=k-1+\llceil \frac{z}{p}\rrceil$. In all other cases, that is if $k\ge 5$ or ($k=3$ and $p\ge 2$), we can check that $\log_{2^p}(k)<k-2$. Therefore we have $|\rep_{2^p}(m)|\le \left\lfloor k-2+\frac{z}{p}\right\rfloor +1=k-1+\llfloor \frac{z}{p}\rrfloor$.
\end{proof}

\begin{proposition}
\label{prop:empty-D}
The empty classes $D_{(j,X)}$ are exactly those of the form $D_{(\lfloor\frac{r}{2^{p\alpha}}\rfloor,T)}$ with $p\alpha\ge z$.
\end{proposition}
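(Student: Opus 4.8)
The plan is to rephrase emptiness in purely set-theoretic terms and then split the argument into one easy inclusion and one cardinality estimate. Since $C_\alpha=C_\alpha'\setminus\bigcup_{\beta<\alpha}C_\beta'$, we have $\bigcup_{\alpha=0}^N C_\alpha=\bigcup_{\alpha=0}^N C_\alpha'$, so by Definition~\ref{def:D} a class $D_{(j,X)}$ is empty if and only if $D_{(j,X)}'\subseteq\bigcup_{\alpha=0}^N C_\alpha'$. I would also record the elementary fact that each $D_{(j,X)}'$ has cardinality exactly $2^z$, since its first components $j+\ell k$, $0\le\ell\le 2^z-1$, are pairwise distinct and all lie in $[\![0,m-1]\!]$.

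First I would treat the classes claimed to be empty. Fix $\alpha$ with $p\alpha\ge z$. Then $\lfloor r/2^{p\alpha}\rfloor\le r/2^z<k$, so $j:=\lfloor r/2^{p\alpha}\rfloor\in[\![0,k-1]\!]$, and the second branch of the definition of $C_\alpha'$ reads $C_\alpha'=\{(j+\ell k,T_\ell)\colon 0\le\ell\le 2^z-1\}$, which is literally $D_{(j,T)}'$. Hence, provided $j\ne 0$ so that $(j,T)\ne(0,T)$ and $D_{(j,T)}$ is a genuine class, we get $D_{(j,T)}'\subseteq\bigcup_\alpha C_\alpha'$ and therefore $D_{(j,T)}=\emptyset$. (When $j=0$ the identity $C_\alpha'=C_N'$ merely reflects that these states already form a $C$-class, consistently with $D_{(0,T)}$ being undefined.)

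For the converse I would show that every remaining class is nonempty. Let $(j,X)\ne(0,T)$ fail to be of the above form, i.e.\ $(j,X)\ne(\lfloor r/2^{p\alpha}\rfloor,T)$ for all $\alpha$ with $p\alpha\ge z$. The key observation is that $D_{(j,X)}'$ is then disjoint from every $C_\beta'$ with $p\beta\ge z$: if some $(j+\ell k,X_\ell)$ lay in such a $C_\beta'=\{(\lfloor r/2^{p\beta}\rfloor+\ell'k,T_{\ell'})\}$, then $j+\ell k=\lfloor r/2^{p\beta}\rfloor+\ell'k$ with $j,\lfloor r/2^{p\beta}\rfloor\in[\![0,k-1]\!]$ would force $j=\lfloor r/2^{p\beta}\rfloor$ and $\ell=\ell'$, whence $X_\ell=T_\ell$ gives $X=T$, contradicting the choice of $(j,X)$. (For $X=B$ this is immediate, since $X_\ell=\overline{T_\ell}\ne T_\ell$.) Consequently $D_{(j,X)}'\cap\bigcup_{\alpha}C_\alpha'=D_{(j,X)}'\cap\bigcup_{p\alpha<z}C_\alpha'$.

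It then remains a pure size estimate. For $p\alpha<z$ the first branch gives $|C_\alpha'|=2^{p\alpha}$, so
\[
	\Big|\bigcup_{p\alpha<z}C_\alpha'\Big|\le\sum_{\alpha=0}^{\lceil z/p\rceil-1}2^{p\alpha}=\frac{2^{p\lceil z/p\rceil}-1}{2^p-1}<2^z,
\]
where the last inequality follows from $p(\lceil z/p\rceil-1)<z$ together with $p\ge1$. Since $|D_{(j,X)}'|=2^z$ strictly exceeds this bound, $D_{(j,X)}'\not\subseteq\bigcup_\alpha C_\alpha'$, and hence $D_{(j,X)}\ne\emptyset$. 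Combining the two directions yields the proposition. I expect the only delicate point to be the modular bookkeeping in the disjointness step, together with the harmless edge case $(0,T)$; the cardinality bound is a routine geometric-sum computation.
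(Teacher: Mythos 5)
Your proof is correct. The emptiness direction coincides with the paper's: when $p\alpha\ge z$ the set $C_\alpha'$ is literally $D_{(\lfloor r/2^{p\alpha}\rfloor,T)}'$ (both equal $\{(j+\ell k,T_\ell)\colon 0\le\ell\le 2^z-1\}$ with $j=\lfloor r/2^{p\alpha}\rfloor<k$), so the class is entirely absorbed. For the nonemptiness direction, however, you take a genuinely different route. The paper produces an explicit surviving state in each remaining class: $(j,B)$ for every class $D_{(j,B)}$, the state $(j,T)$ itself when it belongs to no $C_\alpha'$, and the second element $(j+k,B)$ of $D_{(j,T)}'$ when $(j,T)\in C_\alpha'$ with $p\alpha<z$ --- this last case needing a small argument that $(j+k,B)$ can be neither the first nor the second element of any $C_\beta'$. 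You replace all of this by a single counting argument: reduction modulo $k$ (using that $\lfloor r/2^{p\beta}\rfloor<k$ when $p\beta\ge z$ and that quotient and remainder by $k$ are unique) shows $D_{(j,X)}'$ is disjoint from every $C_\beta'$ with $p\beta\ge z$ unless $(j,X)$ is of the excluded form, while the sets $C_\beta'$ with $p\beta<z$ contain at most $\sum_{\alpha=0}^{\lceil z/p\rceil-1}2^{p\alpha}=\frac{2^{p\lceil z/p\rceil}-1}{2^p-1}<2^z=|D_{(j,X)}'|$ states in total, so something survives. Your geometric-sum bound is valid since $p\lceil z/p\rceil\le z+p-1$, and the edge cases ($z=0$, $j=0$, and $\alpha>N$ forcing $\lfloor r/2^{p\alpha}\rfloor=0$) are all handled or harmless. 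The trade-off is that your argument is non-constructive --- it does not say which state of $D_{(j,X)}$ survives --- but nothing downstream of the proposition needs an explicit witness, so this buys a more uniform proof that dispenses with the paper's case analysis.
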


\begin{proof}
The $k$ classes $D_{(0,B)},\ldots,D_{(k-1,B)}$ are all nonempty since for any $j\in[\![0,k-1]\!]$, the state $(j,B)$ does not belong to any $C_\alpha$. 

Now, let $j\in[\![1,k-1]\!]$. We have to show that all classes of the form $D_{(\lfloor\frac{r}{2^{p\alpha}}\rfloor,T)}$ with $p\alpha\ge z$ are empty (observe that if $p\alpha\ge z$ then $\llfloor\frac{r}{2^{p\alpha}}\rrfloor\le k-1$), and that the other classes $D_{(j,T)}$ are nonempty. 

If $(j,T)\notin \cup_{\alpha=0}^N C_\alpha'$, then the class $D_{(j,T)}$ is nonempty since it contains $(j,T)$. In this case, $j\ne \lfloor\frac{r}{2^{p\alpha}}\rfloor$ for any $\alpha\in[\![0,N]\!]$. Now, suppose that there exists some $\alpha\in[\![0,N]\!]$ such that $(j,T)\in C_\alpha'$. Since $j<k$, this $\alpha$ is unique and $j=\lfloor\frac{r}{2^{p\alpha}}\rfloor$. We have to show that $D_{(j,T)}$ is empty if and only if $p\alpha\ge z$. 

Clearly, $p\alpha\ge z$ implies that $D_{(j,T)}'=\{(j+\ell k,T_\ell)\colon\ell\in[\![0,2^z-1]\!]\}=C_\alpha'$, and hence that $D_{(j,T)}$ is empty. 

Now suppose that $p\alpha< z$. We show that the second element $(j+k,B)$ of the set $D_{(j,T)}'$ does not belong to any set $C_\beta'$, and hence indeed belongs to the class $D_{(j,T)}$. Let $\beta\in[\![0,N]\!]$ and suppose to the contrary that $(j+k,B)\in C_\beta'$. Since $j+k\in[\![0,2k-1]\!]$, the state $(j+k,B)$ must be either the first or the second element of the set $C_\beta'$. But since $B\ne T_0=T$, it has to be the second. If $p\beta< z$, then we obtain $j+k=\lfloor\frac{r}{2^{p\beta}}\rfloor+k2^{z-p\beta}\ge 2k$, a contradiction. Thus $p\beta\ge z$ and $j+k=\lfloor\frac{r}{2^{p\beta}}\rfloor+k$. But this implies that $\lfloor\frac{r}{2^{p\alpha}}\rfloor=j=\lfloor\frac{r}{2^{p\beta}}\rfloor$. Since $\beta> \alpha$, this means that $j=0$, a contradiction.
\end{proof}

\begin{corollary}
\label{cor:empty-D}
There are exactly $N-\lceil \frac{z}{p}\rceil$ empty classes among the $2k-1$ classes $D_{(j,X)}$.
\end{corollary}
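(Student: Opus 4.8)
The plan is to read off the count directly from Proposition~\ref{prop:empty-D}, converting the characterization of the empty classes into a counting problem that is controlled by the strict monotonicity of $\alpha\mapsto\llfloor\frac{r}{2^{p\alpha}}\rrfloor$ already recorded in the proof of Proposition~\ref{prop:nonempty-C}. By Proposition~\ref{prop:empty-D}, a class $D_{(j,X)}$ is empty precisely when it has the form $D_{(\lfloor r/2^{p\alpha}\rfloor,T)}$ for some $\alpha$ with $p\alpha\ge z$, i.e.\ $\alpha\ge\lceil\frac zp\rceil$, where $\alpha$ ranges in the admissible set $[\![0,N]\!]$. Since the classes $D_{(j,X)}$ are only defined for $(j,X)\ne(0,T)$, the first thing I would do is discard those values of $\alpha$ for which $\llfloor\frac{r}{2^{p\alpha}}\rrfloor=0$: such an $\alpha$ produces the symbol $D_{(0,T)}$, which is not a genuine class (it is absorbed into the $C$-classes, as $(k\ell,T_\ell)\in C_N'$ for all $\ell$), so it must not be counted.

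Next I would translate this non-vanishing condition into a sharp bound on $\alpha$. Recalling $R=|\rep_{2^p}(r)|$, we have $2^{p(R-1)}\le r<2^{pR}$, whence $\llfloor\frac{r}{2^{p\alpha}}\rrfloor\ge1$ holds exactly when $\alpha\le R-1$. Therefore the genuinely empty $D$-classes are exactly those of the form $D_{(\lfloor r/2^{p\alpha}\rfloor,T)}$ with $\alpha$ in the range $[\![\lceil\frac zp\rceil,R-1]\!]\cap[\![0,N]\!]$. At this point I would invoke the strict monotonicity of $\alpha\mapsto\llfloor\frac{r}{2^{p\alpha}}\rrfloor$ on $[\![0,R]\!]$: distinct admissible $\alpha$ give distinct first components $j=\llfloor\frac{r}{2^{p\alpha}}\rrfloor$, hence pairwise distinct classes $D_{(j,T)}$, so no overcounting occurs and the number of empty classes equals the number of admissible $\alpha$.

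It then remains to count the integers $\alpha$ with $\lceil\frac zp\rceil\le\alpha\le R-1$ and $\alpha\le N$, which I would do by splitting on the value of $N=\max\{\lceil\frac zp\rceil,R\}$. If $R>\lceil\frac zp\rceil$, then $N=R$, the constraint $\alpha\le N$ is vacuous, and $[\![\lceil\frac zp\rceil,R-1]\!]$ contains exactly $R-\lceil\frac zp\rceil=N-\lceil\frac zp\rceil$ integers. If instead $R\le\lceil\frac zp\rceil$, then $N=\lceil\frac zp\rceil$ and $R-1<\lceil\frac zp\rceil$, so the range is empty, giving $0=N-\lceil\frac zp\rceil$ empty classes. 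In both cases the number of empty $D$-classes equals $N-\lceil\frac zp\rceil$, as claimed.

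The argument is essentially bookkeeping built on top of Proposition~\ref{prop:empty-D}, so I do not expect a deep obstacle. The one point that genuinely requires care is the boundary value $\alpha=R$, where $\llfloor\frac{r}{2^{p\alpha}}\rrfloor$ drops to $0$ and the corresponding ``class'' $D_{(0,T)}$ has to be excluded; properly accounting for this exclusion, together with the monotonicity that guarantees the admissible $\alpha$ are in bijection with the empty classes, is exactly what makes the total come out to $N-\lceil\frac zp\rceil$ rather than one more.
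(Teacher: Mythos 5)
Your proposal is correct and follows essentially the same route as the paper: both reduce the count, via Proposition~\ref{prop:empty-D}, to counting the $\alpha\in[\![\lceil\frac zp\rceil,N]\!]$ with $\llfloor\frac{r}{2^{p\alpha}}\rrfloor\ne 0$, observe that this holds exactly for $\alpha\le R-1$, and split into the two cases $N=\lceil\frac zp\rceil$ and $N=R>\lceil\frac zp\rceil$. Your explicit justification of the exclusion of $D_{(0,T)}$ and of the injectivity of $\alpha\mapsto\llfloor\frac{r}{2^{p\alpha}}\rrfloor$ on the relevant range merely makes precise two points the paper leaves implicit.
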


\begin{proof}
By Proposition~\ref{prop:empty-D}, the $k$ classes $D_{(0,B)},\ldots,D_{(k-1,B)}$ are nonempty and we have to count the number of classes of the form $D_{(\lfloor\frac{r}{2^{p\alpha}}\rfloor,T)}$ with $p\alpha\ge z$ among the $k-1$ classes $D_{(1,T)},\ldots,D_{(k-1,T)}$. Note that by Lemma~\ref{lem:number-classes} and by definition of $N$, we have $N-\lceil\frac{z}{p}\rceil\in[\![0,k-1]\!]$. 

Equivalently, we have to count the elements $\alpha\in[\![\lceil \frac{z}{p}\rceil,N]\!]$ such that $\lfloor\frac{r}{2^{p\alpha}}\rfloor\ne 0$. Similarly to the proof of Proposition~\ref{prop:nonempty-C}, we consider two cases (also see Figure~\ref{fig:nonempty-classes}). If $N=\lceil \frac{z}{p}\rceil$ then there is no such $\alpha$ at all since $\lfloor\frac{r}{2^{pN}}\rfloor= 0$. If $N=R>\lceil \frac{z}{p}\rceil$, then the suitable $\alpha$ are exactly those in $[\![\lceil \frac{z}{p}\rceil,R-1]\!]$, and there are exactly $R-1-\lceil \frac{z}{p}\rceil+1=N-\lceil \frac{z}{p}\rceil$ of them. Hence the conclusion.
\end{proof}

\subsection{States of the same class are indistinguishable}
\label{sec:reduction1}

In order to prove that two states $(j,X)$ and $(j',X')$ of the automaton $\Pi \left( \A_{m,r,2^p} \times \A_{\T,2^p} \right)$ are indistinguishable, we have to prove that $L_{(j,X)}=L_{(j',X')}$. The general procedure that we use for proving that $L_{(j,X)}= L_{(j',X')}$ goes as follows. Pick some word $v\in A_{2^p}^*$ and let $n=|v|$ and $e=\val_{2^p}(v)$. 
By Lemma~\ref{lem:transitionsProd-proj}, the word $v$ is accepted from the state $(j,X)$ if and only if there exists some $d \in\N$ 
such that 
\[
	2^{pn}j+e = md+r \quad \andrm \quad X_d=T.
\]
Similarly, the word $v$ is accepted from the state $(j',X')$ if and only if there exists some $d' \in\N$ 
such that 
\[
	2^{pn}j+e = md'+r \quad \andrm \quad X_{d'}=T.
\]
But then, observe that there is only one possible pair of candidates for $d$ and $d'$: we necessarily have
\begin{equation}
\label{eq:d-d'}	
	d=\frac{2^{pn}j+e-r}{m}\quad \text{ and } \quad d'=\frac{2^{pn}j'+e-r}{m}.
\end{equation}
Therefore, proving that 
\[
	L_{(j,X)}= L_{(j',X')}
\] 
is equivalent to proving that for all $n\in\N$ and $e\in[\![0,2^{pn}{-}1]\!]$, we have
\[
(d\in\N \text{ and } X_d=T)\iff
(d'\in\N \text{ and } (X')_{d'}=T) 
\]
where $d$ and $d'$ are given by~\eqref{eq:d-d'}.
Moreover, note such $d$ and $d'$ are always greater than or equal to $-\frac{r}{m}$, hence they are greater than $-1$. Thus, provided that $d$ and $d'$ are integers, we know that they are necessary nonnegative. Similarly, thanks to Remark~\ref{rem:2^pn}, $d$ and $d'$ must be less than $2^{pn}$. For these reasons, in the forthcoming proofs (namely, in Lemmas~\ref{lem:0} and~\ref{lem:C-1}), we need to verify that $d,d'\in\Z$ but we don't need to check that $0\le d,d'<2^{pn}$. 
\medskip

Our first aim is to show that all states in the same class $D_{(j,X)}$ accept the same language. We start with a lemma that will be used several times. Note that this lemma does not only concern the classes $D_{(j,X)}$ since we can have $(j,X)=(0,T)$ in the statement.

\begin{lemma}
\label{lem:0}
Let $j\in[\![0,k-1]\!]$, $\ell\in[\![0,2^z-1]\!]$ and $X\in\{T,B\}$. For all $n\in\N$ such that $pn\ge z$, we have
\[
	L_{(j,X)}\cap (A_{2^p})^n
	=L_{(j+\ell k,X_\ell)}\cap (A_{2^p})^n.
\]
\end{lemma}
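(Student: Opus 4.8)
The plan is to fix an arbitrary word $v\in A_{2^p}^n$ and read off when it is accepted from each of the two states, using Lemma~\ref{lem:transitionsProd-proj} together with the reduction set up just before the statement. Writing $e=\val_{2^p}(v)$, the word $v$ lies in $L_{(j,X)}$ if and only if the number $d$ of~\eqref{eq:d-d'} is an integer and $X_d=T$, and it lies in $L_{(j+\ell k,X_\ell)}$ if and only if the corresponding number $d'=\frac{2^{pn}(j+\ell k)+e-r}{m}$ is an integer and $(X_\ell)_{d'}=T$. Since $m=k2^z$ and $pn\ge z$, a direct computation gives $d'=d+\ell\,2^{pn-z}$ with $\ell\,2^{pn-z}\in\N$; in particular $d\in\Z$ if and only if $d'\in\Z$. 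When neither is an integer, $v$ belongs to neither language and there is nothing to prove, so I may assume $d,d'\in\Z$ from now on.

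First I would bound $d$. As recalled before the statement, $d>-\frac rm>-1$, so $d\ge0$; and using $j\le k-1$, $e\le 2^{pn}-1$ and $r\ge0$ the numerator of $d$ is at most $2^{pn}k-1$, whence
\[
	0\le d\le\frac{2^{pn}k-1}{k2^z}<2^{pn-z}.
\]
This bound is the crux. Because $d<2^{pn-z}$ while $\ell\,2^{pn-z}$ only involves binary positions $\ge pn-z$, the two summands of $d'=\ell\,2^{pn-z}+d$ have disjoint binary supports, so the base-$2$ expansion of $d'$ is the concatenation of $\rep_2(\ell)$ with the length-$(pn-z)$ padding of $\rep_2(d)$. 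Hence $|\rep_2(d')|_1=|\rep_2(\ell)|_1+|\rep_2(d)|_1$, and by the definition of $\T$ (equivalently, by Lemma~\ref{lemlem:transitionsTM}) this yields that $d'\in\T$ if and only if $\ell$ and $d$ are both in $\T$ or both outside $\T$.

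It then remains to convert this into a statement about the subscript notation. For any $Y\in\{T,B\}$ and $s\in\N$ one has $Y_s=T$ exactly when the parity of $|\rep_2(s)|_1$ encodes the type of $Y$ (taking $T$ even and $B$ odd), so the type of $Y_s$ is that of $Y$ shifted by $|\rep_2(s)|_1\bmod2$. Using the parity identity above, the shift produced by the subscript $d'$ on $X_\ell$ coincides with the shift produced by the subscript $d$ on $X$, since the extra contribution $|\rep_2(\ell)|_1$ is exactly the one already absorbed into $X_\ell$; concretely $(X_\ell)_{d'}=X_d$, and in particular $(X_\ell)_{d'}=T\iff X_d=T$. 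Combining this with $d\in\Z\iff d'\in\Z$ gives $v\in L_{(j,X)}\iff v\in L_{(j+\ell k,X_\ell)}$, which is the claimed equality of the two languages restricted to length $n$.

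The main obstacle is precisely this no-carry phenomenon: additivity of the $1$-count parity under the sum $d+\ell\,2^{pn-z}$ fails in general, and holds here only because the hypothesis $pn\ge z$ together with the bound $d<2^{pn-z}$ (which in turn rests on $j\le k-1$) forces the binary supports of the two summands to be disjoint. Securing this bound and invoking the concatenation identity is the heart of the argument; everything else is routine bookkeeping with the $Y_s$ notation.
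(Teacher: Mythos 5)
Your proof is correct and follows essentially the same route as the paper's: both reduce to the integrality and type conditions on $d$ and $d'=d+\ell\,2^{pn-z}$, establish the bound $d<2^{pn-z}$ from $j\le k-1$, and conclude via the concatenation $\rep_2(d')=\rep_2(\ell)0^{pn-z-|\rep_2(d)|}\rep_2(d)$ that $X_d=(X_\ell)_{d'}$. You merely spell out the parity bookkeeping behind that last identity in more detail than the paper does.
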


\begin{proof}
Let $n\in\N$ such that $pn\ge z$ and let $e\in[\![0,2^{pn}-1]\!]$. Set
\[
	d=\frac{2^{pn}j+e-r}{m}
	\quad \text{and}\quad 
	d'=\frac{2^{pn}(j+\ell k)+e-r}{m}.
\]
Following the procedure described above, we have to prove that $(d\in\N \andrm X_d=T) \iff (d'\in\N \andrm (X_\ell)_{d'}=T)$. Since $d'=d+\frac{2^{pn}\ell k}{m}=d+\ell 2^{pn-z}$ and since $pn\ge z$, $d$ is an integer if and only if so is $d'$. Moreover, 
\begin{equation}
\label{eq:lem2}
	d\le \frac{2^{pn}j+e}{m}<\frac{2^{pn}(j+1)}{m}\le\frac{2^{pn}k}{m}=2^{pn-z}.
\end{equation}
If $d,d'\in\N$ then $\rep_2(d')=\rep_2(\ell)0^{pn-z-|\rep_2(d)|}\rep_2(d)$, and hence $X_d=(X_\ell)_{d'}$. 
\end{proof}

\begin{proposition}
\label{prop:min-D}
Let $(j,X)\in([\![0,k-1]\!]\times \{T,B\})\setminus \{(0,T)\}$. Then any two states in $D_{(j,X)}$ accept the same language.
\end{proposition}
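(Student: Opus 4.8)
The plan is to partition the words over $A_{2^p}$ according to their length and to treat separately the \emph{long} words, those of length $n$ with $pn\ge z$, and the \emph{short} words, those with $pn<z$. Fix $(j,X)$ as in the statement and take two arbitrary states of $D_{(j,X)}$; since $D_{(j,X)}\subseteq D_{(j,X)}'$, they are of the form $(j+\ell k,X_\ell)$ and $(j+\ell' k,X_{\ell'})$ with $\ell,\ell'\in[\![0,2^z{-}1]\!]$. I would prove that these two states accept exactly the same words of each given length $n$, which yields $L_{(j+\ell k,X_\ell)}=L_{(j+\ell' k,X_{\ell'})}$.

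The long words require almost no new work: applying Lemma~\ref{lem:0} with the base state $(j,X)$ (which is the case $\ell=0$, since $0\in\T$ gives $X_0=X$) yields, for every $n$ with $pn\ge z$,
\[
	L_{(j+\ell k,X_\ell)}\cap (A_{2^p})^n
	=L_{(j,X)}\cap (A_{2^p})^n
	=L_{(j+\ell' k,X_{\ell'})}\cap (A_{2^p})^n,
\]
so the two states agree on all long words.

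The crux, and the main obstacle, lies in the short words: I claim that no state of $D_{(j,X)}$ accepts any word $v$ of length $n$ with $pn<z$, so that the two chosen states agree trivially on such lengths (both accept nothing). To see this, suppose $v$ of length $n$ with $pn<z$ were accepted from a state $(i,Y)$. By Lemma~\ref{lem:transitionsProd-proj} there is $d\in\N$ with $2^{pn}i+\val_{2^p}(v)=md+r$ and $Y_d=T$. Reducing modulo $2^{pn}$, which divides $m=k2^z$ since $pn<z$, and using $\val_{2^p}(v)<2^{pn}$, forces $\val_{2^p}(v)=r\bmod 2^{pn}$; in particular $v$ is uniquely determined. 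Substituting $r=\val_{2^p}(v)+2^{pn}\llfloor\frac{r}{2^{pn}}\rrfloor$ turns the equation into $2^{pn}\big(i-\llfloor\frac{r}{2^{pn}}\rrfloor\big)=md$, whence $i=\llfloor\frac{r}{2^{pn}}\rrfloor+\ell\frac{m}{2^{pn}}$ with $\ell:=d\in[\![0,2^{pn}{-}1]\!]$, while the condition $Y_d=T$ is equivalent to $Y=T_\ell$. Thus $(i,Y)\in C_n'$. Since the states of $D_{(j,X)}$ are precisely those removed from $\bigcup_{\alpha=0}^N C_\alpha$, and this union coincides with $\bigcup_{\alpha=0}^N C_\alpha'$ (each state lies in the first $C_\alpha'$ it meets), no state of $D_{(j,X)}$ can belong to any $C_n'$, and hence none accepts a short word.

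Combining the two cases shows that the two chosen states of $D_{(j,X)}$ accept the same language, which is the assertion. The only delicate points to handle with care are the reduction modulo $2^{pn}$, which simultaneously pins down $v$ and forces the shape $i=\llfloor\frac{r}{2^{pn}}\rrfloor+\ell\frac{m}{2^{pn}}$, and the elementary equality $\bigcup_\alpha C_\alpha=\bigcup_\alpha C_\alpha'$ used to transfer the exclusion from the $C_\alpha$ to the $C_\alpha'$.
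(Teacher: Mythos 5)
Your proof is correct and follows essentially the same route as the paper's: Lemma~\ref{lem:0} handles the words of length $n$ with $pn\ge z$, and for $pn<z$ you show that accepting such a word would force the state into $C_n'$, hence into $\bigcup_\alpha C_\alpha$, contradicting membership in $D_{(j,X)}$ — exactly the contradiction the paper derives. The only cosmetic difference is that you treat the two states symmetrically (neither accepts a short word) where the paper argues a one-sided inclusion; the mathematical content is identical.
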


\begin{proof}
Let $\ell,\ell'\in[\![0,2^z-1]\!]$. It suffices to show that if $(j+\ell k,X_\ell)\in D_{(j,X)}$ then $L_{(j+\ell k,X_\ell)}\subseteq L_{(j+\ell' k,X_{\ell'})}$. Thus, suppose that $(j+\ell k,X_\ell)\notin \cup_{\alpha=0}^N C_\alpha$. Let $n\in\N$ and $e\in[\![0,2^{pn}-1]\!]$. Set $d=\frac{2^{pn}(j+\ell k)+e-r}{m}$ and assume that $d\in\N$ and $(X_\ell)_d=T$. Then $X_\ell=T_d$. If $pn<z$ then $\frac{r-e}{2^{pn}}=\left\lfloor\frac{r}{2^{pn}}\right\rfloor$ because  $\frac{r-e+dm}{2^{pn}}=j+k\ell$ is an integer, $m$ is divisible by $2^{pn}$ and $e\in[\![0,2^{pn}-1]\!]$. Therefore, if $pn<z$ then we get that
\[
	(j+\ell k,X_\ell)
	=\Big(\frac{r-e+dm}{2^{pn}},T_d\Big)
	=\Big(\left\lfloor\frac{r}{2^{pn}}\right\rfloor+d\frac{m}{2^{pn}},T_d\Big)
	\in C_n'
\]
which contradicts our assumption.  So $pn\ge z$ and the conclusion follows from Lemma~\ref{lem:0}.
\end{proof}

Note that the proof of Proposition~\ref{prop:min-D} shows that no word shorter than $\lfloor\frac{z}{p}\rfloor$ is accepted from a state of a class $D_{(j,X)}$. However, such words may be accepted from a state of one of the classes $C_\alpha$ (see Lemma~\ref{lem:suffixes} below).

Now  we turn to the classes $C_\alpha$. The proof is divided in several technical lemmas.

\begin{lemma}
\label{lem:C-1}
For every $\alpha\in[\![0,N]\!]$, any two states in $C_\alpha'$ accept the same words of length at least $\alpha$.
\end{lemma}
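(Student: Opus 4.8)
The plan is to run the $d,d'$ reduction set up at the start of Section~\ref{sec:reduction1}. Fix $\alpha\in[\![0,N]\!]$ and write a generic element of $C_\alpha'$ as $(c+\ell q,T_\ell)$, where $c=\lfloor\frac r{2^{p\alpha}}\rfloor$, and where $q=\frac m{2^{p\alpha}}$ (with $0\le\ell\le2^{p\alpha}-1$) if $\alpha\le\frac zp$, while $q=k$ (with $0\le\ell\le2^z-1$) if $\alpha\ge\frac zp$. Given two states of $C_\alpha'$ indexed by $\ell$ and $\ell'$, and a word $v$ with $n=|v|\ge\alpha$ and $e=\val_{2^p}(v)$, Lemma~\ref{lem:transitionsProd-proj} (as exploited in Section~\ref{sec:reduction1}) tells us that $v$ is accepted from $(c+\ell q,T_\ell)$ iff $d:=\frac{2^{pn}(c+\ell q)+e-r}m\in\N$ and $(T_\ell)_d=T$, and analogously from $(c+\ell' q,T_{\ell'})$ with the corresponding $d'$.

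The first step is to isolate the $\ell$-dependence. Writing $d_0=\frac{2^{pn}c+e-r}m$ for the value at $\ell=0$, I get $d=d_0+\ell\cdot\frac{2^{pn}q}m$ and $d'=d_0+\ell'\cdot\frac{2^{pn}q}m$. Here $\frac{2^{pn}q}m=2^{p(n-\alpha)}$ in the first case and $\frac{2^{pn}q}m=2^{pn-z}$ in the second; in both cases this is $2^s$ for some integer $s\ge0$, since $n\ge\alpha$ (first case) and $n\ge\alpha\ge\frac zp$ (second case). Thus $d-d_0=\ell\,2^s$ and $d'-d_0=\ell'\,2^s$ are integers, so $d\in\Z\iff d_0\in\Z\iff d'\in\Z$, and it remains to compare $(T_\ell)_d$ with $(T_{\ell'})_{d'}$ on the event $d_0\in\Z$.

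The crux is to locate $d_0$ precisely. From $0\le e\le2^{pn}-1$ and $0\le r<m$ one gets $d_0>-1$ together with $d_0<\frac{2^{pn}(c+1)}m\le2^s$: in the first case $2^{p\alpha}\mid m$ gives $c+1\le\frac m{2^{p\alpha}}$, and in the second case $p\alpha\ge z$ gives $c+1\le k$. Hence, whenever $d_0\in\Z$, we have $0\le d_0<2^s$, so $d_0$ and $\ell$ are exactly the base-$2$ low and high parts of $d=\ell\,2^s+d_0$; that is, $\rep_2(d)=\rep_2(\ell)\,0^{s-|\rep_2(d_0)|}\rep_2(d_0)$, whence $|\rep_2(d)|_1=|\rep_2(\ell)|_1+|\rep_2(d_0)|_1$, and likewise $|\rep_2(d')|_1=|\rep_2(\ell')|_1+|\rep_2(d_0)|_1$. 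Now $(T_\ell)_d=T$ holds iff $|\rep_2(\ell)|_1+|\rep_2(d)|_1$ is even (the parity bookkeeping behind Lemma~\ref{lemlem:transitionsTM}), which simplifies to $|\rep_2(d_0)|_1$ being even, i.e.\ to $d_0\in\T$; the same computation gives $(T_{\ell'})_{d'}=T\iff d_0\in\T$. Combining with the integrality equivalence, $v$ is accepted from $(c+\ell q,T_\ell)$ iff $d_0\in\Z$ and $d_0\in\T$ iff it is accepted from $(c+\ell' q,T_{\ell'})$, as required (by Remark~\ref{rem:2^pn} we need not separately check $d,d'<2^{pn}$). The one genuinely delicate point is the bound $0\le d_0<2^s$: it is precisely what yields the clean binary concatenation and hence the cancellation of the $\ell$-dependence, and it is where the hypothesis $n\ge\alpha$ and the divisibility of $m$ by $2^{p\alpha}$ (resp.\ by $2^z$) enter. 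Everything else is the routine $d,d'$ computation of Section~\ref{sec:reduction1}, run in parallel for the two cases of the definition (which agree when $p\mid z$ and $\alpha=\frac zp$).
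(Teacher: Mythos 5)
Your proof is correct and follows essentially the same route as the paper's: the $d,d'$ reduction, the key bound $\big\lfloor\frac{r}{2^{p\alpha}}\big\rfloor+1\le\frac{m}{2^{p\alpha}}$ (resp.\ $\le k$) giving $0\le d_0<2^s$, and the binary-concatenation/parity argument. The only cosmetic difference is that you treat both cases of the definition uniformly and compare two generic states symmetrically through $d_0$, whereas the paper compares each state to the first element of $C_\alpha'$ and disposes of the case $p\alpha>z$ by invoking Lemma~\ref{lem:0}, whose content you re-derive inline.
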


\begin{proof}
Let $\alpha\in[\![0,N]\!]$. First, we do the case $\alpha \le \frac{z}{p}$. By definition of the sets $C_\alpha'$, it suffices to show that for all $\ell\in[\![0,2^{p\alpha}{-}1]\!]$ and $n\ge \alpha$, we have $L_{( \lfloor\frac{r}{2^{p\alpha}}\rfloor,T)}\cap (A_{2^p})^n=L_{( \lfloor\frac{r}{2^{p\alpha}}\rfloor+\ell \frac{m}{2^{p\alpha}},T_\ell)}\cap (A_{2^p})^n$. Thus, let $\ell\in[\![0,2^{p\alpha}{-}1]\!]$, $n\ge \alpha$ and $e\in[\![0,2^{pn}{-}1]\!]$. Then set 
\[
	d=\frac{2^{pn}\lfloor\frac{r}{2^{p\alpha}}\rfloor+e-r}{m}
	\quad\text{and}\quad 
	d'=\frac{2^{pn}(\lfloor\frac{r}{2^{p\alpha}}\rfloor+\ell \frac{m}{2^{p\alpha}})+e-r}{m}.
\]
We have to prove that $(d\in\N 
\text{ and }T_d=T) \iff (d'\in\N
\text{ and }(T_\ell)_{d'}=T)$. Since $\llfloor\frac{r}{2^{p\alpha}}\rrfloor<\frac{m}{2^{p\alpha}}=k2^{z-p\alpha}$ and $z-p\alpha\ge 0$, we obtain that $\llfloor\frac{r}{2^{p\alpha}}\rrfloor+1\le \frac{m}{2^{p\alpha}}$. Then
\begin{equation}
\label{eqn:1}
 	d<\frac{2^{pn}\big(\llfloor\frac{r}{2^{p\alpha}}\rrfloor+1\big)}{m}
 	\le 2^{p(n-\alpha)}.
\end{equation}
Since $d'= d +\ell 2^{p(n-\alpha)}$ and $n\ge \alpha$, it follows that $d$ is an integer if and only if so is $d'$. Moreover, if both $d$ and $d'$ are in $\N$ then $\rep_2(d')=\rep_2(\ell)0^{p(n-\alpha)-|\rep_2(d)|}\rep_2(d)$, hence $T_d=(T_\ell)_{d'}$. 

Second, suppose that $\alpha >\frac{z}{p}$. In this case, we have to show that for all $\ell\in[\![0,2^z{-}1]\!]$ and $n\ge \alpha$, we have $L_{( \lfloor\frac{r}{2^{p\alpha}}\rfloor,T)}\cap (A_{2^p})^n = L_{( \lfloor\frac{r}{2^{p\alpha}}\rfloor+\ell k,T_\ell)}\cap (A_{2^p})^n$. Since $\llfloor\frac{r}{2^{p\alpha}}\rrfloor<\frac{m}{2^{p\alpha}}=k2^{z-p\alpha}< k$, the conclusion follows from Lemma~\ref{lem:0}.
\end{proof}





\begin{lemma}
\label{lem:C-2}
Let $\alpha\in[\![0,N]\!]$.
\begin{enumerate}
	\item If $p\alpha\le z$ then no state in $C_\alpha$ accepts any words of length $<\alpha$.
	\item If $p\alpha> z$ then no state in $C_\alpha$ accepts any words of length  $\le \llfloor\frac{z}{p}\rrfloor$.
\end{enumerate}
\end{lemma}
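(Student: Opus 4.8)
The plan is to prove both items simultaneously by isolating a single ``push-back'' claim: \emph{any state $(j,X)$ of $\Pi(\A_{m,r,2^p}\times\A_{\T,2^p})$ that accepts some word of length $n$ with $pn\le z$ must already lie in $C_n'$}. Granting this claim, item~1 is immediate: if $(j,X)\in C_\alpha$ with $p\alpha\le z$ accepted a word of length $n<\alpha$, then $pn<p\alpha\le z$, so $(j,X)\in C_n'\subseteq\bigcup_{\beta=0}^{\alpha-1}C_\beta'$, contradicting $(j,X)\in C_\alpha=C_\alpha'\setminus\bigcup_{\beta=0}^{\alpha-1}C_\beta'$ (Definition~\ref{def:C}). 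Item~2 follows the same way: if $(j,X)\in C_\alpha$ with $p\alpha>z$ accepted a word of length $n\le\lfloor z/p\rfloor$, then $pn\le p\lfloor z/p\rfloor\le z$ and also $n\le\lfloor z/p\rfloor\le z/p<\alpha$ (since $p\alpha>z$), so again $(j,X)\in C_n'$ with $n<\alpha$, a contradiction. Note that in both situations $n<\alpha\le N$, so $C_n'$ is indeed defined.

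To prove the claim I would invoke the acceptance criterion of Lemma~\ref{lem:transitionsProd-proj}. If a word $v$ of length $n$ and value $e=\val_{2^p}(v)$ is accepted from $(j,X)$, there is an integer $d$ with $0\le d<2^{pn}$ (the upper bound coming from Remark~\ref{rem:2^pn}) such that $2^{pn}j+e=md+r$ and $X_d=T$, i.e.\ $X=T_d$. Since $pn\le z$ we have $2^{pn}\mid m=k2^z$, hence $2^{pn}\mid md$, and so $2^{pn}\mid(r-e)$. As $0\le e<2^{pn}$, this forces $e=r\bmod 2^{pn}$ and $\frac{r-e}{2^{pn}}=\lfloor\frac{r}{2^{pn}}\rfloor$. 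Solving $2^{pn}j=md+r-e$ for $j$ then yields $j=\lfloor\frac{r}{2^{pn}}\rfloor+d\,\frac{m}{2^{pn}}$, which is exactly the arithmetic already performed in the $pn<z$ case of the proof of Proposition~\ref{prop:min-D}.

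It remains to match $(j,X)=\big(\lfloor\frac{r}{2^{pn}}\rfloor+d\,\frac{m}{2^{pn}},\,T_d\big)$ to the definition of $C_n'$ according to the two regimes. If $pn<z$ then $n<z/p$, the first case of the definition applies with step $\frac{m}{2^{pn}}$, and since $d\in[\![0,2^{pn}-1]\!]$ the pair $(j,X)$ is precisely the element of $C_n'$ indexed by $\ell=d$. If $pn=z$ (which forces $p\mid z$ and $n=z/p$), then $\frac{m}{2^{pn}}=k$ and $d\in[\![0,2^z-1]\!]$, so $(j,X)$ is the $\ell=d$ element of the second-case description of $C_n'$; this is legitimate because, as noted right after the definition, the two formulas coincide at $\alpha=z/p$. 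Either way $(j,X)\in C_n'$, establishing the claim. The only genuine subtlety—the part I expect to require care—is keeping track of which of the two defining formulas for $C_n'$ is in force and verifying the boundary value $pn=z$ where they meet; the underlying computation is otherwise the single divisibility argument described above.
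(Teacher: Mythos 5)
Your proof is correct and follows essentially the same route as the paper's: the same acceptance criterion (Lemma~\ref{lem:transitionsProd-proj} together with the bound of Remark~\ref{rem:2^pn}) plus the divisibility of $m$ by $2^{pn}$ is used to rewrite the accepting state as $\big(\lfloor r/2^{pn}\rfloor + d\,\tfrac{m}{2^{pn}},\,T_d\big)\in C_n'$ and contradict membership in $C_\alpha$. The only difference is organizational --- you factor the computation into a single ``push-back'' claim and derive both items from it, whereas the paper runs the identical calculation twice, once per item, starting from the explicit form of the states of $C_\alpha'$.
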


\begin{proof}
Let us prove item 1. Suppose that $p\alpha\le z$ and that there exists a word over $A_{2^p}$ of length $\beta<\alpha$ that is accepted from a state of the form $(\lfloor\frac{r}{2^{p\alpha}}\rfloor+\ell \frac{m}{2^{p\alpha}},T_\ell)$ with $\ell\in[\![0,2^{p\alpha}-1]\!]$, i.e.\ from a state in $C_\alpha'$. This means that there exists $e\in[\![0,2^{p\beta}-1]\!]$ such that if we set
\[
	d=\frac{2^{p\beta}(\lfloor\frac{r}{2^{p\alpha}}\rfloor+\ell \frac{m}{2^{p\alpha}})+e-r}{m},
\]
then $d\in\N$ and $(T_\ell)_d=T$. But then
\[
	\Big(\left\lfloor\frac{r}{2^{p\alpha}}\right\rfloor+\ell \frac{m}{2^{p\alpha}},T_\ell\Big)
	=\Big(\frac{r-e+dm}{2^{p\beta}},T_d\Big)
	=\Big(\left\lfloor\frac{r}{2^{p\beta}}\right\rfloor+d\frac{m}{2^{p\beta}},T_d\Big)\in C_\beta'.
\]
Since $\beta<\alpha$, the state $(\lfloor\frac{r}{2^{p\alpha}}\rfloor+\ell \frac{m}{2^{p\alpha}},T_\ell)$ does not belong to $C_\alpha$.

Now we prove item 2. Suppose that $p\alpha> z$ and that there exists a word over $A_{2^p}$ of length $\beta\le \llfloor\frac{z}{p}\rrfloor$ that is
accepted from a state of the form $(\lfloor\frac{r}{2^{p\alpha}}\rfloor+\ell k,T_\ell)$ with $\ell\in[\![0,2^z-1]\!]$, i.e.\ from a state in $C_\alpha'$. This means that there exists $e\in[\![0,2^{p\beta}-1]\!]$ such that if we set
\[
	d=\frac{2^{p\beta}(\lfloor\frac{r}{2^{p\alpha}}\rfloor+\ell k)+e-r}{m},
\]
then $d\in\N$ and $(T_\ell)_d=T$. But then
\[
	\Big(\left\lfloor\frac{r}{2^{p\alpha}}\right\rfloor+\ell k,T_\ell\Big)
	=\Big(\frac{r-e+dm}{2^{p\beta}},T_d\Big)
	=\Big(\left\lfloor\frac{r}{2^{p\beta}}\right\rfloor+d\frac{m}{2^{p\beta}},T_d\Big)\in C_\beta'.
\]
Therefore the state $(\lfloor\frac{r}{2^{p\alpha}}\rfloor+\ell k,T_\ell)$ does not belong to $C_\alpha$.
\end{proof}

\begin{lemma}
\label{lem:C-3}
If $N=\lceil\frac{z}{p}\rceil$, then no state in $C_N$ accepts any words of length $<N$.
\end{lemma}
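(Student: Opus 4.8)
The plan is to obtain Lemma~\ref{lem:C-3} as an immediate consequence of Lemma~\ref{lem:C-2} applied with $\alpha=N$. The key observation is that the hypothesis $N=\lceil z/p\rceil$ forces $pN\ge z$, so that exactly one of the two regimes of Lemma~\ref{lem:C-2} is active, according to whether or not $p$ divides $z$. Thus, rather than redoing any computation with the fractions $d=\frac{2^{p\beta}(\cdots)+e-r}{m}$, I would simply invoke the two items of Lemma~\ref{lem:C-2} and check that, for $\alpha=N$, the length bound they provide is precisely ``length $<N$''.

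First I would record the elementary arithmetic fact that $p\lceil z/p\rceil\ge z$, with equality if and only if $p$ divides $z$. Under the assumption $N=\lceil z/p\rceil$ this gives $pN=z$ when $p\mid z$ and $pN>z$ when $p\nmid z$, and these two cases are exhaustive. In the first case ($p\mid z$) we have $pN=z$, hence $pN\le z$, and item~1 of Lemma~\ref{lem:C-2} with $\alpha=N$ applies verbatim: no state in $C_N$ accepts any word of length $<N$, which is exactly the claim. In the second case ($p\nmid z$) we have $pN>z$, and moreover $\lceil z/p\rceil=\lfloor z/p\rfloor+1$, so that $\lfloor z/p\rfloor=N-1$. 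Item~2 of Lemma~\ref{lem:C-2} with $\alpha=N$ then asserts that no state in $C_N$ accepts any word of length $\le\lfloor z/p\rfloor=N-1$; since a word of length $<N$ is precisely a word of length $\le N-1$, this is again the claim.

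There is essentially no obstacle here: the statement is a book-keeping corollary of Lemma~\ref{lem:C-2}, and the whole content lies in correctly dispatching the two regimes. The only point that deserves a moment of care is the non-divisible case, where one must translate the bound ``length $<N$'' into the bound ``length $\le\lfloor z/p\rfloor$'' provided by item~2; this translation rests on the identity $\lfloor z/p\rfloor=N-1$, which holds exactly because $N=\lceil z/p\rceil$ together with $p\nmid z$. I would therefore present the proof as a short two-line case distinction citing Lemma~\ref{lem:C-2}, with no further estimation needed.
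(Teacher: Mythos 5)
Your proof is correct, but it takes a different route from the paper's. You obtain Lemma~\ref{lem:C-3} purely as a bookkeeping corollary of Lemma~\ref{lem:C-2}, splitting on whether $p$ divides $z$: when $p\mid z$ you have $pN=z$ and item~1 applies verbatim (note that the two cases in the definition of $C_N'$ coincide at $\alpha=z/p$, so the form of the states assumed in item~1 is indeed the one relevant here); when $p\nmid z$ you have $pN>z$ and $\lfloor z/p\rfloor=N-1$, so item~2's bound ``length $\le\lfloor z/p\rfloor$'' is exactly ``length $<N$''. Both steps check out. The paper instead gives a fresh direct argument: it takes a word $v$ of length $\beta<N$ accepted from a state of $C_N'$, runs the same computation as in Lemma~\ref{lem:C-2} to rewrite that state as $\big(\lfloor r/2^{p\beta}\rfloor+d\,m/2^{p\beta},T_d\big)$, and observes that $\beta<N=\lceil z/p\rceil$ forces $p\beta<z$, so the state lies in $C_\beta'$ and hence not in $C_N$. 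The two arguments rest on the same arithmetic fact (an integer $\beta<\lceil z/p\rceil$ satisfies $p\beta<z$), which in your version surfaces as the identity $\lfloor z/p\rfloor=N-1$ in the non-divisible case. What your approach buys is economy --- no computation is repeated, and it makes transparent that Lemma~\ref{lem:C-3} has content beyond Lemma~\ref{lem:C-2} only because item~2's uniform bound $\lfloor z/p\rfloor$ happens to equal $N-1$ precisely when $N=\lceil z/p\rceil$; what the paper's version buys is independence from the exact phrasing of Lemma~\ref{lem:C-2} and a self-contained identification of the class $C_\beta'$ that actually ``absorbs'' the offending state.
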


\begin{proof}
Suppose that $N=\lceil\frac{z}{p}\rceil$ and that $v$ is a word over $A_{2^p}$ of length $\beta<N$ that is accepted from a state of the form $(\lfloor\frac{r}{2^{p\alpha}}\rfloor+\ell k,T_\ell)$ with $\ell\in[\![0,2^z-1]\!]$, i.e.\ from a state in $C_N'$. Set
\[
	e=\val_{2^p}(v) 
	\quad \text{and}\quad
	d=\frac{2^{p\beta}(\lfloor\frac{r}{2^{p\alpha}}\rfloor+\ell k)+e-r}{m}.
\]
Then $d\in\N$ and $(T_\ell)_d=T$. We get
\[
\left\lfloor\frac{r}{2^{p\alpha}}\right\rfloor+\ell k
=\frac{r-e+dm}{2^{p\beta}}
=\left\lfloor\frac{r}{2^{p\beta}}\right\rfloor+d\frac{m}{2^{p\beta}}
\]
and $T_\ell=T_d$. Since  $\beta<\frac{z}{p}$, this shows that the state $(\llfloor\frac{r}{2^{p\alpha}}\rrfloor+\ell k,T_\ell)=(\llfloor\frac{r}{2^{p\beta}}\rrfloor+d\frac{m}{2^{p\beta}},T_d)$ belongs to $C_\beta'$, and hence cannot belong to $C_N$.
\end{proof}

We are now ready to prove that two states belonging to any given class $C_\alpha$ are indistinguishable. 

\begin{proposition}
\label{prop:min-C}
For every $\alpha\in[\![0,N]\!]$, any two states in $C_\alpha$ accept the same language.
\end{proposition}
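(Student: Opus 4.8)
The plan is to combine the length-stratified information furnished by Lemmas~\ref{lem:C-1},~\ref{lem:C-2} and~\ref{lem:0} (with Lemma~\ref{lem:C-3} as a backup for the extremal class). Fix $\alpha\in[\![0,N]\!]$ and two states $s,s'\in C_\alpha$; since indistinguishability means $L_s=L_{s'}$, it suffices to check, length by length, that for every $n$ either both states reject every word of length $n$ or they accept exactly the same ones. I would split the argument according to which branch of the definition of $C_\alpha'$ is active, that is, according to whether $p\alpha\le z$ or $p\alpha>z$.

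Suppose first that $p\alpha\le z$. Since $C_\alpha\subseteq C_\alpha'$, Lemma~\ref{lem:C-1} already gives that $s$ and $s'$ accept exactly the same words of length $\ge\alpha$. For the remaining lengths, Lemma~\ref{lem:C-2}(1) guarantees that no state of $C_\alpha$ accepts any word of length $<\alpha$, so both $L_s$ and $L_{s'}$ are empty there. These two ranges exhaust $\N$, whence $L_s=L_{s'}$.

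Now suppose $p\alpha>z$. Then every element of $C_\alpha'$ has the form $\big(\lfloor\frac{r}{2^{p\alpha}}\rfloor+\ell k,\,T_\ell\big)$ with $\lfloor\frac{r}{2^{p\alpha}}\rfloor\in[\![0,k-1]\!]$, which is exactly the shape to which Lemma~\ref{lem:0} applies. Taking $j=\lfloor\frac{r}{2^{p\alpha}}\rfloor$ and $X=T$ in that lemma shows that, for every $n$ with $pn\ge z$, each state of $C_\alpha'$ accepts the same words of length $n$ as the reference state $\big(\lfloor\frac{r}{2^{p\alpha}}\rfloor,T\big)$; by transitivity $s$ and $s'$ agree on all words of length $n\ge\lceil z/p\rceil$, whether or not any such words are actually accepted. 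On the other hand, Lemma~\ref{lem:C-2}(2) shows that no state of $C_\alpha$ accepts any word of length $\le\lfloor z/p\rfloor$, so $L_s$ and $L_{s'}$ are empty there. Since $\lceil z/p\rceil-\lfloor z/p\rfloor\le 1$, no integer lies strictly between $\lfloor z/p\rfloor$ and $\lceil z/p\rceil$, so the two ranges partition $\N$ and again $L_s=L_{s'}$.

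The delicate point — and the reason Lemma~\ref{lem:C-1} does not suffice on its own — is the band of lengths $n$ with $\lfloor z/p\rfloor<n<\alpha$, which is nonempty precisely when $N=R>\lceil z/p\rceil$ and $\alpha>\lceil z/p\rceil$. On this band Lemma~\ref{lem:C-2}(2) yields no rejection and Lemma~\ref{lem:C-1} yields agreement only from length $\alpha$ upward. The resolution is that, for the second branch, Lemma~\ref{lem:0} upgrades the agreement range from ``length $\ge\alpha$'' down to ``length $\ge\lceil z/p\rceil$'', dovetailing exactly with the rejection range ``length $\le\lfloor z/p\rfloor$'' of Lemma~\ref{lem:C-2}(2). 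I expect the only real care to be needed at the boundary value $\alpha=\lceil z/p\rceil$, where the two branches of the definition of $C_\alpha'$ coincide; the extremal class $C_N$ in the case $N=\lceil z/p\rceil$ can alternatively be dispatched directly through Lemma~\ref{lem:C-3} instead of Lemma~\ref{lem:0}.
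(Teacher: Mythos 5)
Your proof is correct and takes essentially the same route as the paper's: Lemma~\ref{lem:C-1} together with Lemma~\ref{lem:C-2}(1) for the branch $p\alpha\le z$, and Lemma~\ref{lem:0} together with Lemma~\ref{lem:C-2}(2) for the branch $p\alpha>z$ (your observation that $\lfloor r/2^{p\alpha}\rfloor<k$ when $p\alpha>z$, needed to apply Lemma~\ref{lem:0}, is the same one the paper makes inside the proof of Lemma~\ref{lem:C-1}). The only, cosmetic, difference is that by invoking Lemma~\ref{lem:0} for all lengths $n$ with $pn\ge z$ rather than only for $n<\alpha$, you render the paper's sub-case $N=\lceil z/p\rceil$ (dispatched there via Lemma~\ref{lem:C-3}) unnecessary.
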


\begin{proof}
Let $\alpha\in[\![0,N]\!]$. From Lemma~\ref{lem:C-1}, it is enough to consider words of length smaller than $\alpha$ and from the first item of Lemma~\ref{lem:C-2}, we may suppose that $p\alpha> z$. If $N=\lceil\frac{z}{p}\rceil$, then we must have $\alpha=N$ and we are done thanks to Lemma~\ref{lem:C-3}. Thus, we may also assume that $N=R>\lceil\frac{z}{p}\rceil$. Under these assumptions, $\lfloor\frac{r}{2^{p\alpha}}\rfloor<k$ and the first state of $C_\alpha$ is $(\lfloor\frac{r}{2^{p\alpha}}\rfloor,T)$; see Figure~\ref{fig:nonempty-classes}.  Thus, we have to show that for all $\ell\in[\![0,2^z-1]\!]$ such that the state $(\lfloor\frac{r}{2^{p\alpha}}\rfloor+\ell k,T_\ell)$ indeed belongs to $C_\alpha$ and all $n<\alpha$, we have 
\[
	L_{(\lfloor\frac{r}{2^{p\alpha}}\rfloor,T)}\cap (A_{2^p})^n
	= L_{(\lfloor\frac{r}{2^{p\alpha}}\rfloor+\ell k,T_\ell)}\cap (A_{2^p})^n.
\]
If $pn<z$ then both languages are empty by the second item of Lemma~\ref{lem:C-2}. If $pn\ge z$ then  the equality follows from Lemma~\ref{lem:0}. 
\end{proof}

\subsection{States of different classes are distinguishable}
\label{sec:distinguishable}

In this section, we show that, in the projected automaton $\Pi \left( \A_{m,r,2^p} \times \A_{\T,2^p} \right)$, states belonging to different classes $C_\alpha$ or $D_{(j,X)}$ are pairwise distinguishable, that is, for any two such states, there exists a word which is accepted from exactly one of them.

The following lemma shows that the states in a set $C_\alpha'$ are exactly those states that leads to states in $C_{\alpha-1}'$ by reading the letter $r_{\alpha-1}$, where $0^{N-R}\rep_{2^p}(r)=r_{N-1}\cdots r_1r_0$.

\begin{lemma}
\label{lem:suffixes}
Let $0^{N-R}\rep_{2^p}(r)=r_{N-1}\cdots r_1r_0$ and let $\alpha\in[\![0,N]\!]$. Then
\[
	C_\alpha'=	
	\{(i,X)\in[\![0,m-1]\!]\times \{T,B\}\colon \delta_\times^\Pi((i,X),r_{\alpha-1}\cdots r_1r_0)=(r,T)\}.
\]
\end{lemma}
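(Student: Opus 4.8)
The plan is to evaluate the transition function $\delta_\times^\Pi$ along the word $v=r_{\alpha-1}\cdots r_1r_0$ by a direct application of Lemma~\ref{lem:transitionsProd-proj}. The first observation I would record is the value carried by this word: since $r_{\alpha-1}\cdots r_0$ is the suffix of length $\alpha$ of the length-$N$ word $0^{N-R}\rep_{2^p}(r)$, whose $2^p$-value is $r$, we have $|v|=\alpha$ and
\[
	\val_{2^p}(v)=r-2^{p\alpha}\Big\lfloor\frac{r}{2^{p\alpha}}\Big\rfloor .
\]
Plugging this into Lemma~\ref{lem:transitionsProd-proj}, and using the elementary equivalence $T=X_L\iff X=T_L$, reaching $(r,T)$ from $(i,X)$ by reading $v$ becomes equivalent to the existence of some $L\in\N$ with
\[
	2^{p\alpha}\Big(i-\Big\lfloor\frac{r}{2^{p\alpha}}\Big\rfloor\Big)=mL
	\quad\andrm\quad X=T_L .
\]
It then remains to solve this for $(i,X)\in[\![0,m{-}1]\!]\times\{T,B\}$, and I would do so following the very case split ($p\alpha\le z$ versus $p\alpha\ge z$) that appears in the definition of $C_\alpha'$.

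In the case $p\alpha\le z$, the quantity $m/2^{p\alpha}=k2^{z-p\alpha}$ is an integer, so the displayed equation forces $i=\lfloor r/2^{p\alpha}\rfloor+L\,\frac{m}{2^{p\alpha}}$. Using $\lfloor r/2^{p\alpha}\rfloor<m/2^{p\alpha}$ together with the bound $L<2^{p\alpha}$ from Remark~\ref{rem:2^pn} (equivalently, the constraint $i\le m{-}1$), the admissible values are exactly $L\in[\![0,2^{p\alpha}{-}1]\!]$, with $X=T_L$. Renaming $L$ as $\ell$, this is precisely the first branch of the definition of $C_\alpha'$.

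In the case $p\alpha\ge z$, I would divide the equation by $2^z$ to get $2^{p\alpha-z}(i-\lfloor r/2^{p\alpha}\rfloor)=kL$. Since $k$ is odd, this forces $k\mid(i-\lfloor r/2^{p\alpha}\rfloor)$; writing $i=\lfloor r/2^{p\alpha}\rfloor+k\ell$ with $\ell\ge0$ yields $L=2^{p\alpha-z}\ell$, and the range $i\le m{-}1$ (with $\lfloor r/2^{p\alpha}\rfloor<m/2^{p\alpha}\le k$) pins $\ell$ down to $[\![0,2^z{-}1]\!]$. The last ingredient is to check that the second component matches, i.e.\ that $X=T_L=T_{2^{p\alpha-z}\ell}$ coincides with the $T_\ell$ appearing in the class: this holds because multiplying $\ell$ by $2^{p\alpha-z}$ only appends $p\alpha-z$ trailing zeroes to $\rep_2(\ell)$ and hence does not change the number of $1$'s, so $2^{p\alpha-z}\ell\in\T\iff\ell\in\T$ and thus $T_{2^{p\alpha-z}\ell}=T_\ell$. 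This reproduces the second branch of the definition of $C_\alpha'$.

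The calculations above are routine; the one step that is not purely mechanical is noticing that the integer $L$ returned by Lemma~\ref{lem:transitionsProd-proj} is not the running index $\ell$ of the class but its dilation $2^{p\alpha-z}\ell$, after which the parity-preservation argument collapses $T_L$ back to $T_\ell$. I would also treat separately the harmless boundary cases: $\alpha=0$, where $v=\varepsilon$ and the solution set degenerates to $\{(r,T)\}=C_0'$, and $\alpha=z/p$ when $p\mid z$, where the two branches of the definition agree and either computation applies.
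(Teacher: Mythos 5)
Your proposal is correct and follows essentially the same route as the paper's proof: compute the $2^p$-value of the suffix as $r-2^{p\alpha}\lfloor r/2^{p\alpha}\rfloor$, apply Lemma~\ref{lem:transitionsProd-proj}, split on $p\alpha\le z$ versus $p\alpha\ge z$, use the oddness of $k$ to force $2^{p\alpha-z}\mid L$, and collapse $T_{2^{p\alpha-z}\ell}$ to $T_\ell$ by the trailing-zeroes observation. The only difference is presentational (a single chain of equivalences rather than two separate inclusions), so nothing further is needed.
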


\begin{proof}
First, we consider the case where $p\alpha\le z$. Pick some $(i,X)\in C_\alpha'$. By definition, there exists $\ell\in[\![0,2^{p\alpha}-1]\!]$ such that $(i,X)=(\llfloor \frac{r}{2^{p\alpha}}\rrfloor+\ell\frac{m}{2^{p\alpha}},T_\ell)$. Observe that $\llfloor \frac{r}{2^{p\alpha}}\rrfloor=\val_{2^p}(r_{N-1}\cdots r_{\alpha+1}r_\alpha)$. Then
\[
	2^{p\alpha} \Big(\Big\lfloor \frac{r}{2^{p\alpha}}\Big\rfloor+\ell\frac{m}{2^{p\alpha}}\Big) 
	+ \val_{2^p}(r_{\alpha-1}\cdots r_1r_0) 
	= \ell m+r.
\]
Since $(T_\ell)_\ell=T$, we obtain from Lemma~\ref{lem:transitionsProd-proj} that $\delta_\times^\Pi((i,X),r_{\alpha-1}\cdots r_1r_0)=(r,T)$. Conversely, pick some state $(i,X)\in[\![0,m-1]\!]\times \{T,B\}$ such that $\delta_\times^\Pi((i,X),r_{\alpha-1}\cdots r_1r_0)=(r,T)$. Then there exists some $d\in[\![0,2^{p\alpha}-1]\!]$ such that 
\[
	2^{p\alpha} i+ \val_{2^p}(r_{\alpha-1}\cdots r_1r_0) 
	= m d +r
	\quad \andrm \quad X_d=T.
\]
We obtain
\[
i 	=\frac{1}{2^{p\alpha}}\Big(md+r-\val_{2^p}(r_{\alpha-1}\cdots r_1r_0)\Big) 
	=d\frac{m}{2^{p\alpha}}+ \Big\lfloor \frac{r}{2^{p\alpha}}\Big\rfloor.
\]
Observe that $X_d=T$ is equivalent to $X=T_d$. This proves that $(i,X)\in C_\alpha'$.

Second, we consider the case where $p\alpha> z$.
Pick some $(i,X)\in C_\alpha'$. There exists $\ell\in[\![0,2^z-1]\!]$ such that $(i,X)=(\llfloor \frac{r}{2^{p\alpha}}\rrfloor+\ell k,T_\ell)$. Then
\[
	2^{p\alpha} \Big(\Big\lfloor \frac{r}{2^{p\alpha}}\Big\rfloor+\ell k\Big) 
	+ \val_{2^p}(r_{\alpha-1}\cdots r_1r_0) 
	= \ell k2^{p\alpha}+r
	= \ell2^{p\alpha-z} m+r.
\]
Since $(T_\ell)_{\ell2^{p\alpha-z}}=(T_\ell)_\ell=T$, we obtain that $\delta_\times^\Pi((i,X),r_{\alpha-1}\cdots r_1r_0)=(r,T)$. Conversely, pick some state $(i,X)\in[\![0,m-1]\!]\times \{T,B\}$ such that $\delta_\times^\Pi((i,X),r_{\alpha-1}\cdots r_1r_0)=(r,T)$. Then there exists some $d\in[\![0,2^{p\alpha}-1]\!]$ such that 
\[
	2^{p\alpha} i+ \val_{2^p}(r_{\alpha-1}\cdots r_1r_0) 
	= m d +r
	\quad \andrm \quad X_d=T.
\]
From the first equality, we have $md = 2^{p\alpha} i- (r-\val_{2^p}(r_{\alpha-1}\cdots r_1r_0))=2^{p\alpha} i- 2^{p\alpha}\llfloor \frac{r}{2^{p\alpha}}\rrfloor$, hence
$kd = 2^{p\alpha-z} (i-\llfloor \frac{r}{2^{p\alpha}}\rrfloor)$.
Since $k$ is odd, $d$ must be a multiple of ${2^{p\alpha-z}}$. We obtain
\[
i =\frac{d}{2^{p\alpha-z}}k+ \Big\lfloor \frac{r}{2^{p\alpha}}\Big\rfloor
\]
and 
$X_{\frac{d}{2^{p\alpha-z}}}=X_d=T$. Since $\frac{d}{2^{p\alpha-z}}\in[\![0,2^z-1]\!]$,
we get that $(i,X)\in C_\alpha'$.
\end{proof}

\begin{proposition}
\label{prop:red-C}
For every $\alpha\in[\![0,N]\!]$, the class $C_\alpha$ is distinguishable from all the other classes.
\end{proposition}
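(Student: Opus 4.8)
The plan is to read off everything from Lemma~\ref{lem:suffixes}. Writing $0^{N-R}\rep_{2^p}(r)=r_{N-1}\cdots r_1r_0$ and setting $w_\beta:=r_{\beta-1}\cdots r_1r_0$ for $\beta\in[\![0,N]\!]$, that lemma says exactly that $w_\beta$ is accepted from a state $(i,X)$ if and only if $(i,X)\in C_\beta'$ (recall that $(r,T)$ is the unique final state and the automaton is deterministic). Thus each $w_\beta$ is a membership test for the set $C_\beta'$. Fix $\alpha\in[\![0,N]\!]$; by Proposition~\ref{prop:nonempty-C} the class $C_\alpha$ is nonempty, and since $C_\alpha\subseteq C_\alpha'$, the word $w_\alpha$ is accepted from every state of $C_\alpha$.

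First I would record three elementary disjointness facts that follow at once from Definitions~\ref{def:C} and~\ref{def:D}. \emph{(i)} If $\beta<\alpha$ then $C_\alpha\cap C_\beta'=\emptyset$, because $C_\alpha=C_\alpha'\setminus\bigcup_{\gamma<\alpha}C_\gamma'$ and $C_\beta'\subseteq\bigcup_{\gamma<\alpha}C_\gamma'$. \emph{(ii)} Symmetrically, if $\beta>\alpha$ then $C_\beta\cap C_\alpha'=\emptyset$. \emph{(iii)} Every class $D_{(j,X)}$ is disjoint from every $C_\gamma'$: indeed any state $s$ lying in some $C_\gamma'$ already lies in the class $C_\delta$ with $\delta=\min\{\beta:s\in C_\beta'\}$, hence in $\bigcup_\alpha C_\alpha$, and this union is removed from $D_{(j,X)}$ by Definition~\ref{def:D}.

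Then I would produce the distinguishing words case by case. To separate $C_\alpha$ from $C_\beta$ with $\beta>\alpha$, or from any nonempty $D_{(j,X)}$, I use $w_\alpha$: it is accepted from the nonempty class $C_\alpha$, while by facts \emph{(ii)} and \emph{(iii)} the states of $C_\beta$ (resp.\ of $D_{(j,X)}$) lie outside $C_\alpha'$, so $w_\alpha$ is not accepted from them. To separate $C_\alpha$ from $C_\beta$ with $\beta<\alpha$, I instead use $w_\beta$, which is accepted from $C_\beta\subseteq C_\beta'$ but, by fact \emph{(i)}, from no state of $C_\alpha$. In every case a word is accepted from all members of one class and from no member of the other, which is exactly what distinguishability requires; since states within a single class accept the same language by Propositions~\ref{prop:min-C} and~\ref{prop:min-D}, the choice of representatives is immaterial.

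The entire computation is packaged inside Lemma~\ref{lem:suffixes}, so there is no real calculational obstacle; the one point demanding care is the asymmetry between the cases $\beta<\alpha$ and $\beta>\alpha$. One must resist the temptation to separate every other class from $C_\alpha$ using the single word $w_\alpha$: this works against $C_\beta$ with $\beta>\alpha$ and against the $D$-classes, but it can fail against $C_\beta$ with $\beta<\alpha$, precisely because $C_\alpha'$ may contain states already assigned to an earlier class. Switching to $w_\beta$ for those cases, and invoking the set-theoretic fact \emph{(i)}, is where the bookkeeping built into Definition~\ref{def:C} genuinely enters.
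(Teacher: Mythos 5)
Your proposal is correct and follows essentially the same route as the paper: both arguments rest entirely on Lemma~\ref{lem:suffixes}, using the suffix of length $\min(\alpha,\beta)$ to separate $C_\alpha$ from $C_\beta$ (the paper phrases this as a WLOG $\alpha<\beta$, you spell out the two directions) and the suffix of length $\alpha$ to separate $C_\alpha$ from the $D$-classes. Your explicit remark that $\bigcup_\gamma C_\gamma=\bigcup_\gamma C_\gamma'$, justifying why $D$-states lie outside every $C_\gamma'$, is a welcome clarification of a step the paper passes over with ``by definition.''
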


\begin{proof}
First, we show that the classes $C_\alpha$ are distinguishable among them. From Proposition~\ref{prop:nonempty-C}, we know that these classes are all nonempty.
Let $\alpha,\beta\in[\![0,N]\!]$ such that $\alpha<\beta$ and let $(i,X)\in C_\alpha$ and $(j,Y)\in C_\beta$. We 
show that $L_{(i,X)}\ne L_{(j,Y)}$. By definition of the classes, the state $(i,X)$ belongs to $C_\alpha'$ and since $\alpha<\beta$, the state $(j,Y)$ does not belong to $C_\alpha'$. We get from Lemma~\ref{lem:suffixes} that the suffix $s$ of length $\alpha$ of the word $0^{N-R}\rep_{2^p}(r)$ is accepted from $(i,X)$ but not from $(j,Y)$. So $s\in L_{(i,X)}\setminus L_{(j,Y)}$.

Second, we show that the classes $C_\alpha$ are distinguishable from all the nonempty classes of the form $D_{(i,X)}$. Let $\alpha\in[\![0,N]\!]$. By definition, any state in a class $D_{(i,X)}$ cannot belong to $C_\alpha'$. Similarly to what precedes, the conclusion follows from Lemma~\ref{lem:suffixes}.
\end{proof}

It remains to show that the nonempty classes $D_{(j,X)}$ are distinguishable from each other.

\begin{proposition}
\label{prop:red-D}
Suppose that $k>1$ and let $(i,X),(j,Y)\in([\![0,k{-}1]\!]\times \{T,B\})\setminus\{(0,T)\}$ be distinct and such that the classes $D_{(i,X)}$ and $D_{(j,Y)}$ are both nonempty. Then $D_{(i,X)}$ and $D_{(j,Y)}$ are distinguishable.
\end{proposition}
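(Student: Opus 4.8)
The plan is to exhibit, for the two given classes, a single word that is accepted from every state of one of them but from no state of the other. Since by Proposition~\ref{prop:min-D} all states of a class $D_{(j,X)}$ accept the same language, it suffices to test acceptance on one representative, or equivalently to determine the common language of the class. The distinguishing word I would use is the variant of the one appearing in Lemma~\ref{lem:ij},
\[
	y_i=w_i(\rep_{2^p}(m))^\ell \rep_{2^p}(r),
\]
where the number $\ell$ of inserted copies of $\rep_{2^p}(m)$ will be fixed at the end according to a single parity constraint.

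First I would record a ``first component'' statement generalizing Lemma~\ref{lem:ij}: reading $y_i$ from a state whose first component is congruent to $i$ modulo $k$ drives the first component to $r$, whereas from a first component not congruent to $i$ it never reaches $r$. This is the same modular computation as in the proof of Lemma~\ref{lem:ij} (valid for any starting first component since it only uses that $k$ is odd), combined with Lemma~\ref{lem:projdisj-0}. It already settles the case $i\ne j$ on the level of first components: every state of $D_{(j,Y)}$ has first component $\equiv j\not\equiv i \pmod k$, so $y_i\notin L_{D_{(j,Y)}}$ for every choice of $\ell$.

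The heart of the argument is to control the Thue--Morse component. By Lemma~\ref{lem:transitionsProd-proj}, reading $y_i$ from a state $(i+\ell' k,Z)$ of $D_{(i,X)}$ reaches $(r,T)$ if and only if the uniquely determined integer
\[
	\ell^{*}=\frac{2^{p|y_i|}(i+\ell' k)+\val_{2^p}(y_i)-r}{m}
\]
satisfies $Z_{\ell^{*}}=T$, i.e.\ $|\rep_2(\ell^{*})|_1$ is even exactly when $Z=T$. I would then compute $\ell^{*}$ explicitly. Using $\val_{2^p}(w_i)=\sigma(i)2^z$, the relation $\sigma(i)\equiv -2^{pK-z}i \pmod k$ and $pK\ge z$ from Lemma~\ref{lem:k>1}, the contributions of $w_i$ and of $i+\ell' k$ combine into a single multiple of $2^{p(\ell M+R)}$ (with $M=|\rep_{2^p}(m)|$), while the $\ell$ copies of $\rep_{2^p}(m)$ contribute $\sum_{s=0}^{\ell-1}2^{p(sM+R)}$. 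This yields
\[
	\ell^{*}=2^{p(\ell M+R)}\big(G+\ell' 2^{pK-z}\big)+\sum_{s=0}^{\ell-1}2^{p(sM+R)},
	\qquad G=\frac{2^{pK-z}i+\sigma(i)}{k},
\]
where the two summands occupy disjoint ranges of binary positions and the low summand carries exactly $\ell$ isolated ones. Hence $|\rep_2(\ell^{*})|_1\equiv |\rep_2(G+\ell' 2^{pK-z})|_1+\ell \pmod 2$, so the parity governing the Thue--Morse component depends additively on $\ell$.

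The key technical point, which I expect to be the main obstacle, is to show that the residual constant does not depend on $\ell'$. For this I would prove the inequality $G<2^{pK-z}$, which follows from $K=|\rep_{2^p}((k-1)2^z)|$ (whence $2^{pK-z}\ge k$) together with $i,\sigma(i)\le k-1$. Since $G$ and $\ell' 2^{pK-z}$ then occupy disjoint binary positions, one gets $|\rep_2(G+\ell' 2^{pK-z})|_1\equiv |\rep_2(G)|_1+|\rep_2(\ell')|_1 \pmod 2$; on the other hand $Z=X_{\ell'}$ contributes a second copy of $|\rep_2(\ell')|_1$, so the two occurrences cancel modulo $2$ and the dependence on $\ell'$ vanishes. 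The upshot is a clean description of the class language: for every $\ell$,
\[
	y_i\in L_{D_{(i,X)}}\iff \ell\equiv |\rep_2(G)|_1+\varepsilon(X)\pmod 2,
\]
where $\varepsilon(T)=0$ and $\varepsilon(B)=1$. The proposition then follows at once. If $i\ne j$, pick $\ell$ of the parity that makes $y_i$ accepted from $D_{(i,X)}$; it is rejected from $D_{(j,Y)}$ by the first--component statement. If $i=j$ (so $X\ne Y$ and $\varepsilon(X)\ne\varepsilon(Y)$), then for every $\ell$ the word $y_i$ is accepted from exactly one of $D_{(i,X)}$ and $D_{(i,Y)}$, so any $\ell$ distinguishes them.
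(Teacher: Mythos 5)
Your proof is correct, but it takes a genuinely different and more explicit route than the paper's. For the case $i=j$ (so $X\ne Y$), the paper does not exhibit any word at all: it invokes Lemma~\ref{lem:projdisj} to get that $L_{(i,T)}$ and $L_{(i,B)}$ are disjoint, and coaccessibility (Proposition~\ref{prop:m-odd}) to get that both are nonempty, hence distinct. For the case $i\ne j$, the paper uses the word $w_i\rep_{2^p}(r)$, observes via Lemma~\ref{lem:TM-product} that its lift is accepted from exactly one of $(i,T)$, $(i,B)$, and if it is the wrong one, inserts a single factor $\rep_{2^p}(1,m)$ (Lemma~\ref{lem:1-m}) to flip the Thue--Morse component --- so it effectively uses your $y_i$ with $\ell\in\{0,1\}$, but chooses $\ell$ indirectly rather than by computing a parity. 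You instead compute the quotient $\ell^{*}$ explicitly, split its binary expansion into blocks, and derive the closed-form criterion $\ell\equiv|\rep_2(G)|_1+\varepsilon(X)\pmod 2$; the cancellation of the two copies of $|\rep_2(\ell')|_1$ and the inequality $G<2^{pK-z}$ (which does follow from $2^{pK}>(k-1)2^z$) are exactly the points that need checking, and your verification of both is sound --- the block decomposition of $\ell^*$ mirrors the computations in Lemmas~\ref{lem:0} and~\ref{lem:C-1}. What your approach buys is a uniform treatment of both cases and an explicit description of which of the words $y_i$ lie in the common language of the class (in particular it re-derives, rather than cites, the independence of the acceptance condition from $\ell'$, which the paper gets from Proposition~\ref{prop:min-D}); what the paper's approach buys is brevity, since it delegates all the digit-counting to soft structural lemmas already proved.
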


\begin{proof}
We already know from the previous section that the states of $D_{(i,X)}$ (resp. $D_{(j,Y)}$) are indistinguishable. Therefore, it suffices to show that $L_{(i,X)}\ne L_{(j,Y)}$.

First, suppose that $i=j$. Then $X\ne Y$ by hypothesis and the states $(i,X)$ and $(j,Y)$ are disjoint by Lemma~\ref{lem:projdisj}. Since $\Pi \left(\A_{m,r,2^p} \times \A_{\T,2^p}  \right)$ is coaccessible by Proposition~\ref{prop:m-odd}, we obtain that the states $(i,X)$ and $(j,Y)$ are distinguishable.

Now suppose that $i\ne j$. By Lemma~\ref{lem:ij}, the word $w_i\rep_{2^p}(r)$ is accepted from $i$ in the automaton $\Pi(\A_{m,r,2^p})$ but is not accepted from $j$. Then, there exist a word $u_1$ of length $|w_i|$ and a word $u_2$ of length $R$ such that the word $(u_1,w_i)(u_2,\rep_{2^p}(r))$ is accepted from $i$ in the automaton $\A_{m,r,2^p}$ but is not accepted from $j$. By Lemma~\ref{lem:TM-product}, this word is accepted either from $(i,T)$ or from $(i,B)$ in the automaton $\A_{m,r,2^p}\times \A_{\T,2^p}$ 
but is not accepted neither from $(j,T)$ nor from $(j,B)$. Now, two cases are possible. 

First, suppose that $(u_1,w_i)(u_2,\rep_{2^p}(r))$ is accepted from $(i,X)$ in $\A_{m,r,2^p}\times \A_{\T,2^p}$. Then, in the projection $\Pi \left(\A_{m,r,2^p} \times \A_{\T,2^p}  \right)$, the word $w_i\rep_{2^p}(r)$ is accepted from $(i,X)$ but not from $(j,Y)$. Thus, the word $w_i\rep_{2^p}(r)$ distinguishes the states $(i,X)$ and $(j,Y)$. 

Second, suppose that $(u_1,w_i)(u_2,\rep_{2^p}(r))$ is accepted from $(i,\overline{X})$ in $\A_{m,r,2^p}\times \A_{\T,2^p}$. Let $(i',X')=\delta_\times((i,\overline{X}),(u_1,w_i))$. Then $\delta_\times((i',X'),(u_2,\rep_{2^p}(r))=(r,T)$. In particular $\delta_{m,r,2^p}^\Pi(i,w_i)=i'$, hence $i'=0$ by Lemma~\ref{lem:wi}. Now, by using Lemma~\ref{lem:TM-product} and Lemma~\ref{lem:1-m} successively, we obtain
\begin{align*}
	\delta_\times\big((i,X),(u_1,w_i)\rep_{2^p}(1,m)(u_2,\rep_{2^p}(r))\big)
	&=	\delta_\times\big((0,\overline{X'}),\rep_{2^p}(1,m)(u_2,\rep_{2^p}(r))\big) \\
	&= 	\delta_\times\big((0,X'),(u_2,\rep_{2^p}(r))\big) \\
	&= 	(r,T).
\end{align*} 
This shows that the word $w_i\rep_{2^p}(m)\rep_{2^p}(r)$ is accepted from $(i,X)$ in $\Pi \big(\A_{m,r,2^p} \times \A_{\T,2^p}  \big)$. From Lemmas~\ref{lem:ij} and~\ref{lem:projdisj-0}, this word cannot be accepted from $(j,Y)$, hence it distinguishes the states $(i,X)$ and $(j,Y)$.
\end{proof}

\subsection{The minimal automaton of $\val_{2^p}^{-1}(m \T+r)$.}
We are ready to construct the minimal automaton of $\val_{2^p}^{-1}(m \T+r)$. Since the states of $\Pi\left(\A_{m,r,2^p} \times \A_{\T,2^p} \right)$ that belong to the same class $C_\alpha$ or $D_{(j,X)}$ are indistinguishable, they can be glued together in order to define a new automaton $\mathcal{M}_{m,r,\T,2^p}$ that still accepts the same language. 

The formal definition of $\mathcal{M}_{m,r,\T,2^p}$ is as follows. The alphabet is $A_{2^p}$. The states are the classes $C_\alpha$ for $\alpha\in[\![0,N]\!]$  and the nonempty classes $D_{(j,X)}$ for $(j,X)\in\big([\![0,k{-}1]\!]\times \{T,B\}\big)\setminus\{(0,T)\}$. The class $C_R$ is the initial state and the only final state is the class $C_0$. Note that $(0,T)\in C_R$ and that $(r,T)\in C_0$. The transitions of $\mathcal{M}_{m,r,\T,2^p}$ are defined as follows: there is a transition labeled by a letter $a$ in $A_{2^p}$ from a class $J_1$ to a class $J_2$ if and only if in the automaton $\Pi \left( \A_{m,r,2^p}\times \A_{\T,2^p} \right)$, there is a transition labeled by $a$ from a state of $J_1$ to a state of $J_2$. 

\begin{example}
In Figure~\ref{fig:classe-6T}, the classes of $\Pi \left( \A_{6,2,4}\times \A_{\T,4} \right)$ are colored in white, blue, grey, yellow, fuchsia, orange and purple.
\begin{figure}[htb]
\centering
\begin{tikzpicture}[scale=0.8]
\tikzstyle{every node}=[shape=circle, fill=none, draw=black,
minimum size=30pt, inner sep=2pt]
\node[fill=violet](0T) at (0,0) {$0T$};
\node[fill=cyan](1T) at (3,0) {$1T$};
\node(2T) at (6,0) {$2T$};
\node[fill=yellow](3T) at (9,0) {$3T$};
\node[fill=magenta](4T) at (12,0) {$4T$};
\node[fill=orange](5T) at (15,0) {$5T$};
\node[fill=yellow](0B) at (0,-4.5) {$0B$};
\node[fill=magenta](1B) at (3,-4.5) {$1B$};
\node[fill=orange](2B) at (6,-4.5) {$2B$};
\node[fill=violet](3B) at (9,-4.5) {$3B$};
\node[fill=cyan](4B) at (12,-4.5) {$4B$};
\node[fill=gray](5B) at (15,-4.5) {$5B$};
\tikzstyle{every node}=[shape=circle, fill=none, draw=black,minimum size=25pt, inner sep=2pt]
\node at (6,0) {};

\tikzstyle{etiquettedebut}=[very near start,rectangle,fill=black!20]
\tikzstyle{etiquettemilieu}=[midway,rectangle,fill=black!20]
\tikzstyle{every path}=[color=black, line width=0.5 pt]
\tikzstyle{every node}=[shape=circle, minimum size=5pt, inner sep=2pt]
\draw [->] (-1.5,0) to node {} (0T); 
\draw [->] (0T) to [loop above] node [] {$0$} (0T);
\draw [green,->] (5T) to [loop above] node [] {} (5T);
\draw [->] (0B) to [loop below] node [] {} (0B);
\draw [green,->] (5B) to [loop below] node [] {} (5B);
\draw [blue,->] (0T) to [] node [above=-0.1] {$1$} (1T);
\draw [red,->] (0T) to [bend left=20] node [above=-0.1] {$2$} (2T);
\draw [green,->] (0T) to [bend left=25] node [above] {$3$} (3T);
\draw [->] (1T) to [bend left=20] node [] {} (4T);
\draw [blue,->] (1T) to [bend left=30] node [] {} (5T);
\draw [red,->] (1T) to [] node [] {} (0B);
\draw [green,->] (1T) to [bend left=15] node [] {} (1B);
\draw [->] (2T) to [bend left=15] node [] {} (2B);
\draw [blue,->] (2T) to [bend left=5] node [] {} (3B);
\draw [red,->] (2T) to [] node [] {} (4B);
\draw [green,->] (2T) to [] node [] {} (5B);
\draw [->] (3T) to [] node [] {} (0B);
\draw [blue,->] (3T) to [] node [] {} (1B);
\draw [red,->] (3T) to [bend left=5] node [] {} (2B);
\draw [green,->] (3T) to [bend left=15] node [] {} (3B);
\draw [->] (4T) to [bend left=15] node [] {} (4B);
\draw [blue,->] (4T) to [] node [] {} (5B);
\draw [red,->] (4T) to [bend right=35] node [] {} (0T);
\draw [green,->] (4T) to [bend right=25] node [] {} (1T);
\draw [->] (5T) to [bend right=25] node [] {} (2T);
\draw [blue,->] (5T) to [bend right=20] node [] {} (3T);
\draw [red,->] (5T) to [] node [] {} (4T);

\draw [blue,->] (0B) to [] node [] {} (1B);
\draw [red,->] (0B) to [bend right=20] node [] {} (2B);
\draw [green,->] (0B) to [bend right=25] node [] {} (3B);
\draw [->] (1B) to [bend right=20] node [] {} (4B);
\draw [blue,->] (1B) to [bend right=30] node [] {} (5B);
\draw [red,->] (1B) to [] node [] {} (0T);
\draw [green,->] (1B) to [bend left=15] node [] {} (1T);
\draw [->] (2B) to [bend left=15] node [] {} (2T);
\draw [blue,->] (2B) to [bend left=5] node [] {} (3T);
\draw [red,->] (2B) to [] node [] {} (4T);
\draw [green,->] (2B) to [] node [] {} (5T);
\draw [->] (3B) to [] node [] {} (0T);
\draw [blue,->] (3B) to [] node [] {} (1T);
\draw [red,->] (3B) to [bend left=5] node [] {} (2T);
\draw [green,->] (3B) to [bend left=15] node [] {} (3T);
\draw [->] (4B) to [bend left=15] node [] {} (4T);
\draw [blue,->] (4B) to [] node [] {} (5T);
\draw [red,->] (4B) to [bend left=35] node [] {} (0B);
\draw [green,->] (4B) to [bend left=25] node [] {} (1B);
\draw [->] (5B) to [bend left=25] node [] {} (2B);
\draw [blue,->] (5B) to [bend left=20] node [] {} (3B);
\draw [red,->] (5B) to [] node [] {} (4B);
\end{tikzpicture}
\caption{The classes of the automaton of $\Pi \left( \A_{6,2,4}\times \A_{\T,4} \right)$.}
\label{fig:classe-6T}
\end{figure}

Figure~\ref{fig:min-aut-6T} depicts the minimal automaton $\mathcal{M}_{6,2,\T,4}$ of $\val_4^{-1}(6\T+2)$, where states corresponding to the same color are glued together to form a single state. 
\begin{figure}[htb]
\centering
\begin{tikzpicture}[scale=0.8]
\tikzstyle{every node}=[shape=circle, fill=none, draw=black,
minimum size=30pt, inner sep=2pt]
\node[fill=violet](0T) at (0,0) {};
\node[fill=cyan](1T) at (4,0) {};
\node(2T) at (8,0) {};
\node[fill=yellow](0B) at (0,-4.5) {};
\node[fill=magenta](1B) at (4,-4.5) {};
\node[fill=orange](2B) at (8,-4.5) {};
\node[fill=gray](5B) at (12,-2.25) {};
\tikzstyle{every node}=[shape=circle, fill=none, draw=black,minimum size=25pt, inner sep=2pt]
\node at (8,0) {};

\tikzstyle{etiquettedebut}=[very near start,rectangle,fill=black!20]
\tikzstyle{etiquettemilieu}=[midway,rectangle,fill=black!20]
\tikzstyle{every path}=[color=black, line width=0.5 pt]
\tikzstyle{every node}=[shape=circle, minimum size=5pt, inner sep=2pt]
\draw [->] (-1.5,0) to node {} (0T); 
\draw [->] (0T) to [loop above] node [] {$0$} (0T);
\draw [->] (0B) to [loop below] node [] {} (0B);
\draw [green,->] (5B) to [loop above] node [] {} (5B);
\draw [blue,->] (0T) to [] node [above] {$1$} (1T);
\draw [red,->] (0T) to [bend left=25] node [above] {$2$} (2T);
\draw [green,->] (0T) to [bend left=15] node [right] {$3$} (0B);
\draw [->] (1T) to [bend left=25] node [] {} (1B);
\draw [blue,->] (1T) to [] node [] {} (2B);
\draw [red,->] (1T) to [] node [] {} (0B);
\draw [green,->] (1T) to [bend left=10] node [] {} (1B);
\draw [->] (2T) to [bend left=15] node [] {} (2B);
\draw [blue,->] (2T) to [bend right=45] node [] {} (0T);
\draw [red,->] (2T) to [] node [] {} (1T);
\draw [green,->] (2T) to [] node [] {} (5B);

\draw [blue,->] (0B) to [] node [] {} (1B);
\draw [red,->] (0B) to [bend right=25] node [] {} (2B);
\draw [green,->] (0B) to [bend left=15] node [] {} (0T);
\draw [->] (1B) to [bend left=25] node [] {} (1T);
\draw [blue,->] (1B) to [] node [] {} (5B);
\draw [red,->] (1B) to [] node [] {} (0T);
\draw [green,->] (1B) to [bend left=10] node [] {} (1T);
\draw [->] (2B) to [bend left=15] node [] {} (2T);
\draw [blue,->] (2B) to [bend left=45] node [] {} (0B);
\draw [red,->] (2B) to [] node [] {} (1B);
\draw [green,->] (2B) to [] node [] {} (2B);
\draw [->] (5B) to [] node [] {} (2B);
\draw [blue,->] (5B) to [bend left=10] node [] {} (0T);
\draw [red,->] (5B) to [] node [] {} (1T);
\end{tikzpicture}
\caption{The automaton $\mathcal{M}_{6,2,\T,4}$.}
\label{fig:min-aut-6T}
\end{figure}
\end{example}

\begin{theorem}
\label{thm:main2}
Let $p$ and $m$ be positive integers. The automaton $\mathcal{M}_{m,r,\T,2^p}$ is the minimal automaton of the language $\val_{2^p}^{-1}(m \T+r)$. 
\end{theorem}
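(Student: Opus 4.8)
The plan is to recognise $\mathcal{M}_{m,r,\T,2^p}$ as the automaton obtained from $\Pi(\A_{m,r,2^p}\times \A_{\T,2^p})$ by merging indistinguishable states, and then to invoke the characterisation of minimal automata recalled in Section~\ref{sec:basics}: a DFA is minimal if and only if it is complete, reduced and accessible. The starting point is Proposition~\ref{prop:m-odd}, which provides a deterministic and complete DFA accepting $\val_{2^p}^{-1}(m\T+r)$ that is moreover accessible and coaccessible. All the genuine difficulty of the argument has already been absorbed into the distinguishability and indistinguishability lemmas of the previous subsections, so the remaining task is mostly one of assembly.

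The crucial step I would carry out is to check that the nonempty classes $C_\alpha$ (for $\alpha\in[\![0,N]\!]$) and $D_{(j,X)}$ are \emph{exactly} the equivalence classes of the indistinguishability relation on the states of $\Pi(\A_{m,r,2^p}\times \A_{\T,2^p})$. As observed after Definition~\ref{def:D}, these nonempty classes already form a partition of the full state set $[\![0,m-1]\!]\times\{T,B\}$. For one inclusion, Propositions~\ref{prop:min-C} and~\ref{prop:min-D} show that any two states lying in a common class accept the same language, hence are indistinguishable. For the converse, Propositions~\ref{prop:red-C} and~\ref{prop:red-D} together show that any two states in distinct classes are distinguishable: Proposition~\ref{prop:red-C} covers every pair in which at least one class has the form $C_\alpha$, and Proposition~\ref{prop:red-D} covers the remaining pairs of distinct nonempty classes $D_{(j,X)}$. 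Combining the two directions, the classes coincide with the indistinguishability classes.

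It then remains to verify the three required properties for the quotient $\mathcal{M}_{m,r,\T,2^p}$. Because $\Pi(\A_{m,r,2^p}\times \A_{\T,2^p})$ is deterministic, indistinguishability is right-invariant, i.e.\ whenever two states accept the same language so do their images under any letter $a\in A_{2^p}$; this guarantees that all states of a class send a fixed letter to states of a single class, so that the transition function of $\mathcal{M}_{m,r,\T,2^p}$ is well defined and deterministic. Completeness and accessibility are inherited directly from $\Pi(\A_{m,r,2^p}\times \A_{\T,2^p})$: each class has an outgoing transition for every letter, and each class is reachable since each of its states is. As the initial class $C_R$ contains $(0,T)$ and the final class $C_0$ contains $(r,T)$, the automaton accepts $\val_{2^p}^{-1}(m\T+r)$, and the identification of the classes with the indistinguishability classes makes $\mathcal{M}_{m,r,\T,2^p}$ reduced; hence it is minimal. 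I expect the only point requiring real care to be the well-definedness of the merged transition function, which rests precisely on the right-invariance of indistinguishability together with the determinism and completeness of $\Pi(\A_{m,r,2^p}\times \A_{\T,2^p})$.
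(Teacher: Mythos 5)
Your proposal is correct and follows essentially the same route as the paper: Propositions~\ref{prop:min-C} and~\ref{prop:min-D} give that merging the classes preserves the accepted language, Propositions~\ref{prop:red-C} and~\ref{prop:red-D} give reducedness, and completeness and accessibility are inherited from $\Pi\left(\A_{m,r,2^p}\times\A_{\T,2^p}\right)$ via Proposition~\ref{prop:m-odd}. Your explicit remark on the well-definedness of the quotient transition function (via right-invariance of indistinguishability) is a point the paper leaves implicit in its definition of $\mathcal{M}_{m,r,\T,2^p}$, but it is a welcome clarification rather than a divergence.
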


\begin{proof}
By construction and by Propositions~\ref{prop:min-D} and~\ref{prop:min-C}, the language accepted by $\mathcal{M}_{m,r,\T,2^p}$ is $\val_{2^p}^{-1}(m \T+r)$. In order to see that $\mathcal{M}_{m,r,\T,2^p}$ is minimal, it suffices to prove that it is complete, reduced and accessible. The fact that $\mathcal{M}_{m,r,\T,2^p}$ is reduced follows from Propositions~\ref{prop:red-C} and \ref{prop:red-D}. We know from Proposition~\ref{prop:m-odd} that the automaton $\Pi\left(\A_{m,r,2^p} \times \A_{\T,2^p} \right)$ is complete and accessible, which in turn implies that $\mathcal{M}_{m,r,\T,2^p}$ is complete and accessible as well. 
\end{proof}


We are now ready to prove Theorem~\ref{thm:main}.

\begin{proof}[Proof of Theorem~\ref{thm:main}]
In view of Theorem~\ref{thm:main2}, it suffices to count the number of states of $\mathcal{M}_{m,r,\T,2^p}$. These states correspond to the nonempty classes  $C_\alpha$ and $D_{(j,X)}$. By Propositions~\ref{prop:nonempty-C} and~\ref{prop:empty-D}, there are exactly $(N+1)+\big(2k-1-(N-\llceil \frac{z}{p}\rrceil)\big)=2k+\llceil \frac{z}{p}\rrceil$ such classes. 
\end{proof}

\begin{example}
The minimal automaton of the language $\val_4^{-1} (6 \T+2)$  has $7$ states; see Figure~\ref{fig:min-aut-6T}. We can indeed compute that $2\cdot 3+\llceil \frac{1}{2} \rrceil = 7$.
\end{example}

\section{A decision procedure}
\label{sec:decision}

As an application of Theorem~\ref{thm:main}, we obtain the following decision procedure.

\begin{corollary}
Given any $2^p$-recognizable set $Y$ (via a finite automaton $\A$ recognizing it), it is decidable whether $Y=m\T+r$ for some $m\in\N$. The decision procedure can be run in time $O(N^2)$ where $N$ is the number of states of the given automaton $\A$.
\end{corollary}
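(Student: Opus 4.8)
The plan is to convert Theorems~\ref{thm:main} and~\ref{thm:main2} into an algorithm that, rather than enumerating the a priori exponentially many candidate remainders, produces a \emph{single} candidate pair $(m,r)$ and then only verifies it. First I would minimize the given DFA $\A$; since the alphabet $A_{2^p}$ has the fixed size $2^p$, Moore's algorithm runs in $O(N^2)$ and returns the minimal automaton $M$ of $Y$ together with its state complexity $s\le N$. If $Y$ is finite (detectable by a cycle test on the trim part of $M$), the answer is negative, as every set $m\T+r$ is infinite. Otherwise, the key observation is that the two smallest evil numbers are $0$ and $3$: indeed $0,3\in\T$ whereas $1,2\notin\T$. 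Hence, if $Y=m\T+r$ with $r\in[\![0,m-1]\!]$, then necessarily $\min Y=r$ and the second smallest element of $Y$ equals $3m+r$. Consequently, the two smallest elements of $Y$ already determine a \emph{unique} candidate
\[
	r=\min Y
	\qquad\andrm\qquad
	m=\frac{(\text{second smallest element of }Y)-\min Y}{3},
\]
and $Y$ can be of the required form \emph{only} for this one pair; if this quotient is not a positive integer I output negative. I would extract the two smallest elements from $M$ by a shortest-first search. A short pumping argument (deleting an infix spanning a repeated state never increases the value) shows that a value-minimal accepted word has length $<N$, and under the hypothesis $Y=m\T+r$ the bound of Theorem~\ref{thm:main} forces $3m+r<(2^p)^{O(N)}$; hence it suffices to search words of length $O(N)$, and should fewer than two accepted values occur in that range, the answer is negative. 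Thus $m$ and $r$, although possibly exponential in $N$, have $O(N)$-digit base-$2^p$ expansions and are computed in $O(N^2)$ time.

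Once the candidate $(m,r)$ is in hand, it remains to test whether $Y=m\T+r$. The naive route, namely building $\Pi(\A_{m,r,2^p}\times\A_{\T,2^p})$ and minimizing, is hopeless, since that automaton has $2m$ states, which is exponential in $N$. This is where Theorem~\ref{thm:main2} is decisive: it provides the minimal automaton $\mathcal{M}_{m,r,\T,2^p}$ of $\val_{2^p}^{-1}(m\T+r)$ directly, with only $2k+\llceil\frac zp\rrceil$ states by Theorem~\ref{thm:main}. Writing $m=k2^z$ with $k$ odd (obtained by counting the trailing zeros of $m$ in $O(N)$ time), I would first check the canonical constraint $0\le r<m$ and the size condition $2k+\llceil\frac zp\rrceil=s$; if either fails, the answer is negative, since the unique candidate cannot match $Y$. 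Otherwise I would build $\mathcal{M}_{m,r,\T,2^p}$ from the explicit description of its classes $C_\alpha$ and $D_{(j,X)}$: each of its $s\le N$ states is represented by one concrete state $(i,X)$ of the product, each outgoing transition is computed from the rule $2^pi+e=md+j,\ Y=X_d$, and the class of the resulting $(j,Y)$ is recovered from the defining formulas. As all integers involved have $O(N)$ digits and there are $O(N)$ state-letter pairs, this construction runs in $O(N^2)$ time. Finally, since $M$ and $\mathcal{M}_{m,r,\T,2^p}$ are both minimal DFAs, $Y=m\T+r$ holds if and only if they are isomorphic, which a joint traversal decides in $O(N)$ time.

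Summing up, every stage runs in $O(N^2)$, giving the announced bound. The decisive conceptual point, and the main obstacle, is that the plain ``lower bound on state complexity implies decidability'' scheme of the introduction would force a comparison of $Y$ with all sets $m\T+r$ of state complexity at most $N$: by Theorem~\ref{thm:main} there are only finitely many admissible $m$, but each carries up to $m$ admissible remainders, so the candidate list is exponential. Both sources of blow-up are removed here: the arithmetic of the Thue-Morse set (its two least evil numbers being $0$ and $3$) pins down $(m,r)$ uniquely, while Theorem~\ref{thm:main2} supplies a polynomial-size minimal automaton for the candidate, so that verification never instantiates the exponential-size product automaton. The only genuinely delicate part of a full write-up is the bookkeeping needed to evaluate, in $O(N)$-digit arithmetic, the class to which a given state $(j,Y)$ belongs, but this is routine given the explicit Definitions~\ref{def:C} and~\ref{def:D}.
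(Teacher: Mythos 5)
Your proposal is correct in substance but takes a genuinely different route from the paper. The paper's proof minimizes $\A$ (in $O(N\log N)$ via Hopcroft), reads off the state complexity $M\le N$, and then enumerates the $O(N)$ pairs $(k,z)$ with $2k+\left\lceil \frac zp\right\rceil=M$ (at most $p$ values of $z$ per $k$, with $p$ fixed); for each resulting $m=k2^z$ it instantiates the explicit minimal automaton of Theorem~\ref{thm:main2} and tests language equality in linear time, for $O(N^2)$ overall. You instead pin down a \emph{single} candidate pair $(m,r)$ from the two smallest elements of $Y$, using that the two smallest evil numbers are $0$ and $3$, and then perform one verification by isomorphism of minimal DFAs. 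Both arguments share the decisive point that Theorem~\ref{thm:main2} lets one verify a candidate without ever constructing the exponential-size automaton $\Pi(\A_{m,r,2^p}\times\A_{\T,2^p})$. What your route buys is an explicit treatment of the remainder: the paper's proof enumerates only the pairs $(k,z)$ and is silent on how $r$ is chosen, even though each admissible $m$ carries up to $m$ remainders; a device like yours ($r=\min Y$, forced since $0=\min\T$) is in fact needed to keep the candidate list polynomial, so your observation fills in a step the published argument leaves implicit. The price is the extra machinery for extracting the two smallest accepted values (the pumping bound on the length of a value-minimal word, and the $O(N)$ length cap on the second value derived from Theorem~\ref{thm:main}); this is sound, but it is the part of your write-up needing the most care, and your claim that the class-membership bookkeeping for building $\mathcal{M}_{m,r,\T,2^p}$ fits in $O(N^2)$ holds only if arithmetic on the $O(N)$-digit integers involved is counted at unit cost --- the same convention the paper itself tacitly adopts.
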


\begin{proof}
Let $Y$ be a $2^p$-recognizable set given thanks to a complete DFA that accepts the language of the $2^p$-expansions of its elements. Let $N$ be its number of states. We can minimize and hence compute the state complexity $M$ of $Y$ (with respect to the base $2^p$) in time $O(N\log(N))$ \cite{Hopcroft:1971}. Let us decompose the possible multiples $m$ as $k2^z$ with $k$ odd. By Theorem~\ref{thm:main}, it is sufficient to test the equality between $Y$ and $m\T+r$ for the finitely many values of pairs $(k,z)$ such that $2k+\lceil\frac{z}{p}\rceil=M$. Since $M\le N$, the number of such tests is in $O(N)$. For each $m$ that has to be tested, we can directly use our description of the minimal automaton of $\val_{2^p}^{-1}(m\T+r)$ (this is Theorem~\ref{thm:main2}). This concludes the proof since the equality of two regular languages is decidable in linear time \cite{Hopcroft&Karp:2015}.
\end{proof}

\section{A direct description of the classes whenever $r=0$}
\label{sec:r=0}

In the conference paper \cite{Charlier&Cisternino&Massuir:2019}, we described the automaton $\mathcal{M}_{m,r,\T,2^p}$ in the particular case where $r=0$, i.e.\ for the exact multiples of $\T$. The construction was similar, but the way we build the classes of states was different. Therefore, we can give another description of the classes $C_\alpha$ and $D_{(j,X)}$ for $r=0$ which is easier than the descriptions from Definitions~\ref{def:C} and~\ref{def:D} in the sense that the classes are built in a direct way, without having to remove some states a posteriori. 

Note that if $r=0$ then $R=0$ and $N=\lceil\frac zp \rceil$.

\begin{corollary}
Suppose that $r=0$.
\begin{itemize}
\item We have $C_0=\{ (0,T)\}$.
\item For each $\alpha\in[\![1,N-1]\!]$, we have
\[
	C_\alpha=\bigcup_{\beta=\alpha p}^{\alpha p+p-1}					
				\{(k2^{z-\beta-1}+\ell k2^{z-\beta},B_\ell) 
					\colon \ell\in[\![0,2^\beta{-}1]\!]\}.
\]
\item We have 
\[
	C_N=\bigcup_{\beta=\left(\lceil\frac zp \rceil-1 \right)p}^{z{-}1}
				\{(k2^{z-\beta-1}+\ell k2^{z-\beta},B_\ell) 
					\colon \ell\in[\![0,2^\beta{-}1]\!]\}.
\]
\item For $(j,X) \in \big([\![0,k{-}1]\!]\times \{T,B\}\big)\setminus \{(0,T)\}$, we have
\[
	D_{(j,X)}=\{(j+\ell k,X_\ell) \colon \ell\in[\![0,2^z{-}1]\!]\}.
\] 
\end{itemize}
\end{corollary}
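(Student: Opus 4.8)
The plan is to reduce everything to the explicit description of the sets $C_\alpha'$ and then to track carefully what the set-differences in Definitions~\ref{def:C} and~\ref{def:D} actually remove. Since $r=0$ we have $R=0$ and $N=\lceil z/p\rceil$, and the word $0^{N-R}\rep_{2^p}(r)$ of Lemma~\ref{lem:suffixes} is simply $0^N$; thus $C_\alpha'$ is exactly the set of states from which reading $0^\alpha$ reaches $(0,T)$. Because $\lfloor r/2^{p\alpha}\rfloor=0$ for every $\alpha$, the two cases of the definition of $C_\alpha'$ become $\{(\ell k2^{z-p\alpha},T_\ell):\ell\in[\![0,2^{p\alpha}-1]\!]\}$ when $p\alpha\le z$ and $\{(\ell k,T_\ell):\ell\in[\![0,2^z-1]\!]\}$ when $p\alpha\ge z$. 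I would record these simplified forms and note that, as $N-1<z/p\le N$, the first case covers $\alpha\in[\![0,N-1]\!]$ while the second covers $\alpha=N$.

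Next I would prove that the $C_\alpha'$ form an increasing chain $C_0'\subseteq C_1'\subseteq\cdots\subseteq C_N'$. The point is that a state of $C_\beta'$ of index $\ell$ coincides with the state of $C_\alpha'$ of index $\ell 2^{p(\alpha-\beta)}$ for $\beta<\alpha$, and appending trailing zeros to $\rep_2(\ell)$ changes neither $|\rep_2(\ell)|_1$ (by the additivity used in Lemma~\ref{lemlem:transitionsTM}) nor, therefore, the label $T_\ell$; the same argument crosses the boundary $p\alpha=z$. Two consequences follow at once: first, $\bigcup_{\alpha=0}^N C_\alpha=\bigcup_{\alpha=0}^N C_\alpha'=C_N'=\{(\ell k,T_\ell):\ell\in[\![0,2^z-1]\!]\}$; second, the removal $\bigcup_{\beta<\alpha}C_\beta'$ of Definition~\ref{def:C} collapses to $C_{\alpha-1}'$, so $C_\alpha=C_\alpha'\setminus C_{\alpha-1}'$ for $\alpha\ge1$ and $C_0=C_0'=\{(0,T)\}$, which is the first bullet.

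For the remaining $C$-bullets I would evaluate $C_\alpha=C_\alpha'\setminus C_{\alpha-1}'$ arithmetically. When $\alpha\le N-1$, a state $(\ell k2^{z-p\alpha},T_\ell)$ lies in $C_{\alpha-1}'$ exactly when $2^p\mid\ell$, so the surviving indices are those with $s:=v_2(\ell)\in[\![0,p-1]\!]$. Writing $\ell=2^s(2\ell'+1)$, the first component becomes $k2^{z-p\alpha+s}(2\ell'+1)$, the constraint $\ell\le2^{p\alpha}-1$ turns into $\ell'\in[\![0,2^{\,p\alpha-1-s}-1]\!]$, and, crucially, since adjoining a single $1$ to $\rep_2(\ell')$ flips membership in $\T$, one gets $T_\ell=T_{2\ell'+1}=B_{\ell'}$. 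Putting $\beta=p\alpha-1-s$ makes $s$ ranging over $[\![0,p-1]\!]$ correspond to a block of $p$ consecutive values of $\beta$ and rewrites the state as $(k2^{z-\beta-1}+\ell'k2^{z-\beta},B_{\ell'})$ with $\ell'\in[\![0,2^\beta-1]\!]$, which is the claimed union. The boundary class $C_N=C_N'\setminus C_{N-1}'$ is treated identically, the only difference being that $C_N'$ belongs to the second regime; here the admissible valuations are limited by the requirement $z-\beta-1\ge0$, which caps $\beta$ at $z-1$ and yields the third bullet.

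Finally, for the classes $D_{(j,X)}$ it suffices to show that $D_{(j,X)}'$ is already disjoint from $\bigcup_\alpha C_\alpha=C_N'$, for then no state is deleted and $D_{(j,X)}=D_{(j,X)}'$. If $j\ne0$ then every first component $j+\ell k$ of $D_{(j,X)}'$ is $\not\equiv0\pmod k$ while every first component of $C_N'$ is a multiple of $k$; if $j=0$, forcing $X=B$, the first components do match those of $C_N'$, but the second component $B_\ell=\overline{T_\ell}$ never equals the $T_\ell$ demanded by $C_N'$. The main obstacle is the bookkeeping of the third paragraph: one must match, simultaneously and consistently, the surviving index set $\{\ell:\ v_2(\ell)<p\}$, its reparametrisation through the $2$-adic valuation into the $\beta$-blocks, the induced range $[\![0,2^\beta-1]\!]$ of $\ell'$, and the label conversion $T_\ell=B_{\ell'}$ coming from $T_{2\ell'+1}=B_{\ell'}$. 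Once this dictionary is set up and the cap at the case boundary $\alpha=N$ is respected, the three $C$-bullets fall out and the $D$-bullet is immediate.
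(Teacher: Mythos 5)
Your route is genuinely different from the paper's: the paper disposes of this corollary in one line, by observing that Theorem~\ref{thm:main2} and Theorem~33 of \cite{Charlier&Cisternino&Massuir:2019} describe the same minimal automaton, so the two systems of classes must coincide by uniqueness of the left quotients. You instead re-derive everything from Definitions~\ref{def:C} and~\ref{def:D}, which is self-contained and more informative. Most of your computation is sound: the simplification of the sets $C_\alpha'$ when $r=0$, the chain $C_0'\subseteq C_1'\subseteq\cdots\subseteq C_N'$ (via $\ell\mapsto\ell\,2^{p(\alpha-\beta)}$ and the invariance of $T_\ell$ under trailing zeroes), the resulting identities $C_\alpha=C_\alpha'\setminus C_{\alpha-1}'$ and $\bigcup_{\alpha}C_\alpha=C_N'$, the label conversion $T_{2\ell'+1}=B_{\ell'}$, and the disjointness of $D_{(j,X)}'$ from $C_N'$ are all correct.

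There is, however, a real problem where you declare the second bullet proved. Your substitution $\beta=p\alpha-1-s$ with $s\in[\![0,p-1]\!]$ makes $\beta$ range over $[\![(\alpha-1)p,\alpha p-1]\!]$, whereas the stated union runs over $\beta\in[\![\alpha p,\alpha p+p-1]\!]$. These index sets are not the same, and you assert ``which is the claimed union'' without reconciling them. Concretely, for $m=24$, $p=2$, $r=0$ (so $k=3$, $z=3$, $N=2$) the paper's own worked example gives $C_1=\{(6,B),(12,B),(18,T)\}$; your range $\beta\in\{0,1\}$ reproduces exactly this, while the printed range $\beta\in\{2,3\}$ yields the elements of $C_2$ for $\beta=2$ and non-integer first components for $\beta=3$ (since then $z-\beta-1<0$). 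So your bookkeeping appears to be the correct one and the printed bound to carry an index shift --- note that the third bullet's lower limit $(\lceil\frac zp\rceil-1)p=(N-1)p$ is precisely the continuation of your pattern at $\alpha=N$, truncated at $z-1$. But as submitted, your proposal claims to establish the formula as stated while actually establishing a shifted variant of it: you must either flag and justify the discrepancy explicitly, or, if you believe the printed range, locate the error in your own change of variables. You cannot silently have both.
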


\begin{proof}
This is a consequence of Theorem~\ref{thm:main2} and \cite[Theorem~33]{Charlier&Cisternino&Massuir:2019}.
\end{proof}

\section{Replacing $\T$ by its complement $\overline{\T}$}
\label{sec:Tc}

If we are interested in the set $\overline{\T}=\N\setminus\T$ instead of $\T$, we can use the same construction that we described and studied for $\T$. We only have to exchange the final/non-final status of the states in the automaton $\A_{\T}$. In this section, we show that we may instead directly obtain the minimal automaton of the language $\val_{2^p}^{-1}(m \overline{\T}+r)$ from that of $\val_{2^p}^{-1}(m \T+r)$. 

\begin{example}
Let us push further our running example by considering now $\overline{\T}$ instead of $\T$. The classes of states are defined similarly by exchanging $T$ and $B$ everywhere. In Figure~\ref{fig:classe-6Tc}, we have depicted the classes of the corresponding projected product automaton, which we denote by $\Pi \big( \A_{6,2,4}\times \A_{\overline{\T},4} \big)$.
\begin{figure}[htb]
\centering
\begin{tikzpicture}[scale=0.8]
\tikzstyle{every node}=[shape=circle, fill=none, draw=black,
minimum size=30pt, inner sep=2pt]
\node[fill=yellow](0T) at (0,0) {$0T$};
\node[fill=magenta](1T) at (3,0) {$1T$};
\node[fill=orange](2T) at (6,0) {$2T$};
\node[fill=violet](3T) at (9,0) {$3T$};
\node[fill=cyan](4T) at (12,0) {$4T$};
\node[fill=gray](5T) at (15,0) {$5T$};
\node[fill=violet](0B) at (0,-4.5) {$0B$};
\node[fill=cyan](1B) at (3,-4.5) {$1B$};
\node(2B) at (6,-4.5) {$2B$};
\node[fill=yellow](3B) at (9,-4.5) {$3B$};
\node[fill=magenta](4B) at (12,-4.5) {$4B$};
\node[fill=orange](5B) at (15,-4.5) {$5B$};
\tikzstyle{every node}=[shape=circle, fill=none, draw=black,minimum size=25pt, inner sep=2pt]
\node at (6,-4.5) {};

\tikzstyle{etiquettedebut}=[very near start,rectangle,fill=black!20]
\tikzstyle{etiquettemilieu}=[midway,rectangle,fill=black!20]
\tikzstyle{every path}=[color=black, line width=0.5 pt]
\tikzstyle{every node}=[shape=circle, minimum size=5pt, inner sep=2pt]
\draw [->] (-1.5,0) to node {} (0T); 
\draw [->] (0T) to [loop above] node [] {$0$} (0T);
\draw [green,->] (5T) to [loop above] node [] {} (5T);
\draw [->] (0B) to [loop below] node [] {} (0B);
\draw [green,->] (5B) to [loop below] node [] {} (5B);
\draw [blue,->] (0T) to [] node [above=-0.1] {$1$} (1T);
\draw [red,->] (0T) to [bend left=20] node [above=-0.1] {$2$} (2T);
\draw [green,->] (0T) to [bend left=25] node [above] {$3$} (3T);
\draw [->] (1T) to [bend left=20] node [] {} (4T);
\draw [blue,->] (1T) to [bend left=30] node [] {} (5T);
\draw [red,->] (1T) to [] node [] {} (0B);
\draw [green,->] (1T) to [bend left=15] node [] {} (1B);
\draw [->] (2T) to [bend left=15] node [] {} (2B);
\draw [blue,->] (2T) to [bend left=5] node [] {} (3B);
\draw [red,->] (2T) to [] node [] {} (4B);
\draw [green,->] (2T) to [] node [] {} (5B);
\draw [->] (3T) to [] node [] {} (0B);
\draw [blue,->] (3T) to [] node [] {} (1B);
\draw [red,->] (3T) to [bend left=5] node [] {} (2B);
\draw [green,->] (3T) to [bend left=15] node [] {} (3B);
\draw [->] (4T) to [bend left=15] node [] {} (4B);
\draw [blue,->] (4T) to [] node [] {} (5B);
\draw [red,->] (4T) to [bend right=35] node [] {} (0T);
\draw [green,->] (4T) to [bend right=25] node [] {} (1T);
\draw [->] (5T) to [bend right=25] node [] {} (2T);
\draw [blue,->] (5T) to [bend right=20] node [] {} (3T);
\draw [red,->] (5T) to [] node [] {} (4T);

\draw [blue,->] (0B) to [] node [] {} (1B);
\draw [red,->] (0B) to [bend right=20] node [] {} (2B);
\draw [green,->] (0B) to [bend right=25] node [] {} (3B);
\draw [->] (1B) to [bend right=20] node [] {} (4B);
\draw [blue,->] (1B) to [bend right=30] node [] {} (5B);
\draw [red,->] (1B) to [] node [] {} (0T);
\draw [green,->] (1B) to [bend left=15] node [] {} (1T);
\draw [->] (2B) to [bend left=15] node [] {} (2T);
\draw [blue,->] (2B) to [bend left=5] node [] {} (3T);
\draw [red,->] (2B) to [] node [] {} (4T);
\draw [green,->] (2B) to [] node [] {} (5T);
\draw [->] (3B) to [] node [] {} (0T);
\draw [blue,->] (3B) to [] node [] {} (1T);
\draw [red,->] (3B) to [bend left=5] node [] {} (2T);
\draw [green,->] (3B) to [bend left=15] node [] {} (3T);
\draw [->] (4B) to [bend left=15] node [] {} (4T);
\draw [blue,->] (4B) to [] node [] {} (5T);
\draw [red,->] (4B) to [bend left=35] node [] {} (0B);
\draw [green,->] (4B) to [bend left=25] node [] {} (1B);
\draw [->] (5B) to [bend left=25] node [] {} (2B);
\draw [blue,->] (5B) to [bend left=20] node [] {} (3B);
\draw [red,->] (5B) to [] node [] {} (4B);
\end{tikzpicture}
\caption{The classes of the automaton of $\Pi\left(\A_{6,2,4}\times \A_{\overline{\T},4}\right)$.}
\label{fig:classe-6Tc}
\end{figure}
Figure~\ref{fig:min-aut-6Tc} depicts the minimal automaton $\mathcal{M}_{6,2,\overline{\T},4}$ of $\val_4^{-1}(6\overline{\T}+2)$, where states corresponding to the same color are glued together to form a single state. 
\begin{figure}[htb]
\centering
\begin{tikzpicture}[scale=0.8]
\tikzstyle{every node}=[shape=circle, fill=none, draw=black,
minimum size=30pt, inner sep=2pt]
\node[fill=violet](0T) at (0,-4.5) {};
\node[fill=cyan](1T) at (4,-4.5) {};
\node(2T) at (8,-4.5) {};
\node[fill=yellow](0B) at (0,0) {};
\node[fill=magenta](1B) at (4,0) {};
\node[fill=orange](2B) at (8,0) {};
\node[fill=gray](5B) at (12,-2.25) {};
\tikzstyle{every node}=[shape=circle, fill=none, draw=black,minimum size=25pt, inner sep=2pt]
\node at (8,-4.5) {};

\tikzstyle{etiquettedebut}=[very near start,rectangle,fill=black!20]
\tikzstyle{etiquettemilieu}=[midway,rectangle,fill=black!20]
\tikzstyle{every path}=[color=black, line width=0.5 pt]
\tikzstyle{every node}=[shape=circle, minimum size=5pt, inner sep=2pt]
\draw [->] (-1.5,0) to node {} (0B); 
\draw [->] (0T) to [loop below] node [] {} (0T);
\draw [->] (0B) to [loop above] node [] {$0$} (0B);
\draw [green,->] (5B) to [loop below] node [] {} (5B);
\draw [blue,->] (0T) to [] node [] {} (1T);
\draw [red,->] (0T) to [bend right=25] node [] {} (2T);
\draw [green,->] (0T) to [bend left=15] node [] {} (0B);
\draw [->] (1T) to [bend left=25] node [] {} (1B);
\draw [blue,->] (1T) to [] node [] {} (2B);
\draw [red,->] (1T) to [] node [] {} (0B);
\draw [green,->] (1T) to [bend left=10] node [] {} (1B);
\draw [->] (2T) to [bend left=15] node [] {} (2B);
\draw [blue,->] (2T) to [bend left=45] node [] {} (0T);
\draw [red,->] (2T) to [] node [] {} (1T);
\draw [green,->] (2T) to [] node [] {} (5B);

\draw [blue,->] (0B) to [] node [above] {$1$} (1B);
\draw [red,->] (0B) to [bend left=25] node [above] {$2$} (2B);
\draw [green,->] (0B) to [bend left=15] node [right] {$3$} (0T);
\draw [->] (1B) to [bend left=25] node [] {} (1T);
\draw [blue,->] (1B) to [] node [] {} (5B);
\draw [red,->] (1B) to [] node [] {} (0T);
\draw [green,->] (1B) to [bend left=10] node [] {} (1T);
\draw [->] (2B) to [bend left=15] node [] {} (2T);
\draw [blue,->] (2B) to [bend right=45] node [] {} (0B);
\draw [red,->] (2B) to [] node [] {} (1B);
\draw [green,->] (2B) to [] node [] {} (2B);
\draw [->] (5B) to [] node [] {} (2B);
\draw [blue,->] (5B) to [bend right=10] node [] {} (0T);
\draw [red,->] (5B) to [] node [] {} (1T);
\end{tikzpicture}
\caption{The automaton $\mathcal{M}_{6,2,\overline{\T},4}$.}
\label{fig:min-aut-6Tc}
\end{figure}
Since the classes of states have been modified but the edges are unchanged, the minimal automaton obtained by gluing the sets of the same classes together is not a symmetric version of the automaton $\mathcal{M}_{m,r,\T,2^p}$ we obtained starting from the set $\T$; compare Figures~\ref{fig:min-aut-6T} and~\ref{fig:min-aut-6Tc}. Nevertheless, observe that the automaton of Figure~\ref{fig:min-aut-6Tc} can be obtained from the one of Figure~\ref{fig:min-aut-6T} by replacing the initial state (in purple) by the yellow state. Also observe that, in the automaton of Figure~\ref{fig:min-aut-6T}, the yellow state is reached from the initial state by reading the word $\rep_4(6)=12$. This fact is always true and is proved in Proposition~\ref{prop:Tc}.
\end{example}

In the next proposition, we show that the minimal automaton of $\val^{-1}(m\overline{\T}+r)$ can be obtained directly from the minimal automaton of $\val_{2^p}^{-1}(m\T+r)$ by only moving the initial state. 

\begin{proposition}
\label{prop:Tc}
The minimal automaton of $\val_{2^p}^{-1}(m\overline{\T}+r)$ is obtained by replacing the initial state of the automaton $\mathcal{M}_{m,r,\T,2^p}$ by the state that is reached by reading $\rep_{2^p}(m)$ from the initial state.
\end{proposition}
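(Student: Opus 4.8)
The plan is to exploit the fact that running the whole construction of Sections~\ref{sec:def-automata}--\ref{sec:minimization} with $\A_{\overline{\T},2^p}$ in place of $\A_{\T,2^p}$ changes \emph{nothing} in the transition structure and alters only which state is final. Indeed, swapping the final and non-final status of $T$ and $B$ produces a projected product automaton $\mathcal{B}$ that has exactly the same states and the same transition function $\delta_\times^\Pi$ as $\Pi(\A_{m,r,2^p}\times\A_{\T,2^p})$, the same initial state $(0,T)$, but whose unique final state is $(r,B)$ instead of $(r,T)$. By construction $\mathcal{B}$ accepts $\val_{2^p}^{-1}(m\overline{\T}+r)$, so it remains to identify its minimal automaton.

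First I would record the symmetry $\phi\colon(i,X)\mapsto(i,\overline{X})$. Its projected form, namely $\delta_\times^\Pi((i,X),v)=(j,Y)\iff\delta_\times^\Pi((i,\overline{X}),v)=(j,\overline{Y})$ for every word $v$, follows from Lemma~\ref{lem:TM-product} (or directly from Lemma~\ref{lem:transitionsProd-proj}, since the defining relation $2^{p|v|}i+\val_{2^p}(v)=m\ell+j$ does not involve the second component and $\overline{X}_\ell=\overline{X_\ell}$). Thus $\phi$ is an automorphism of the underlying transition graph, sending $(0,T)$ to $(0,B)$ and $(r,B)$ to $(r,T)$. Consequently a word $v$ labels a run $(0,T)\to(r,B)$ if and only if it labels a run $(0,B)\to(r,T)$, which shows that $\mathcal{B}$ accepts the same language as the automaton $\mathcal{A}'$ obtained from $\Pi(\A_{m,r,2^p}\times\A_{\T,2^p})$ by keeping the same transitions and the same final state $(r,T)$ but taking $(0,B)$ as initial state. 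Hence $\mathcal{A}'$ also accepts $\val_{2^p}^{-1}(m\overline{\T}+r)$ and differs from $\Pi(\A_{m,r,2^p}\times\A_{\T,2^p})$ only by the position of its initial state.

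It then remains to minimize $\mathcal{A}'$, and here lies the crucial observation: the partition into classes $C_\alpha$ and $D_{(j,X)}$ was characterized as the partition into indistinguishable states \emph{for the final state $(r,T)$}, and indistinguishability depends only on the transition function and the set of final states, not on the initial state. Since $\mathcal{A}'$ has the same transitions and the same final state $(r,T)$ as in Section~\ref{sec:minimization}, Propositions~\ref{prop:min-C}, \ref{prop:min-D}, \ref{prop:red-C} and~\ref{prop:red-D} apply verbatim, so the nonempty classes $C_\alpha$, $D_{(j,X)}$ are again exactly the left quotients of $L(\mathcal{A}')$; moreover $\mathcal{A}'$ is accessible because reading $\rep_{2^p}(m)$ sends $(0,B)$ to $(0,T)$ (the computation $m=m\cdot 1+0$ together with $B_1=\overline{B}=T$ as $1\notin\T$) and everything is accessible from $(0,T)$ by Proposition~\ref{prop:m-odd}. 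Therefore the minimal automaton of $\val_{2^p}^{-1}(m\overline{\T}+r)$ is $\mathcal{M}_{m,r,\T,2^p}$ with final state $C_0$ unchanged and initial state the class containing $(0,B)$; the symmetric computation $\delta_\times^\Pi((0,T),\rep_{2^p}(m))=(0,B)$ (again $m=m\cdot 1+0$ and $T_1=B$, by Lemma~\ref{lem:transitionsProd-proj}, or Lemma~\ref{lem:1-m}) shows this class is precisely the one reached from the initial state $C_R$ of $\mathcal{M}_{m,r,\T,2^p}$ by reading $\rep_{2^p}(m)$. The one delicate point to argue carefully is this independence of the minimization from the initial state: I must make sure that the class partition and the distinguishing words exhibited in Section~\ref{sec:minimization} never covertly rely on the initial state $(0,T)$, and that no class becomes unreachable once the initial state is moved to the class of $(0,B)$; both are settled by the accessibility observation above.
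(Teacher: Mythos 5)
Your proof is correct, and it differs from the paper's at exactly one point: how one shows that the language accepted from $(0,B)$ in $\Pi(\A_{m,r,2^p}\times\A_{\T,2^p})$ equals $\val_{2^p}^{-1}(m\overline{\T}+r)$. The paper argues arithmetically: using determinism and the path labeled $\rep_{2^p}(m)$ from $(0,T)$ to $(0,B)$, it reduces the claim to $\val_{2^p}(w)\in m\overline{\T}+r\iff m2^{p|w|}+\val_{2^p}(w)\in m\T+r$, which it verifies by writing $\val_{2^p}(w)=mq+r$ and observing that $\rep_2(2^{p|w|}+q)=10^{p|w|-|\rep_2(q)|}\rep_2(q)$ flips the parity of the number of $1$'s. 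You instead invoke the automorphism $(i,X)\mapsto(i,\overline{X})$ of the transition graph (the projected form of Lemma~\ref{lem:TM-product}, or directly Lemma~\ref{lem:transitionsProd-proj} together with $\overline{X}_\ell=\overline{X_\ell}$) combined with the observation that running the construction on $\overline{\T}$ yields the identical transition structure with final state $(r,B)$; the symmetry then converts acceptance of $v$ along $(0,T)\to(r,B)$ into acceptance along $(0,B)\to(r,T)$. Both routes are sound; yours trades the explicit digit computation for a structural symmetry, at the cost of asserting (correctly, and as the paper itself does informally at the opening of Section~\ref{sec:Tc}) that the whole construction applies verbatim to $\overline{\T}$. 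The rest — that moving the initial state preserves completeness and reducedness, that accessibility survives because $\rep_{2^p}(m)$ leads from the class of $(0,B)$ back to the class of $(0,T)$, and that the new initial class is the one reached from $C_R$ by reading $\rep_{2^p}(m)$ — matches the paper's argument; your explicit remark that the class partition depends only on the transitions and the final state, never on the initial state, is precisely the tacit justification the paper relies on when it reuses $\mathcal{M}_{m,r,\T,2^p}$ unchanged.
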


\begin{proof}
Consider the automaton $\mathcal{M}_{m,r,\T,2^p}$. By construction, its states are sets of states (called classes) of the automaton $\Pi\left(\A_{m,r,2^p} \times \A_{\T,2^p} \right)$. By Lemma~\ref{lem:transitionsProd-proj}, for each $X\in\{T,B\}$, there is a path labeled by $\rep_{2^p}(m)$ going from $(0,X)$ to $(0,\overline{X})$ in $\Pi\left(\A_{m,r,2^p} \times \A_{\T,2^p} \right)$, and hence the same holds for the corresponding classes of states in $\mathcal{M}_{m,r,\T,2^p}$. 

First, let us show that the obtained automaton is again minimal. By only changing the initial state of any minimal DFA, we keep a DFA that is complete and reduced. Furthermore, the obtained DFA is still accessible since we have seen in the previous paragraph that there is a path from the class of $(0,B)$ to the class of $(0,T)$, which is precisely the initial state in $\mathcal{M}_{m,r,\T,2^p}$.

It remains to show that the language $L$ accepted from the class of $(0,B)$ in the automaton $\mathcal{M}_{m,r,\T,2^p}$ is equal to $\val_{2^p}^{-1}(m\overline{\T}+r)$. By construction, $L$ is equal to the language $L_{(0,B)}$ accepted from the state $(0,B)$ in the automaton $\Pi\left(\A_{m,r,2^p} \times \A_{\T,2^p} \right)$ and we already know that $L_{(0,T)}=\val_{2^p}^{-1}(m\T+r)$. 

Let $w\in A_{2^p}$. We know that $w\in L_{(0,B)}\iff \rep_{2^p}(m)w\in L_{(0,T)}$. Thus, it is sufficient to prove that $\val_{2^p}(w)\in m\overline{\T}+r\iff m2^{p|w|}+\val_{2^p}(w)\in m\T+r$.  In both cases, we must have that $\val_{2^p}(w)=mq+r$ with $q\in\N$. Since $q\le\val_{2^p}(w)<2^{p|w|}$, we have $\rep_2(2^{p|w|}+q)=10^{p|w|-|\rep_2(q)|}\rep_2(q)$. This shows that $q\in\overline{\T}\iff 2^{p|w|}+q\in\T$, hence the conclusion.
\end{proof}

\begin{corollary}
Let $m,p$ be positive integers and $r\in[\![0,m-1]\!]$. Then the state complexity of $m\overline{\T}+r$ with respect to the base $2^p$ is equal to $2k+\left\lceil \frac zp\right\rceil$ if $m=k2^z$ with $k$ odd.
\end{corollary}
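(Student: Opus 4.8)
The plan is to derive this directly from Proposition~\ref{prop:Tc} together with Theorem~\ref{thm:main}. First I would recall that, by Proposition~\ref{prop:Tc}, the minimal automaton of $\val_{2^p}^{-1}(m\overline{\T}+r)$ is obtained from the minimal automaton $\mathcal{M}_{m,r,\T,2^p}$ of $\val_{2^p}^{-1}(m\T+r)$ by only changing which state is declared initial, namely by taking as new initial state the one reached from the original initial state upon reading $\rep_{2^p}(m)$. Crucially, this operation leaves the underlying set of states untouched.

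Next I would observe that the proof of Proposition~\ref{prop:Tc} already establishes that the resulting automaton is again minimal; in particular it verifies accessibility, so that no state is lost in the process. A priori, relocating the initial state could render some states unreachable, which would decrease the state count, so this is exactly the point that needs to be checked. Consequently, the minimal automaton of $\val_{2^p}^{-1}(m\overline{\T}+r)$ has exactly the same number of states as $\mathcal{M}_{m,r,\T,2^p}$.

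Finally, by Theorem~\ref{thm:main}, the state complexity of $m\T+r$ with respect to the base $2^p$ equals $2k+\left\lceil \frac zp\right\rceil$, where $m=k2^z$ with $k$ odd. Combining the two previous observations, the state complexity of $m\overline{\T}+r$ coincides with that of $m\T+r$, which is $2k+\left\lceil \frac zp\right\rceil$, as claimed. The argument is essentially immediate once Proposition~\ref{prop:Tc} is in hand, so there is no genuine obstacle; the only delicate point is confirming that moving the initial state preserves accessibility, and this is precisely what the cited proposition guarantees.
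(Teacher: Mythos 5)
Your proposal is correct and matches the paper's (implicit) argument exactly: the corollary follows immediately from Proposition~\ref{prop:Tc}, which shows the minimal automaton of $\val_{2^p}^{-1}(m\overline{\T}+r)$ is $\mathcal{M}_{m,r,\T,2^p}$ with a relocated initial state (and whose proof already checks that minimality, hence accessibility, is preserved), combined with the state count from Theorem~\ref{thm:main}. Your explicit remark that relocating the initial state could a priori lose states, and that the cited proposition rules this out, is precisely the right point to flag.
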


\section{Conclusion and perspectives}
\label{sec:perspectives}

Our method is constructive and in principle, it may be applied to any $b$-recognizable set $X\subseteq\N$. However, in general, it is not the case that the product automaton $\A_{m,r,2^p} \times \A_{X,2^p}$ recognizing the bidimensional set $\{(n,mn+r)\colon n\in X\}$ is minimal. As an example, consider the $2$-recognizable set $X$ of powers of $2$: $X=\{2^n\colon n\in\N\}$. The product automaton $\A_{3,0,2} \times \A_{X,2}$ of our construction (for $m=3$, $r=0$ and $b=2$) has $6$ states but is clearly not minimal since it is easily checked that the automaton of Figure~\ref{fig:powers-2} is the trim minimal automaton recognizing the set $\{(2^n,3\cdot 2^n)\colon n\in  \N\}$.
\begin{figure}[htb]
\centering
\begin{tikzpicture}[scale=0.55]
\tikzstyle{every node}=[shape=circle, fill=none, draw=black,minimum size=20pt]
\node(1) at (0,0) {};
\node(2) at (5,0) {};
\node(3) at (10,0) {};
\tikzstyle{every node}=[shape=circle, fill=none, draw=black,minimum size=17pt]
\node at (10,0) {};

\tikzstyle{every path}=[color=black, line width=0.5 pt]
\tikzstyle{every node}=[shape=circle, minimum size=5pt, inner sep=2pt]

\draw [->] (-2,0) to node {} (1); 
\draw [->] (1) to [loop above] node [above=-0.3cm] {$(0,0)$} (1);
\draw [->] (3) to [loop above] node [above=-0.3cm] {$(0,0)$} (3);
\draw [->] (1) to [] node [above=-0.2cm] {$(0,1)$} (2);
\draw [->] (2) to [] node [above=-0.2cm] {$(1,1)$} (3);
\end{tikzpicture}
\caption{Minimal automaton recognizing the set $\{(2^n,3\cdot 2^n)\colon n\in\N\}$.}
\label{fig:powers-2}
\end{figure}
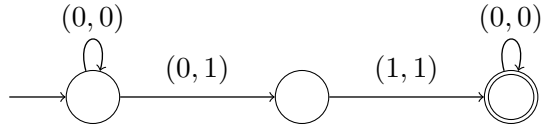
This illustrates that, in general, the minimization procedure is not only needed in the final projection $\Pi \left( \A_{m,r,2^p} \times \A_{X,2^p} \right)$ as is the case in the present work.

Nevertheless, we conjecture that the phenomenon described in this work for the Thue-Morse set also appears for all $b$-recognizable sets of the form 
\[	
	X_{b,c,M,R}=\{n\in\N\colon |\rep_b(n)|_c\equiv R\bmod M\}
\]
where $b$ is an integer base, $c$ is any digit in $A_b$, $M$ is an integer greater than or equal to $2$ and $R$ is any possible remainder in $[\![0,M-1]\!]$. More precisely, we conjecture that whenever the base $b$ is a prime power, i.e.\ $b=q^p$ for some prime $q$, then the state complexity of $mX_{b,c,M,R}+r$ is given by the formula $Mk+\lceil \frac{z}{p}\rceil$ where $k$ is the part of the multiple $m$ that is prime to the base $b$, i.e.\ $m=kq^z$ with $\gcd(k,q)=1$. Note that the set $\overline{\T}$ is of this form: $\overline{\T}=\{n\in\N\colon |\rep_2(n)|_1\equiv 0\bmod 2\}$.
 
Another potential future research direction in the continuation of the present work is to consider automata reading the expansions of numbers with least significant digit first. Both reading directions are relevant to different problems. For example, it is easier to compute addition thanks to an automaton reading expansions from “right to left” than from “left to right”. On the opposite, if we have in mind to generalize our problems to $b$-recognizable sets of real numbers (see for instance \cite{Boigelot&Rassart&Wolper:1998,Charlier:2018,Charlier&Leroy&Rigo:2015}), then the relevant reading direction is the one with most significant digit first. 
Further, there is no intrinsic reason why the state complexity from “left to right” should be the same as (or even close to) that obtained from “right to left” since in general, it is well known that the state complexity of an arbitrary language can greatly differ from that of its reversed language.

\section{Acknowledgment}
Célia Cisternino is supported by the FNRS Research Fellow grant 1.A.564.19F.

\bibliographystyle{abbrv}
\bibliography{TMmultiples}
\label{sec:biblio}

\end{document}